\documentclass[a4paper,english]{lipics-v2018}
\usepackage{microtype}

\usepackage{amsmath}
\usepackage{amssymb,array,amsthm}

\usepackage{paralist}
\usepackage{tabularx}
\usepackage{changepage}
\usepackage{enumerate}
\usepackage{todonotes}
\usepackage{graphicx}
\usepackage{algorithm}
\usepackage[noend]{algorithmic}
\usepackage{tikz}
\usepackage{bm}
\usepackage{hyperref} 
\usetikzlibrary{decorations.markings,circuits,decorations.pathmorphing,decorations.pathreplacing,arrows,automata,}
\usepackage{subcaption}
\usepackage{enumitem}





\newcommand{\N}{\mathbb{N}}
\newcommand{\R}{\mathbb{R}}

\newcommand{\Z}{\mathbb{Z}}

\newcommand{\game}{\ensuremath{\mathcal{G}}}

\newcommand{\aVal}{\mathrm{aVal}}
\newcommand{\cVal}{\mathrm{cVal}}
\newcommand{\acVal}{\mathrm{acVal}}

\newcommand{\outcome}{\ensuremath{\mathbf{Out}}}
\newcommand{\history}{\ensuremath{\mathbf{Hist}}}

\newcommand{\prefix}{\ensuremath{\subseteq_{\mathsf{pref}}}}

\newcommand{\last}[1]{\mathrm{last}(#1)}
\newcommand{\first}[1]{\mathrm{first}(#1)}

\theoremstyle{definition}

\newtheorem{proposition}[theorem]{Proposition}
\newtheorem{observation}[theorem]{Observation}




\bibliographystyle{plainurl}

\title{Beyond admissibility: Dominance between chains of strategies}

\titlerunning{Dominance betweens chains of strategies}

\author{Nicolas Basset}{Universit\'e Grenoble Alpes\\Grenoble, France}{bassetni@univ-grenoble-alpes.fr}{}{}

\author{Isma\"el Jecker}{Department of Computer Science, Universit\'e Libre de Bruxelles\\Brussels, Belgium}{ismael.jecker@ulb.ac.be}{}{}

\author{Arno Pauly}{Department of Computer Science, Swansea University\\Swansea, UK}{arno.m.pauly@gmail.com}{https://orcid.org/0000-0002-0173-3295}{}

\author{Jean-Fran\c{c}ois Raskin}{Department of Computer Science, Universit\'e Libre de Bruxelles\\Brussels, Belgium}{jraskin@ulb.ac.be}{}{ERC inVEST (279499)}

\author{Marie Van den Bogaard}{Department of Computer Science, Universit\'e Libre de Bruxelles\\Brussels, Belgium}{marie.van.den.bogaard@ulb.ac.be}{}{}

\authorrunning{N.~Basset, I.~Jecker, A.~Pauly, J-F.~Raskin, M.~Van den Bogaard}

\Copyright{Nicolas Basset, Isma\"el Jecker, Arno Pauly, Jean-Fran\c{c}ois Raskin and Marie Van den Bogaard}

\subjclass{{\bf Theory of computation $\rightarrow$ Solution concepts in game theory}, Theory of computation $\rightarrow$ Automata extensions}

\keywords{dominated strategies;
admissible strategies;
games played on finite graphs;
reactive synthesis;
reachability games;
safety games;
cofinal;
order theory}

\category{}

\relatedversion{}

\supplement{}

\funding{}

\acknowledgements{We thank Gilles Geeraerts and Guillermo A. P\'erez for several fruitful discussions.}

\EventEditors{}
\EventNoEds{2}
\EventLongTitle{}
\EventShortTitle{}
\EventAcronym{}
\EventYear{}
\EventDate{}
\EventLocation{}
\EventLogo{}
\SeriesVolume{}
\ArticleNo{}
\nolinenumbers 
\hideLIPIcs  

\begin{document}

\maketitle

\begin{abstract}
Admissible strategies, i.e.~those that are not dominated by any other strategy, are a typical rationality notion in game theory. In many classes of games this is justified by results showing that any strategy is admissible or dominated by an admissible strategy. However, in games played on finite graphs with quantitative objectives (as used for reactive synthesis), this is not the case.

We consider increasing chains of strategies instead to recover a satisfactory rationality notion based on dominance in such games. We start with some order-theoretic considerations establishing sufficient criteria for this to work. We then turn our attention to generalised safety/reachability games as a particular application. We propose the notion of maximal uniform chain as the desired dominance-based rationality concept in these games. Decidability of some fundamental questions about uniform chains is established.
 \end{abstract}

\section{Introduction}
The canonical model to formalize the reactive synthesis problem are two-player win/lose perfect information games played on finite (directed) graphs~\cite{PnRo89,AbadiLW89}. In recent years, 
more general objectives and multiplayer games have been studied (see e.g.~\cite{kupferman} or \cite{DBLP:conf/lata/BrenguierCHPRRS16} and additional references therein). When moving beyond two-player win/lose games, the traditional solution concept of a \emph{winning strategy} needs to be updated by another notion. The game-theoretic literature offers a variety of concepts of \emph{rationality} to be considered as candidates.

The notion we focus on here is \emph{admissibility}: roughly speaking, judging strategies according to this criterion allows to deem rational only strategies that are not \emph{worse} than any other strategy (ie, that are not \emph{dominated}). In this sense, admissible strategies represent maximal elements in the whole set of strategies available to a player. One attractive feature of admissibility, or more generally, dominance based rationality notions is that they work on the level of an individual agent. Unlike e.g.~to justify Nash equilibria, no common rationality, shared knowledge or any other assumptions on the other players are needed to explain why a specific agent would avoid dominated strategies.

The study of admissibility in the context of games played on graphs
was initiated by Berwanger in \cite{Berwanger07} and subsequently became an active research topic (e.g.~\cite{BRS14,BrenguierPRS16,BGRS17,pauly-raskin2,DBLP:journals/acta/BrenguierRS17}, see related work below).
In \cite{Berwanger07}, Berwanger established in the context of perfect-information games with boolean objectives that admissibility is the \emph{good} criterion for rationality: every strategy is either admissible or dominated by an admissible strategy.

Unfortunately, this fundamental property does not hold when one considers quantitative objectives.
Indeed, as soon as there are three different possible payoffs, one can find instances of games where a strategy is neither dominated by an admissible strategy, nor admissible itself (see Example~\ref{ex:best_animal}).
This third payoff actually allows for the existence of infinite domination sequences of strategies, where each element of the sequence dominates its predecessor and is dominated by its successor in the chain.
Consequently, no strategy in such a chain is admissible.
However, it can be the case that no admissible strategy dominates the elements of the chain.
In the absence of a \emph{maximal element} above these strategies, one may ask why they should be discarded in the quest of a rational choice.
They may indeed represent a type of behaviour that is rational but not captured by the admissibility criterion.

\subparagraph*{Our contributions.}
To formalize this behaviour, we study increasing chains of strategies (Definition \ref{def:dominanceforchains}). A chain is weakly dominated by some other chain, if every strategy in the first is below some strategy in the second. The question then arises whether every chain is below a maximal chain. Based on purely order-theoretic argument, a sufficient criterion is given in Theorem \ref{theo:maximalelements}. However, Corollary \ref{cor:example_uncountable_chains} shows that our sufficient criterion does not apply to all games of interests. We can avoid the issue by restricting to some countable class of strategies, e.g.~just the regular, computable or hyperarithmetic ones (Corollary \ref{corr:countablymanystrategiesgood}).

We test the abstract notion in the concrete setting of generalised safety/reachability games (Definition \ref{def:gen_safety_reach_game}). Based on the observation that the crucial behaviour captured by chains of strategies, but not by single strategies is \emph{Repeat this action a large but finite number of times}, we introduce the notion of a parameterized automaton (Definition \ref{def:parameterized_automaton}), which essentially has just this ability over the standard finite automata. We then show that any finite memory strategy is below a maximal chain or strategy realized by a parameterized automaton (Theorem \ref{theo:dominatedbyuniformchain}).

Finally, we consider some algorithmic properties of chains and parameterized automata in generalised safety/reachability games. It is decidable in $\textsc{PTime}$ whether a parameterized automaton realizes a chain of strategies (Theorem~\ref{theo:assumption_1}). It is also decidable in $\textsc{PTime}$ whether the chain realized by one parameterized automaton dominates the chain realized by another (Theorem~\ref{theo:uniform_dominance}).

\subparagraph*{Related work.}
As mentioned above, the study of dominance and admissibility for games played on graphs 
was initiated by Berwanger in \cite{Berwanger07}. Faella analyzed several criteria for how a player should play a win/lose game on a finite graph that she cannot win, eventually settling on the notion of admissible strategy \cite{Faella09}.

Admissibility in quantitative perfect-information sequential games
played on graphs
was studied in~\cite{BrenguierPRS16}. Concurrent games were considered in \cite{BGRS17}. In \cite{pauly-raskin2}, games with imperfect information, but boolean objectives were explored. The study of decision problems related to admissibility (as we do in Subsection \ref{subsec:algorithms}) was advanced in \cite{BRS14}. The complexity of decision problems related to dominance in normal form games has received attention, see \cite{pauly-dominance} for an overview. For the role of admissibility for synthesis, we refer to \cite{DBLP:journals/acta/BrenguierRS17}.

Our Subsection \ref{subsec:orderingchains} involves an investigation of cofinal chains in certain quasi-ordered sets. A similar theme (but with a different focus) is present in \cite{shangzhi}.


\section{Background}
\label{sec:background}
\subsection{Games on finite graphs}



\noindent A \emph{turn-based multiplayer game}  $\game$ on a finite graph $G$  is a tuple $\game = \langle P, G , (p_i)_{i \in P} \rangle$ where:
\begin{itemize}
\item $P$ is the non-empty finite set of players of the game,
\item $G= \langle V , E \rangle $ where the finite set $V$ of vertices of $G$ is equipped with a $\vert P \vert$-partition $\uplus_{i \in P} V_i$, and $E \subseteq V \times V$ is the edge relation of $G$,
\item for each player $i$ in $P$, $p_i$ is a \emph{payoff function} that associates to every infinite path in $G$ a  payoff in $\R$.
\end{itemize}

\subparagraph*{Outcomes and histories.}
An \emph{outcome} $\rho$ of $\game$ is an infinite path in $G$, that is, an infinite sequence of vertices $\rho= (\rho_k)_{k\in \N} \in V^{\omega}$, where for all $k  \in \N$, $(\rho_k, \rho_{k+1}) \in E$.
The set of all possible outcomes in $\game$ is denoted $\outcome(\game)$.
A finite prefix of an outcome is called a \emph{history}.
The set of all histories in $\game$ is denoted $\history(\game)$.
For an outcome $\rho= (\rho_k)_{k\in \N}$ and an integer $\ell$, we denote by $\rho_{\leq \ell}$ the history $(\rho_k)_{0 \leq k \leq \ell}$.
The \emph{length} of the history $\rho_{\leq \ell}$, denoted $\vert \rho_{\leq \ell} \vert$ is $\ell +1$.
Given an outcome or a history $\rho$ and a history $h$, we write $h \prefix \rho$ if $h$ is a prefix of $\rho$, and we denote by $h^{-1}.\rho$ the unique outcome (or history) such that $\rho = h. (h^{-1}.\rho)$.
Given an outcome $\rho$ or a history $h$ and $k \in \N$ (respectively $k < |h|$), we denote by $\rho_k$ (respectively $h_k$) the $k+1$-th vertex of $\rho$ (respectively of $h$).
For a history $h$, we define the last vertex of $h$ to be $\last h := h_{\vert h \vert -1}$ and its first vertex $\first h := h_0$.
For a vertex $v \in V$, its set of \emph{successors} is $E_v= \lbrace v' \in V ~\vert~ (v,v') \in E \rbrace$.

\medskip
 \subparagraph*{Strategy profiles and payoffs.}
A \emph{strategy} of a player~$i$ is a function $\sigma_i$ that associates to each history $h$ such that $\last h \in V_i$, a successor state $v  \in E_{\last h}$.
A tuple of strategies $(\sigma_i)_{i \in P'}$ where $P' \subseteq P$, one for each player in $P'$ is called a \emph{profile} of strategies.
Usually, we focus on a particular player~$i$, thus, given a profile $(\sigma_i)_{i \in P}$, we write $\sigma_{-i}$ to designate the collection of strategies of players in $P\setminus \lbrace i \rbrace$, and the complete profile is written  $(\sigma_i, \sigma_{-i})$.
The set of all strategies of player $i$ is denoted $\Sigma_i(\game)$, while $\Sigma(\game)= \prod_{i \in P} \Sigma_i(\game)$ is the set of all profiles of strategies in the game $\game$ and $\Sigma_{-i}(\game)$ is the set of all profiles of all players except Player~$i$.
As we consider games with perfect information and deterministic transitions, any complete profile $\sigma_P= (\sigma_i)_{i \in P}$ yields,  from any history $h$,  a unique outcome, denoted $\outcome_h( \game, \sigma_P)$.
Formally, $\outcome_h( \game, \sigma_P)$ is the outcome $\rho$ such that $\rho_{\leq |h|-1} = h$ and  for all $k \geq |h|-1$, for all $i \in P$, its holds that $\rho_{k+1} = \sigma_i(\rho_{\leq k})$ if $\rho_k \in V_i$.
The set of outcomes (resp. histories) \emph{compatible} with a strategy $\sigma$ of player~$i$ after a history $h$ is $\outcome_h(\game, \sigma_i) = \lbrace \rho \in  ~\vert~ \exists \sigma_{-i} \in \Sigma_{-i}(\game) \text{ such that }  \rho = \outcome_h( \game,  (\sigma_i,  \sigma_{-i})) \rbrace$ (resp. $\history_h(\sigma) = \lbrace h \in \history(\game) ~\vert ~  \exists \rho \in \outcome_h(\game, \sigma_i), n \in \N \text{ such that } h=\rho_{\leq n} \rbrace$).
Each outcome $\rho$ yields a \emph{payoff} $p_i(\rho)$ for each Player~$i$.
We denote with $p_i(h,\sigma, \tau)$ the payoff of a profile of strategies $(\sigma, \tau)$ after a history $h$.

Usually, we consider games instances such that players start to play at a fixed vertex. Thus, we call an \emph{initialized game} a pair $(\game, v_0)$ of a game $\game$ and a vertex $v_0 \in V$.
When the initial vertex $v_0$ is clear from context, we speak directly from $\game$, $\outcome(\game, \sigma_P)$  and  $p_i(\sigma_P)$ instead of $(\game, v_0)$, $\outcome_{v_0}(\game, \sigma_P)$ and  $p_i(v_0,\sigma_P)$.

\noindent \subparagraph*{Dominance relation.}\label{def:strat_dominance}
In order to compare different strategies of a player~$i$ in terms of payoffs, we rely on the notion of dominance between strategies:
A strategy $\sigma \in \Sigma_i$ is \emph{weakly dominated} by a strategy $\sigma' \in \Sigma_i$
\emph{ at a history $h$} compatible with $\sigma$ and $\sigma'$, denoted $ \sigma \preceq_h \sigma'$, if for every $\tau \in \Sigma_{-i}$, we have $p_i(h,\sigma, \tau) \leq p_i(h,\sigma', \tau)$.
 We say that $\sigma$ is weakly dominated by $\sigma'$, denoted $\sigma \preceq \sigma'$ if $\sigma \preceq_{v_0} \sigma'$, where $v_0$ is the initial state of $\game$.
 A strategy $\sigma \in \Sigma_i$ is \emph{dominated} by a strategy $\sigma' \in \Sigma_i$,
\emph{ at a history $h$} compatible with $\sigma$ and $\sigma'$,
 denoted $ \sigma \prec_h \sigma'$, if $ \sigma \preceq_h \sigma'$  and there exists $\tau \in \Sigma_{-i}$, such that $p_i(h,\sigma, \tau) < p_i(h,\sigma', \tau)$.
 We say that $\sigma$ is dominated by $\sigma'$, denoted $\sigma \prec \sigma'$ if $\sigma \prec_{v_0} \sigma'$, where $v_0$ is the initial state of $\game$.
 Strategies that are not dominated by any other strategies are called \emph{admissible}:
 A strategy $\sigma \in \Sigma_i$ is \emph{admissible} (respectively \emph{from $h$}) if $\sigma \not\prec \sigma'$ (resp. $\sigma \not\prec_h \sigma'$)  for every $\sigma' \in \Sigma_i$.

\noindent \subparagraph*{Antagonistic and Cooperative Values}

To study the rationality of different behaviours in a game $\game$, it is useful to be able to know, for a player~$i$, a fixed strategy $\sigma \in \Sigma_i$ and any history $h$, the worst possible payoff Player~$i$ can obtain with $\sigma$ from $h$ (i.e., the payoff he will obtain assuming the other players play \emph{antagonistically}), as well as the best possible payoff Player~$i$ can hope for with $\sigma$ from $h$ (i.e., the payoff he will obtain assuming the other players play \emph{cooperatively}).
The first value is called the \emph{antagonistic value of the strategy} $\sigma$ of Player~$i$ at history $h$ in $\game$ and the second value is called the \emph{cooperative value of the strategy} $\sigma$ of Player~$i$ at history $h$ in $\game$.
They are formally defined as $ \aVal_i(\game, h, \sigma) := \inf_{\tau \in \Sigma_{-i}} p_i(\outcome_h(\sigma,\tau))$ and $\cVal_i(\game, h, \sigma) := \sup_{\tau \in \Sigma_{-i}} p_i(\outcome_h(\sigma,\tau))$.

Prior to any choice of strategy of Player~$i$, we can define, for any history $h$, the \emph{antagonistic value of $h$} for Player~$i$ as $\aVal_i(\game,h) := \sup_{\sigma \in \Sigma_i} \aVal_i(\game,h,\sigma)$ and the \emph{cooperative value of $h$} for Player~$i$ as $\cVal_i(\game,h) := \sup_{\sigma \in \Sigma_i} \cVal_i(\game,h,\sigma)$.
Furthermore, one can ask, from a history $h$, what is the maximal payoff one can obtain \emph{while} ensuring the antagonistic value of $h$.
Thus, we define the \emph{antagonistic-cooperative value of $h$} for Player~$i$ as $acVal_i(\game, h) := \sup \lbrace \cVal_i(\game,h,\sigma) ~|~ \sigma\in\Sigma_i \text{~and~} \aVal_i(\game, h ,\sigma) \geq \aVal_i(\game,h) \rbrace$.
From now on, we will omit to precise $\game$ when it is clear from the context.
Given a history $h$, we say that a strategy $\sigma$ of Player~$i$ is \emph{worst-case optimal} if $\aVal_i(h, \sigma) = \aVal_i(h)$, that it is \emph{cooperative optimal} if $\cVal_i(h, \sigma) = \cVal_i(h)$ and \emph{worst-case cooperative optimal} if $\aVal_i(h, \sigma) = \aVal_i(h)$ and $\cVal_i(h, \sigma) = acVal_i(h)$.


An initialized game $(\game, v_0)$ is \emph{well-formed for Player~$i$} if, for every history $h \in \history_{v_0}(\game)$, there exists a strategy $\sigma \in \Sigma_i$ such that $\aVal_i(h,\sigma) = \aVal(h)$, \emph{and} a strategy $\sigma' \in \Sigma_i$ such that $\cVal_i(h,\sigma') = \cVal(h)$.
In other words, at every history $h$, Player~$i$ has a strategy that ensures the payoff $\aVal_i(h)$, and a strategy that allows the other players to cooperate to yield a payoff of $cVa_i(h)$.


In the following, we will always focus on the point of view of one player $i$, thus we will sometimes refer to him as the \emph{protagonist} and assume it is the first player, while the other players $-i$ can be seen as a coalition and abstracted to a single player, that we will call the \emph{antagonist}.
Furthermore, we will omit the subscript $i$ to refer to the protagonist when we use the notations $\aVal_i, \cVal_i, acVal_i, p_i$, etc..

\begin{figure}[t]
\centering
    \begin{tikzpicture}[->,>=stealth',shorten >=1pt, initial text={}]
      \node [initial above ,state] (q1)                      {$v_0$};
      \node[state,rectangle]          (q2) [right=of q1]         {$v_1$};
      \node[state]          (q3) [right=of q2]         {$\ell_2$};
     \node[state]          (q4) [left=of q1]         {$\ell_1$};
      \path (q1) edge [bend left] node {$$} (q2);
      \path (q2) edge [bend left] node {$$} (q1);
      \path (q2) edge node {$$} (q3);
      \path (q1) edge node {$$} (q4);
       \path (q3) edge [loop above] node {$$} (q3);
            \path (q4) edge [loop above] node {$$} (q4);
    \end{tikzpicture}
    \caption{The \emph{Help-me?}-game
    }
    \label{fig:best_animal}
\end{figure}

\begin{example}\label{ex:best_animal}
Consider the game depicted in Figure~\ref{fig:best_animal}.
The protagonist owns the circle vertices.
The payoffs are defined as follows for the protagonist :
\[p(\rho)=
\begin{cases}
0 \text{~if~} \rho=(v_0v_1)^{\omega}, \\
1 \text{~if~} \rho=(v_0v_1)^n v_0 \ell_1^{\omega} \text{~where~} n\in \N, \\
2 \text{~if~} \rho=(v_0v_1)^n \ell_2^{\omega} \text{~where~} n\in \N. \\
\end{cases}
\]
Let us first look at the possible behaviours of the protagonist in this game, when he makes no assumption on the payoff function of the antagonist.
He can choose to be ``optimistic'' and opt to try (at least for some time, or forever) to go to $v_1$ in the hope that the antagonist will cooperate to bring him to $\ell_2$, or settle from the start and go directly to $\ell_1$, not counting on any help from the antagonist.
We denote by $s_k$ the strategy that prescribes to choose $v_1$ as the successor vertex at the first $k$ visits of $v_0$, and $\ell_1$ at the $k+1$-th visit, while $s_{\omega}$ denotes the strategy that prescribes $v_1$ at every visit of $v_0$.
Note that at history $q_0$, the strategy $s_\omega$ is cooperative optimal but not worst-case optimal (as the protagonist takes the risk to get a payoff of $0$ by staying forever in the loop $q_0q_1$), while the strategy $s_0$ that goes directly to $\ell_1$ is worst-case optimal but not cooperative optimal.
On the other hand, strategies $s_k$ for $k>0$ are worst-case cooperative optimal at $q_0$: they allow the antagonist to help reaching $\ell_2$ but also ensure the payoff $1$ by not letting the protagonist loop indefinitely in $q_0q_1$.
Fix $k \in \N$.
Then, $s_k \prec s_{k+1}$:
Indeed, for all $\tau\in \Sigma_{-i}$, if $p( s_k, \tau) = 2$, then there exists $j \leq k$ such that $\tau((v_0v_1)^j)=\ell_2$.
As $s_k$ and $s_{k+1}$ agree up to $(v_0v_1)^k q_0$, we have that $\outcome(s_{k+1},\tau)=(v_0v_1)^j \ell_2^{\omega} = \outcome(s_k,\tau)$, thus $p( s_{k+1}, \tau) = 2$ as well.
Furthermore, consider a strategy $\tau$ such that $\tau((v_0v_1)^j)=v_0$ for all $j \leq k$ and $\tau((v_0v_1)^{k+1})=\ell_2$.
Then $p( s_k, \tau )= 1$ while $p( s_{k+1}, \tau) = 2$.
Finally, consider the strategy $\tau$ such that $\tau((v_0v_1)^k)=v_0$ for all $k \in \N$.
Then $p( s_k, \tau )= 1 = p( s_{k+1}, \tau)$.
Hence, $s_k \prec s_{k+1}$.
In addition, we observe that $s_{\omega}$ is admissible:
for any strategy $s_k$, the strategy $\tau$ of the antagonist that moves to $\ell_2$ at the $k+1$-th visit of $v_1$ yields a payoff of $1$ against strategy $s_k$ but $2$ against strategy $s_\omega$.
Thus, $s_\omega \not \preceq s_k$ for any $k \in \N$.
\end{example}

\subparagraph*{Quantitative vs Boolean setting.}

Remark that in the boolean variant of the \emph{Help-me?} game considered in Example~\ref{ex:best_animal}, where the payoff associated with the vertex $\ell_1$ is $0$ and the payoff associated with the vertex $\ell_2$ is $1$, every strategy $s_k$ for $k \in \N$ is in fact dominated by $s_{\omega}$, as $s_k$ and $s_\omega$ both yield payoff $0$ against $\tau$ such that $\tau((v_0v_1)^k)=v_0$ for all $k \in \N$.
In fact, Berwanger in~\cite{Berwanger07}, showed that boolean games with $\omega$-regular objectives enjoy the following fundamental property: every strategy is either admissible, or dominated by an admissible strategy.
The existence of an admissible strategy in any such game follows as an immediate corollary.

Let us now illustrate how admissibility fails to capture fully the notion of rational behaviour in the quantitative case.
Firstly, recall that the existence of admissible strategies is not guaranteed in this setting (see for instance the examples given in~\cite{BrenguierPRS16}).
In~\cite{BrenguierPRS16}, the authors identified a class of games for which the existence of admissible strategies (for Player~$i$) is guaranteed: well-formed games (for Player~$i$).
However, even in such games, the desirable fundamental property that holds for boolean games is not assured to hold anymore.
In fact, this is already true for quantitative well-formed games with only three different payoffs and really simple payoff functions.
Indeed, consider again the \emph{Help-me?} game in Figure~\ref{fig:best_animal}.
Remark that it is a well-formed game for the protagonist.
We already showed that any strategy $s_k$ is dominated by the strategy $s_{k+1}$.
Thus, none of them is admissible.
The only admissible strategy is $s_\omega$.
It is easy to see that $s_k \not \preceq s_\omega$ for any $k \in \N$:
Let $\tau \in \Sigma_{-i}$ be such that $\tau((v_0v_1)^k)=v_0$ for all $k \in \N$.
Then $p( s_k, \tau ) = 1 > 0 = p( s_{\omega}, \tau )$.
To sum up, we see that there exists an infinite sequence $(s_k)_{k \in \N}$ of strategies such that none of its elements is dominated by the only admissible strategy $s_{\omega}$.
However, the sequence $(s_k)_{k \in \N}$ is totally ordered by the dominance relation.
Based on these observations, we take the approach to not only consider single strategies, but also such ordered sequences of strategies, that can represent a type of rational behaviour not captured by the admissibility concept.
\subsection{Order theory}
In this paragraph we recall the standard results from order theory that we need (see e.g.~\cite{markowsky}). 

A \emph{linear order} is a total, transitive and antisymmetric relation. A linearly ordered set $(R,\prec)$ is a \emph{well-order}, if every subset of $R$ has a minimal element w.r.t.~$\prec$. The ordinals are the canonical examples of well-orders, in as far as any well-order is order-isomorphic to an ordinal. The ordinals themselves are well-ordered by the relation $<$ where $\alpha \leq \beta$ iff $\alpha$ order-embeds into $\beta$. The first infinite ordinal is denoted by $\omega$, and the first uncountable ordinal by $\omega_1$.

A \emph{quasi order} is a transitive and reflexive relation. Let $(X,\preceq)$ be a quasi-ordered set. A \emph{chain} in $(X,\preceq)$ is a subset of $X$ that is totally ordered by $\preceq$. An \emph{increasing chain} is an ordinal-indexed family $(x_\beta)_{\beta < \alpha}$ of elements of $X$ such that $\beta < \gamma < \alpha \Rightarrow x_\beta \prec x_\gamma$. If we only have that $\beta < \gamma$ implies $x_\beta \preceq x_\gamma$, we speak of a \emph{weakly increasing chain}. We are mostly interested in (weakly) increasing chains in this paper, and will thus occasionally suppress the words \emph{weakly increasing} and only speak about \emph{chains}.

A subset $Y$ of a quasi-ordered set $(X, \preceq)$ is called \emph{cofinal}, if for every $x \in X$ there is a $y \in Y$ with $x \preceq y$. A consequence of the axiom of choice is that every chain contains a cofinal increasing chain, which is one reason for our focus on increasing chains. It is obvious that having multiple maximal elements prevents the existence of a cofinal chain, but even a lattice can fail to admit a cofinal chain. An example we will go back to is $\omega_1 \times \omega$ (cf.~\cite{markowsky}).

If $(X,\preceq)$ admits a cofinal chain, then its \emph{cofinality} (denoted by $\mathrm{cof}(X,\preceq)$) is the least ordinal $\alpha$ indexing a cofinal increasing chain in $(X,\preceq)$. The possible values of the cofinality are $1$ or infinite regular cardinals (it is common to identify a cardinal and the least ordinal of that cardinality). In particular, a countable chain can only have cofinality $1$ or $\omega$. The first uncountable cardinal $\aleph_1$ is regular, and $\mathrm{cof}(\omega_1) = \omega_1$.

We will need the probably most-famous result from order theory:

\begin{lemma}[Zorn's Lemma]
\label{lemma:zorn}
If every chain in $(X,\preceq)$ has an upper bound, then every element of $X$ is below a maximal element.
\end{lemma}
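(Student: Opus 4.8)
The plan is to prove the statement by transfinite recursion, deriving a contradiction from the assumption that some element has no maximal element above it. This is the classical derivation of Zorn's Lemma from the Axiom of Choice, adapted to the quasi-ordered setting by working with the strict part $\prec$ of $\preceq$ (where $x \prec y$ means $x \preceq y$ and $y \not\preceq x$) and calling an element $m$ \emph{maximal} when no $x$ satisfies $m \prec x$.

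Fix an element $x_0 \in X$ and suppose, for contradiction, that no maximal element lies above $x_0$. Using the Axiom of Choice, fix a function $g$ that assigns to every subset $S \subseteq X$ possessing a strict upper bound (an element $y$ with $s \prec y$ for all $s \in S$) one such strict upper bound $g(S)$. I would then define, by transfinite recursion on the ordinals, a family $(x_\beta)_\beta$ as follows. Having defined $x_\beta$ for all $\beta < \alpha$ so that they form an increasing chain with $x_0 \preceq x_\beta$, the set $C_\alpha = \{x_\beta : \beta < \alpha\}$ is a chain (containing $x_0$ when $\alpha > 0$), hence by hypothesis it has an upper bound $u_\alpha$ with $x_0 \preceq u_\alpha$. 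By assumption $u_\alpha$ is not maximal, so there exists $x \succ u_\alpha$; since $x_\beta \preceq u_\alpha \prec x$ for every $\beta < \alpha$, such an $x$ is a strict upper bound of $C_\alpha$, and I set $x_\alpha = g(C_\alpha)$. This guarantees $x_\beta \prec x_\alpha$ for all $\beta < \alpha$, so the extended family is again an increasing chain above $x_0$, and the recursion continues with no stopping condition.

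The map $\beta \mapsto x_\beta$ is then strictly increasing, hence injective, on every initial segment of the ordinals on which it is defined. Since the construction never halts, it would produce an injection of arbitrarily long ordinals into the set $X$; concretely, taking $\alpha$ to be the Hartogs number of $X$ — the least ordinal that does not inject into $X$ — yields an injection of that very ordinal into $X$, contradicting its defining property. Hence the initial assumption fails: some maximal element lies above $x_0$, and since $x_0$ was arbitrary, every element of $X$ is below a maximal element.

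The only genuinely delicate points are the appeal to the Axiom of Choice to extract the choice function $g$ (Zorn's Lemma is in fact equivalent to it) and the bookkeeping of the transfinite recursion — in particular verifying at each stage that $C_\alpha$ really admits a \emph{strict} upper bound, so that $g(C_\alpha)$ is defined and strictly dominates every earlier term. Phrasing the cardinality contradiction through the Hartogs number is the clean way to avoid manipulating a proper-class-indexed sequence directly.
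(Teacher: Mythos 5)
Your proof is correct: it is the classical derivation of Zorn's Lemma from the Axiom of Choice via transfinite recursion and the Hartogs number, and you handle the quasi-order subtleties properly (maximality of $m$ as the absence of $x$ with $m \prec x$, and the fact that $a \preceq b \prec c$ implies $a \prec c$ in a quasi-order, which is exactly what makes $g(C_\alpha)$ strictly dominate every earlier term and keeps the recursion injective). The paper states this lemma without proof, citing it as a standard result from order theory, so there is no internal argument to compare against; your write-up is a correct, self-contained proof of the statement in the form the paper uses it.
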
 

\section{Increasing chains of strategies}
\label{sec:chains}
\subsection{Ordering chains}
\label{subsec:orderingchains}
In this subsection, we study the quasi-order of increasing chains in a given quasiorder $(X,\preceq)$. We denote by $\mathrm{IC}(X,\preceq)$ the set of increasing chains in $(X,\preceq)$. Our intended application will be that $(X,\preceq)$ is the set of strategies for the protagonist in a game ordered by the dominance relation. However, in this subsection we are not exploiting any properties specific to the game-setting. Instead, our approach is purely order-theoretic.

\begin{definition}
\label{def:dominanceforchains}
We introduce an order $\sqsubseteq$ on $\mathrm{IC}(X,\preceq)$ by defining:
\[(x_\beta)_{\beta < \alpha} \sqsubseteq (y_\gamma)_{\gamma < \delta} \quad \text{ if } \quad \forall \beta < \alpha \ \exists \gamma < \delta \ \ x_\beta \preceq y_\gamma \]

Note that $\sqsubseteq$ is a partial order. Let $\doteq$ denote the corresponding equivalence relation. We will occasionally write short $\mathrm{IC}$ for $(\mathrm{IC}(X,\preceq),\sqsubseteq)$.
\end{definition}

Inspired by our application to dominance between strategies in games, we will refer to both $\preceq$ and $\sqsubseteq$ as the \emph{dominance} relation, and might express e.g.~$(x_\beta)_{\beta < \alpha} \sqsubseteq (y_\gamma)_{\gamma < \delta}$ as $(x_\beta)_{\beta < \alpha}$ \emph{is dominated by} $(y_\gamma)_{\gamma < \delta}$, or $(y_\gamma)_{\gamma < \delta}$ \emph{dominates} $(x_\beta)_{\beta < \alpha}$. There is no risk to confuse whether $\preceq$ or $\sqsubseteq$ is meant, since $x \preceq y$ iff $(x)_{\beta < 1} \sqsubseteq (y)_{\gamma < 1}$. Continuing the identification of $x \in X$ and $(x)_{\beta < 1} \in \mathrm{IC}$, we will later also speak about a single strategy dominating a chain or vice versa.

The central notion we are interested in will be that of a maximal chain:
\begin{definition}
$A \in \mathrm{IC}$ is called \emph{maximal}, if $A \sqsubseteq B$ for $B \in \mathrm{IC}$ implies $B \sqsubseteq A$.
\end{definition}

We desire situations where every chain in $\mathrm{IC}$ is either maximal or below a maximal chain. Noting that this goal is precisely the conclusion of Zorn's Lemma (Lemma \ref{lemma:zorn}), we are led to study chains of chains; for if every chain of chains is bounded, Zorn's Lemma applies. Since $(\mathrm{IC}, \sqsubseteq)$ is a quasiorder just as $(X,\preceq)$ is, notions such as cofinality apply to chains of chains just as they apply to chains. We will gather a number of lemmas we need to clarify when chains of chains are bounded.

In a slight abuse of notation, we write $(x_\beta)_{\beta < \alpha} \subseteq (y_\gamma)_{\gamma < \delta}$ iff $\{x_\beta \mid \beta < \alpha\} \subseteq \{y_\gamma \mid \gamma < \delta\}$.
 Clearly, $(x_\beta)_{\beta < \alpha} \subseteq (y_\gamma)_{\gamma < \delta}$ implies $(x_\beta)_{\beta < \alpha} \sqsubseteq (y_\gamma)_{\gamma < \delta}$. We can now express cofinality by noting that $(x_\beta)_{\beta < \alpha}$ is cofinal in $(y_\gamma)_{\gamma < \delta}$ iff $(x_\beta)_{\beta < \alpha} \subseteq (y_\gamma)_{\gamma < \delta}$ and $(y_\gamma)_{\gamma < \delta} \sqsubseteq (x_\beta)_{\beta < \alpha}$.
 We recall that the cofinality of $(y_\gamma)_{\gamma < \delta}$ (denoted by $\mathrm{cof}((y_\gamma)_{\gamma < \delta})$ is the least ordinal $\alpha$ such that there exists some $(x_\beta)_{\beta < \alpha}$  which is cofinal in $(y_\gamma)_{\gamma < \delta}$.

\begin{lemma}
\label{lem:cofdoteq}
If $(x_\beta)_{\beta < \alpha} \doteq (y_\gamma)_{\gamma < \delta}$, then there is some $(y'_\lambda)_{\lambda < \alpha'} \subseteq (y_\gamma)_{\gamma < \delta}$ with $\alpha' \leq \alpha$ and $(y'_\lambda)_{\lambda < \alpha'} \doteq (y_\gamma)_{\gamma < \delta}$.
\begin{proof}
For each $x_\beta$, let $\gamma_\beta = \min \{\gamma \mid x_\beta \preceq y_\gamma\}$. By assumption, the set on the right hand side is non-empty; and as it is a set of ordinals, it has a minimum. The set $\{y_{\gamma_\beta} \mid \beta < \alpha\}$ is well-ordered by $\preceq$ (as a subset of a well-ordered set). Hence it can be turned into an increasing chain $(y'_\lambda)_{\lambda < \alpha'}$. By construction, we have $(y'_\lambda)_{\lambda < \alpha'} \subseteq (y_\gamma)_{\gamma < \delta}$ and $(x_\beta)_{\beta < \alpha} \sqsubseteq (y'_\lambda)_{\lambda < \alpha'}$. By transitivity, it follows that $(y'_\lambda)_{\lambda < \alpha'} \doteq (y_\gamma)_{\gamma < \delta}$.

It remains to argue that $\alpha' \leq \alpha$. For that, consider the map $\lambda \mapsto \min \{\beta \mid y_{\gamma_\beta} = y'_\lambda\}$. This map is well-defined, injective and preserves $\leq$. Thus, it constitutes an order-homeomorphism from $\alpha'$ to $\alpha$, and witnesses that $\alpha' \leq \alpha$.
\end{proof}
\end{lemma}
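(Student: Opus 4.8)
The plan is to witness the dominance $(x_\beta)_{\beta < \alpha} \sqsubseteq (y_\gamma)_{\gamma < \delta}$ by minimal indices and then show that the resulting subfamily of the $y$-chain is equivalent to it and not too long. First I would use that the hypothesis $(x_\beta)_{\beta < \alpha} \doteq (y_\gamma)_{\gamma < \delta}$ gives in particular $(x_\beta)_{\beta < \alpha} \sqsubseteq (y_\gamma)_{\gamma < \delta}$, so that for each $\beta < \alpha$ the set $\{\gamma < \delta \mid x_\beta \preceq y_\gamma\}$ is nonempty; being a set of ordinals it has a least element $\gamma_\beta$. Collecting $\{y_{\gamma_\beta} \mid \beta < \alpha\}$, which is a subset of the well-ordered carrier of the increasing $y$-chain, I can re-enumerate it as a genuine increasing chain $(y'_\lambda)_{\lambda < \alpha'}$; by construction $(y'_\lambda)_{\lambda < \alpha'} \subseteq (y_\gamma)_{\gamma < \delta}$.

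The equivalence $(y'_\lambda)_{\lambda < \alpha'} \doteq (y_\gamma)_{\gamma < \delta}$ then falls out of a short chase through transitivity of $\sqsubseteq$. Each $x_\beta \preceq y_{\gamma_\beta}$ shows $(x_\beta)_{\beta < \alpha} \sqsubseteq (y'_\lambda)_{\lambda < \alpha'}$; the inclusion gives $(y'_\lambda)_{\lambda < \alpha'} \sqsubseteq (y_\gamma)_{\gamma < \delta}$; and the hypothesis gives $(y_\gamma)_{\gamma < \delta} \sqsubseteq (x_\beta)_{\beta < \alpha}$. Composing these yields $(y_\gamma)_{\gamma < \delta} \sqsubseteq (y'_\lambda)_{\lambda < \alpha'}$, which together with the inclusion-induced reverse dominance is exactly $\doteq$.

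The real work is the bound $\alpha' \leq \alpha$. I would produce it by exhibiting a strictly order-preserving injection of $\alpha'$ into $\alpha$, namely $\lambda \mapsto f(\lambda) := \min\{\beta < \alpha \mid y_{\gamma_\beta} = y'_\lambda\}$; since for ordinals an order-embedding into $\alpha$ is equivalent to being $\leq \alpha$, this suffices. Well-definedness and injectivity are immediate from the distinctness of the elements of an increasing chain. The crux, and the step I expect to be the main obstacle, is the monotonicity of $f$: it cannot be read off directly but reduces to the auxiliary claim that $\beta \mapsto \gamma_\beta$ is non-decreasing. I would establish that claim from the strict increasingness of the $x$-chain: for $\beta_1 < \beta_2$ one has $x_{\beta_1} \prec x_{\beta_2} \preceq y_{\gamma_{\beta_2}}$, hence $x_{\beta_1} \preceq y_{\gamma_{\beta_2}}$, so minimality of $\gamma_{\beta_1}$ forces $\gamma_{\beta_1} \leq \gamma_{\beta_2}$. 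Granting this, if $\lambda_1 < \lambda_2$ then $y'_{\lambda_1} \prec y'_{\lambda_2}$, and strict monotonicity of $\gamma \mapsto y_\gamma$ along the increasing $y$-chain forces $\gamma_{f(\lambda_1)} < \gamma_{f(\lambda_2)}$; were $f(\lambda_1) \geq f(\lambda_2)$, monotonicity of $\beta \mapsto \gamma_\beta$ would give the opposite inequality, a contradiction. Thus $f$ is strictly increasing, so $\alpha' \leq \alpha$, completing the argument.
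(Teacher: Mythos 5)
Your proposal is correct and follows essentially the same route as the paper: the same minimal-index map $\gamma_\beta = \min\{\gamma \mid x_\beta \preceq y_\gamma\}$, the same re-enumeration of $\{y_{\gamma_\beta} \mid \beta < \alpha\}$ as an increasing chain, the same transitivity chase for $\doteq$, and the same map $\lambda \mapsto \min\{\beta \mid y_{\gamma_\beta} = y'_\lambda\}$ to witness $\alpha' \leq \alpha$. Your only addition is to spell out the monotonicity of that map (via the non-decreasingness of $\beta \mapsto \gamma_\beta$), a detail the paper asserts without proof, and your argument for it is sound.
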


\begin{corollary}
\label{corr:cofdoteq}
$\mathrm{cof}((y_\gamma)_{\gamma < \delta})$ is equal to the least ordinal $\alpha$ such that there exists $(x_\beta)_{\beta < \alpha}$ with $(x_\beta)_{\beta < \alpha} \doteq (y_\gamma)_{\gamma < \delta}$.
\begin{proof}
Since any increasing chain that is cofinal in $(y_\gamma)_{\gamma < \delta}$ is equivalent to $(y_\gamma)_{\gamma < \delta}$, it follows that $\mathrm{cof}((y_\gamma)_{\gamma < \delta})$ is an upper bound.

Conversely, if $(x_\beta)_{\beta < \alpha} \doteq (y_\gamma)_{\gamma < \delta}$, by Lemma \ref{lem:cofdoteq} there is some $(y'_\lambda)_{\lambda < \alpha'} \subseteq (y_\gamma)_{\gamma < \delta}$ with $\alpha' \leq \alpha$ and $(y'_\lambda)_{\lambda < \alpha'} \doteq (y_\gamma)_{\gamma < \delta}$. By the definition of cofinality, we have that $\alpha' \geq \mathrm{cof}((y_\gamma)_{\gamma < \delta})$, so in particular, also $\alpha \geq \mathrm{cof}((y_\gamma)_{\gamma < \delta})$.
\end{proof}
\end{corollary}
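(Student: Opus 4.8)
The plan is to prove two inequalities between $\mathrm{cof}((y_\gamma)_{\gamma<\delta})$ and the quantity $d$ defined as the least ordinal $\alpha$ admitting some $(x_\beta)_{\beta<\alpha} \doteq (y_\gamma)_{\gamma<\delta}$. The whole point is that the definition of \emph{cofinal in} bundles together two requirements, a containment $(x_\beta)_{\beta<\alpha} \subseteq (y_\gamma)_{\gamma<\delta}$ and a reverse dominance $(y_\gamma)_{\gamma<\delta} \sqsubseteq (x_\beta)_{\beta<\alpha}$, whereas $\doteq$ asks only for dominance in both directions. Since $\subseteq$ is strictly stronger than $\sqsubseteq$, the cofinal chains form a subclass of the equivalent ones, and the content of the corollary is that this restriction costs nothing in terms of minimal length.

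For the inequality $d \leq \mathrm{cof}((y_\gamma)_{\gamma<\delta})$, I would take any $(x_\beta)_{\beta<\alpha}$ cofinal in $(y_\gamma)_{\gamma<\delta}$. The containment clause gives $(x_\beta)_{\beta<\alpha} \sqsubseteq (y_\gamma)_{\gamma<\delta}$ directly (as observed right after the abuse-of-notation remark, $\subseteq$ implies $\sqsubseteq$), while the second clause of cofinality supplies the reverse dominance; together these yield $(x_\beta)_{\beta<\alpha} \doteq (y_\gamma)_{\gamma<\delta}$. Hence every witness for the cofinality is also a witness for $d$, so $d \leq \mathrm{cof}((y_\gamma)_{\gamma<\delta})$.

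The reverse inequality is where Lemma \ref{lem:cofdoteq} does the work. Starting from a minimal equivalent chain $(x_\beta)_{\beta<\alpha}$ with $\alpha = d$, the lemma produces $(y'_\lambda)_{\lambda<\alpha'} \subseteq (y_\gamma)_{\gamma<\delta}$ with $\alpha' \leq \alpha$ and $(y'_\lambda)_{\lambda<\alpha'} \doteq (y_\gamma)_{\gamma<\delta}$. The containment $\subseteq$ is exactly the first clause of cofinality, and the equivalence supplies $(y_\gamma)_{\gamma<\delta} \sqsubseteq (y'_\lambda)_{\lambda<\alpha'}$, its second clause; thus $(y'_\lambda)_{\lambda<\alpha'}$ is genuinely cofinal in $(y_\gamma)_{\gamma<\delta}$, and minimality of the cofinality gives $\mathrm{cof}((y_\gamma)_{\gamma<\delta}) \leq \alpha' \leq \alpha = d$. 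Combining the two inequalities closes the argument. There is no real obstacle once Lemma \ref{lem:cofdoteq} is granted; the only thing to watch is keeping the containment and the dominance requirements of cofinality separate, and invoking the lemma precisely to convert the abstract equivalence into the concrete subset that cofinality demands.
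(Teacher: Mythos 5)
Your proof is correct and follows essentially the same route as the paper's: the easy direction observes that any cofinal chain is equivalent (so the least length of an equivalent chain is at most $\mathrm{cof}$), and the other direction invokes Lemma \ref{lem:cofdoteq} to convert a minimal equivalent chain into a subset chain $(y'_\lambda)_{\lambda < \alpha'}$ of no greater length that is genuinely cofinal, whence $\mathrm{cof}((y_\gamma)_{\gamma < \delta}) \leq \alpha'$. Your write-up merely makes explicit the two clauses of cofinality ($\subseteq$ plus reverse $\sqsubseteq$) that the paper's proof leaves implicit; there is no substantive difference.
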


\begin{corollary}\label{corr:cofinality_countable}
For every chain $(y_\gamma)_{\gamma < \delta}$ there exists an equivalent chain $(x_\beta)_{\beta < \alpha}$ such that $\alpha = 1$ or $\alpha$ is an infinite regular cardinal.
In particular, if $\delta$ is countable, then $(y_\gamma)_{\gamma < \delta}$ is equivalent to a singleton or some chain $(x_n)_{n < \omega}$.
\end{corollary}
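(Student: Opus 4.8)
The plan is to read the result off directly from Corollary~\ref{corr:cofdoteq} together with the standard facts about cofinalities recalled in the order-theory background. First I would set $\alpha := \mathrm{cof}((y_\gamma)_{\gamma < \delta})$. By Corollary~\ref{corr:cofdoteq}, this $\alpha$ is precisely the least ordinal for which there exists a chain $(x_\beta)_{\beta < \alpha}$ with $(x_\beta)_{\beta < \alpha} \doteq (y_\gamma)_{\gamma < \delta}$; in particular such an equivalent chain exists and is indexed by $\alpha$. This already supplies the equivalent chain demanded by the statement, so the only remaining task is to pin down the possible values of $\alpha$.

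Next I would argue that $\alpha$ is either $1$ or an infinite regular cardinal. Since $(y_\gamma)_{\gamma < \delta}$ is an increasing chain, it is well-ordered and order-isomorphic to the ordinal $\delta$, and a subset is cofinal in it (in the sense recalled just before Lemma~\ref{lem:cofdoteq}) exactly when it is cofinal in $\delta$ as an ordinal. Hence $\mathrm{cof}((y_\gamma)_{\gamma < \delta})$ coincides with the ordinal cofinality of $\delta$. By the standard order-theoretic fact recalled in Section~\ref{sec:background}, the cofinality of any ordinal is $1$ (when $\delta$ has a greatest element, i.e.\ is a successor) or an infinite regular cardinal (when $\delta$ is a limit). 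This establishes the first assertion.

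For the ``in particular'' clause I would specialise to countable $\delta$. A countable chain can only have cofinality $1$ or $\omega$, as recalled in the background, since the only infinite regular cardinal that is countable is $\omega$. Thus $\alpha \in \{1, \omega\}$, so $(y_\gamma)_{\gamma < \delta}$ is equivalent either to a singleton (when $\alpha = 1$) or to a chain $(x_n)_{n < \omega}$ (when $\alpha = \omega$), as claimed.

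I do not expect any genuine obstacle here: the substance of the argument lies in Lemma~\ref{lem:cofdoteq} and Corollary~\ref{corr:cofdoteq}, which let us replace an arbitrary equivalent chain by one of minimal length, and this corollary is essentially just the observation that the minimal length is a cofinality and therefore takes one of the restricted values guaranteed by classical order theory. The only point requiring a line of care is the identification of the cofinality of the chain with the cofinality of the indexing ordinal $\delta$, which is immediate once one notes that increasing chains are well-ordered.
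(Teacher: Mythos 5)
Your proof is correct and takes essentially the same route the paper intends: the corollary is left without explicit proof because it follows immediately from Corollary~\ref{corr:cofdoteq} together with the background fact that cofinalities of quasiorders are $1$ or infinite regular cardinals. Your additional identification $\mathrm{cof}((y_\gamma)_{\gamma<\delta}) = \mathrm{cof}(\delta)$ is valid for the strictly increasing chains considered in this section (though strictly unnecessary, since the background fact applies to the chain itself as a quasiordered set; note it would fail for merely weakly increasing chains, e.g.\ a constant chain of length $\omega$ has cofinality $1$), so no gap arises.
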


\begin{lemma}
\label{lemma:singletons}
If $(x_\beta)_{\beta < \alpha} \sqsubseteq (y_\gamma)_{\gamma < \delta}$ and $\alpha < \mathrm{cof}((y_\gamma)_{\gamma < \delta})$, then there exists $\gamma_0 < \delta$ such that
\[(x_\beta)_{\beta < \alpha} \sqsubseteq (y_{\gamma_i})_{i<1}\]
\end{lemma}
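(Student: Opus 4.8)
The plan is to produce a single index $\gamma_0 < \delta$ whose element $y_{\gamma_0}$ dominates every $x_\beta$ simultaneously. Mirroring the proof of Lemma~\ref{lem:cofdoteq}, for each $\beta < \alpha$ I would set $\gamma_\beta = \min\{\gamma < \delta \mid x_\beta \preceq y_\gamma\}$; this minimum exists because the hypothesis $(x_\beta)_{\beta<\alpha} \sqsubseteq (y_\gamma)_{\gamma<\delta}$ guarantees the set on the right is a non-empty set of ordinals. The crucial monotonicity observation is that, since $(y_\gamma)_{\gamma<\delta}$ is increasing, $\gamma_\beta \le \gamma_0$ implies $y_{\gamma_\beta} \preceq y_{\gamma_0}$, and hence $x_\beta \preceq y_{\gamma_0}$. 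Consequently it suffices to find a single $\gamma_0 < \delta$ with $\gamma_\beta \le \gamma_0$ for all $\beta$; equivalently, to show that the index set $S = \{\gamma_\beta \mid \beta < \alpha\}$ is bounded in $\delta$. If so, taking $\gamma_0 = \sup S < \delta$ gives $x_\beta \preceq y_{\gamma_\beta} \preceq y_{\gamma_0}$ for every $\beta$, which is exactly the desired conclusion $(x_\beta)_{\beta<\alpha} \sqsubseteq (y_{\gamma_0})_{i<1}$.

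So the entire content reduces to proving that $S$ is bounded, which I would argue by contradiction. Suppose $S$ is unbounded (cofinal) in $\delta$. Then $\{y_{\gamma_\beta} \mid \beta<\alpha\}$ is a cofinal subset of the chain $(y_\gamma)_{\gamma<\delta}$: given any $\gamma < \delta$, unboundedness of $S$ provides some $\gamma_\beta \ge \gamma$, and monotonicity yields $y_\gamma \preceq y_{\gamma_\beta}$. Being a subset of the $\prec$-well-ordered chain $(y_\gamma)_{\gamma<\delta}$, this set can be rearranged into an increasing chain $(z_\lambda)_{\lambda<\alpha'}$, where $\alpha'$ is the order type of $S$, and this rearrangement is still cofinal.

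By the definition of cofinality we then have $\mathrm{cof}((y_\gamma)_{\gamma<\delta}) \le \alpha'$. But $\beta \mapsto \gamma_\beta$ is a surjection of $\alpha$ onto $S$, so $|\alpha'| = |S| \le |\alpha|$; and since $\mathrm{cof}((y_\gamma)_{\gamma<\delta})$ is $1$ or an infinite regular cardinal $\kappa$ with $\alpha < \kappa$, we get $|\alpha| < \kappa$, whence $\alpha' < \kappa = \mathrm{cof}((y_\gamma)_{\gamma<\delta})$, contradicting $\mathrm{cof}((y_\gamma)_{\gamma<\delta}) \le \alpha'$. This forces $S$ to be bounded and completes the argument. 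The degenerate case $\mathrm{cof} = 1$ forces $\alpha = 0$, and then the empty chain is vacuously dominated by any singleton $y_{\gamma_0}$ (which exists as $\delta \ge 1$).

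The step I expect to be the main obstacle is the cardinality bookkeeping at the end: one must be careful that $\alpha < \kappa$ for a regular cardinal $\kappa$ genuinely yields $\alpha' < \kappa$ for the order type $\alpha'$ of the possibly non-monotone image $S$. This relies on $\kappa$ being an initial ordinal, so that $|\alpha'| = |S| \le |\alpha| < \kappa$ upgrades from a cardinality inequality to the ordinal inequality $\alpha' < \kappa$. Everything else — the well-definedness of $\gamma_\beta$, the monotonicity reduction to a single bound, and the rearrangement of a subset of a well-ordered chain into an increasing chain — is routine and reuses techniques already present in Lemma~\ref{lem:cofdoteq}.
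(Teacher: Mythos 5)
Your proposal is correct and follows essentially the same route as the paper's proof: choose for each $x_\beta$ a witness index $\gamma_\beta$, observe that the resulting subfamily of $(y_\gamma)_{\gamma<\delta}$ has order type of cardinality at most $|\alpha|$ and hence cannot be cofinal by the hypothesis $\alpha < \mathrm{cof}((y_\gamma)_{\gamma<\delta})$, and then use totality/monotonicity of the chain to extract a single bounding element $y_{\gamma_0}$. The only difference is cosmetic --- you phrase non-cofinality at the level of the index set $S$ and take a supremum, whereas the paper argues directly on the elements $(y'_\beta)_{\beta<\alpha'}$ --- and you explicitly carry out the cardinality bookkeeping ($|\alpha'| \leq |\alpha| < \kappa$ with $\kappa$ regular, hence initial, giving $\alpha' < \kappa$) that the paper leaves implicit, which is a welcome clarification rather than a deviation.
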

\begin{proof}
As in the proof of Lemma \ref{lem:cofdoteq}: For each $x_\beta$, pick some $\gamma_\beta$ with $x_\beta \preceq y_{\gamma_\beta}$. The set $\{y_{\gamma_\beta} \mid \beta < \alpha\}$ is well-ordered by $\preceq$ (as a subset of a well-ordered set), and has cardinality at most $|\alpha|$. Hence there is some $(y'_\beta)_{\beta < \alpha'}$ with $|\alpha'| \leq |\alpha|$, $(y'_\beta)_{\beta < \alpha'} \subseteq (y_\gamma)_{\gamma < \delta}$ and $(x_\beta)_{\beta < \alpha} \sqsubseteq (y'_\beta)_{\beta < \alpha'}$.

By assumption, $(y'_\beta)_{\beta < \alpha'}$ cannot be cofinal in $(y_\gamma)_{\gamma < \delta}$. Thus, there has to be some $\gamma_0 < \delta$ such that for no $\beta < \alpha'$ we have that $y_{\gamma_0} \preceq y'_\beta$. But as $(y_\gamma)_{\gamma < \delta}$ is totally ordered, this implies that for all $\beta < \alpha$ we have $y'_\beta \preceq y_{\gamma_0}$, i.e.~$(y'_\beta)_{\beta < \alpha'} \sqsubseteq (y_{\gamma_i})_{i<1}$. The claim follows by transitivity of $\sqsubseteq$.
\end{proof}

We briefly illustrate the concepts introduced so far in the game setting. Notice that for a game $\game$ and a Player~$i$, the pair  $(\Sigma_i(\game), \preceq)$ is indeed a quasi-ordered set.
We can thus consider the set $IC(\Sigma_i(\game), \preceq)$ of increasing chains of strategies in $\game$.

\begin{example}\label{ex:best_animal_2}
Recall the \emph{Help-me?} game of Figure~\ref{fig:best_animal} and consider the set $(\Sigma_i, \preceq)$ of strategies of the protagonist ordered by the weak dominance relation.
Any single strategy is an increasing chain, indexed by the ordinal $1$.
We already noted that the strategy $s_\omega$ is admissible, thus the chain consisting of $s_\omega$ is maximal with respect to $\sqsubseteq$.
Furthermore, the sequence of strategies $(s_k)_{k < \omega}$ is an increasing chain.
Indeed, we know that for any $k < \omega$, we have $s_k \prec s_{k+1}$.
It is a maximal one: in fact, since the set of strategies of the protagonist solely consists of the strategies of this chain and $s_\omega$, and as $s_k \not \preceq s_\omega$ for any $k < \omega$, we get that any chain $(\sigma_\beta)_{ \beta < \alpha}$ such that $(s_k)_{k < \omega}\sqsubseteq (\sigma_\beta)_{ \beta < \alpha}$ satisfies $(\sigma_\beta)_{ \beta < \alpha} \subseteq (s_k)_{k < \omega}$.
Thus, $(\sigma_\beta)_{ \beta < \alpha} \sqsubseteq (s_k)_{k < \omega}$.
Let $(\sigma_\beta)_{ \beta < \alpha}$ be an increasing chain indexed by the ordinal $\alpha$.
First, remark that $\alpha \leq \omega$.
If $\alpha < \omega$, then the cofinality of $(\sigma_\beta)_{ \beta < \alpha}$ is $1$ as $(\sigma_\beta)_{ \beta < \alpha}$ is equivalent to the strategy $\sigma_{\alpha -1}$.
If $\alpha = \omega$, then the cofinality of $(\sigma_\beta)_{ \beta < \alpha}$ is $\omega$:
As for every finite chain $(\sigma'_{\beta'})_{ \beta' < \alpha'}$ with $1< \alpha' < \omega$, there exists $n < \omega$ such that $(\sigma'_{\beta'})_{ \beta' < \alpha'} \sqsubset \sigma_n$, and thus $(\sigma_\beta)_{ \beta < \alpha}$ is not (weakly) dominated by $(\sigma'_{\beta'})_{ \beta' < \alpha'}$.
Moreover, we have that $(\sigma_\beta)_{ \beta < \alpha} \doteq  (s_k)_{k < \omega}$ and is thus maximal.
Indeed, since $(\sigma_\beta)_{ \beta < \alpha}$ is a chain that is not a singleton, we already know that $(\sigma_\beta)_{ \beta < \alpha} \subseteq (s_k)_{k < \omega}$, that is $(\sigma_\beta)_{ \beta < \alpha} \sqsubseteq (s_k)_{k < \omega}$.
Let now $k< \omega$.
As $(\sigma_\beta)_{ \beta < \alpha}$ is an increasing chain and $\alpha = \omega$, we have that there exists $n < \omega$ and $k' \geq k$ such that $\sigma_n = s_{k'}$.
Thus, $s_k \preceq \sigma_n$ since $(s_k)_{k < \omega}$ is an increasing chain.
Hence, we also have $(s_k)_{k < \omega} \sqsubseteq (\sigma_\beta)_{ \beta < \alpha}$.

\end{example}

Now we are ready to prove the main technical result of this section \ref{subsec:orderingchains}, which identifies the potential obstructions for each chain in $\mathrm{IC}$ to have an upper bound:

\begin{lemma}
\label{lemma:theobstruction}
The following are equivalent:
\begin{enumerate}
\item If $((x^\gamma_{\beta})_{\beta < \alpha_\gamma})_{\gamma < \delta}$ is an increasing chain in $\mathrm{IC}$, then it has an upper bound in $\mathrm{IC}$.
\item If $((x^\gamma_{\beta})_{\beta < \alpha})_{\gamma < \delta}$ is an increasing chain in $\mathrm{IC}$ with $\alpha \neq \delta$, $\mathrm{cof}((x^\gamma_{\beta})_{\beta < \alpha}) = \alpha > 1$ and $\mathrm{cof}(((x^\gamma_{\beta})_{\beta < \alpha})_{\gamma < \delta}) = \delta > 1$, then it has an upper bound in $\mathrm{IC}$.
\end{enumerate}
\end{lemma}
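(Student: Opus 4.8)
The direction (1)$\Rightarrow$(2) is immediate: the chains of chains described in (2) form a subclass of those in (1), so if all chains of chains are bounded then in particular these are. The plan for (2)$\Rightarrow$(1) is to normalise an arbitrary increasing chain $((x^\gamma_\beta)_{\beta<\alpha_\gamma})_{\gamma<\delta}$ in $\mathrm{IC}$ until it either falls into a case I can bound by hand or matches the hypotheses of (2) verbatim. Two normalisations are harmless, since an upper bound of a cofinal subchain bounds the whole chain, and since replacing an inner chain by a $\doteq$-equivalent one preserves both strict increase and the set of upper bounds. First, I replace the outer chain by a cofinal increasing subchain of length $\delta := \mathrm{cof}(((x^\gamma_\beta)_{\beta<\alpha_\gamma})_{\gamma<\delta})$, which by Corollary~\ref{corr:cofinality_countable} is $1$ or an infinite regular cardinal; if it is $1$ the chain is $\doteq$-equivalent to a single inner chain and we are done. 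Second, using Corollary~\ref{corr:cofdoteq} I replace each inner chain by a $\doteq$-equivalent one of length equal to its cofinality $\alpha_\gamma \in \{1\} \cup \{\text{infinite regular cardinals}\}$. Henceforth the outer chain is strictly increasing of regular length $\delta > 1$ and every inner chain is reduced.

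Next I would control the sequence of inner cofinalities $(\alpha_\gamma)_{\gamma<\delta}$. Setting $\mu_\gamma = \min\{\alpha_{\gamma'} \mid \gamma \le \gamma' < \delta\}$ yields a non-decreasing $\delta$-sequence of cardinals, and choosing witnesses for its values produces a cofinal subchain along which $(\alpha_\gamma)$ is either constant or strictly increasing. Reindex this subchain as $(A_\xi)_{\xi<\delta}$ with $A_\xi = (x^\xi_\beta)_{\beta<\alpha_\xi}$. In the strictly increasing case, each successor step gives $A_\xi \sqsubseteq A_{\xi+1}$ with $\alpha_\xi = \mathrm{cof}(A_\xi) < \mathrm{cof}(A_{\xi+1})$, so Lemma~\ref{lemma:singletons} supplies a single element $y_\xi \in A_{\xi+1}$ with $A_\xi \sqsubseteq (y_\xi)$. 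Since $y_\xi$ lies in $A_{\xi+1}$ and $A_{\xi+1} \sqsubseteq (y_{\xi+1})$, we get $y_\xi \preceq y_{\xi+1}$, so the $y_\xi$ form a weakly increasing chain $B$ in $(X,\preceq)$ with $A_\xi \sqsubseteq (y_\xi) \sqsubseteq B$; a strictly increasing cofinal subchain of $B$ is then the desired upper bound.

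In the constant case all inner chains share a length $\alpha$. If $\alpha = 1$ the inner chains are singletons, so the outer chain is itself an increasing chain in $(X,\preceq)$, i.e.\ an element of $\mathrm{IC}$ bounding all of its members. If $\alpha \neq \delta$ (hence $\alpha > 1$) the chain of chains satisfies exactly the hypotheses of condition (2), which I invoke. The remaining case $\alpha = \delta =: \kappa$ is the one requiring genuine work, and I would resolve it by a diagonal construction. Enumerate $U = \bigcup_{\xi<\kappa} \{x^\xi_\beta \mid \beta < \kappa\}$ as $(u_\xi)_{\xi<\kappa}$, which is possible as $|U| = \kappa$, and build a chain $(b_\xi)_{\xi<\kappa}$ by recursion, writing $\gamma_\eta$ for the superscript chosen at stage $\eta$. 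At stage $\xi$, pick $\gamma^* < \kappa$ strictly above $\sup_{\eta<\xi}\gamma_\eta$ and above the index of the block containing $u_\xi$; then $A_{\gamma_\eta} \sqsubseteq A_{\gamma^*}$ for all $\eta < \xi$ and the block of $u_\xi$ is $\sqsubseteq A_{\gamma^*}$, so each of the fewer-than-$\kappa$ elements $\{b_\eta \mid \eta<\xi\} \cup \{u_\xi\}$ is below some element of $A_{\gamma^*}$; as $\kappa$ is regular their indices are bounded by some $\mu^* < \kappa$, and I set $b_\xi = x^{\gamma^*}_{\mu^*}$, which dominates all of them. The regularity of $\kappa = \delta$ is exactly what keeps $\gamma^*$ and $\mu^*$ below $\kappa$. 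The resulting chain $B = (b_\xi)_{\xi<\kappa}$ dominates every $u_\xi$, hence $A_\xi \sqsubseteq B$ for all $\xi$, and $B$ is the sought upper bound.

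The main obstacle is this final diagonal construction: one must keep $B$ totally ordered while simultaneously forcing it to be cofinal over $U$, and it is precisely the regularity of $\kappa$ together with the equality $\alpha = \delta$ that makes both demands satisfiable at every stage. Conceptually, the lemma isolates the only configuration the reductions cannot dispatch directly, namely the constant case with $\alpha \neq \delta$ — the $\omega_1 \times \omega$ phenomenon — which is left to hypothesis (2). I expect the merely routine points to be checking that the two normalisations preserve strict increase and upper bounds, and the extraction of a monotone cofinal subsequence of the inner cofinalities.
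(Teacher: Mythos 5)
Your proof is correct, and its skeleton coincides with the paper's: normalise so that every chain's length equals its cofinality (hence is $1$ or an infinite regular cardinal), split on whether the inner cofinalities increase cofinally often or stabilise, use Lemma~\ref{lemma:singletons} to collapse blocks to single elements in the increasing case, hand the constant case with $\alpha \neq \delta$ to hypothesis~(2), dispatch $\alpha = 1$ directly, and diagonalise in the remaining case $\alpha = \delta = \kappa$. Two implementation details differ, both to your credit. First, in the increasing case you extract a subchain along which the cofinalities increase at \emph{consecutive} indices and take $y_\xi \in A_{\xi+1}$, so that $y_\xi \preceq y_\lambda$ for $\xi < \lambda$ follows from $A_{\xi+1} \sqsubseteq A_\lambda \sqsubseteq (y_\lambda)$; the paper instead picks an arbitrary witness $\gamma'$ for each $\gamma$ and asserts without comment that the resulting $y_\gamma$'s form a chain, which is not immediate for unordered witnesses --- your version makes this step airtight. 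Second, for $\alpha = \delta = \kappa$ the paper builds a diagonal $y_\epsilon = x^\epsilon_{f(\epsilon)}$ by recursion, forcing $x^\zeta_{f(\zeta)} \preceq y_\epsilon$ and also the column terms $x^\zeta_\epsilon \preceq y_\epsilon$, and uses that a set of size $|\epsilon| < \kappa = \mathrm{cof}(A_\epsilon)$ cannot be cofinal in $A_\epsilon$; you instead enumerate the union $U$ (of size $\kappa$, since each block already contributes $\kappa$ distinct elements) and dominate the enumeration, using regularity of $\kappa$ to keep the chosen block index $\gamma^*$ and position $\mu^*$ below $\kappa$. These are two equivalent uses of regularity; yours avoids the column bookkeeping at the price of the cardinality computation $|U| = \kappa$, and visits blocks along a strictly increasing sequence of indices rather than one per stage. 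The only points you leave implicit are routine and you flag them as such: that a weakly increasing $B$ must be refined to a strictly increasing cofinal subchain before it is literally an element of $\mathrm{IC}$ (in the diagonal case one can also just pick $\mu^*$ strictly above the bounding index, since blocks are strictly increasing), and the transfinite bookkeeping in extracting the constant-or-strictly-increasing subchain of cofinalities, which goes through exactly as you sketch because $\delta$ is regular.
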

\begin{proof}
It is clear that $2$ is a special case of $1$. We thus just need to show that any potential obstruction to $1$ can be assumed to have the form in $2$.

By replacing each $(x^\gamma_{\beta})_{\beta < \alpha_\gamma}$ with some suitable cofinal increasing chain if necessary, we can assume that $\mathrm{cof}((x^\gamma_{\beta})_{\beta < \alpha_\gamma}) = \alpha_\gamma$ for all $\gamma < \delta$.

Consider $\{(x^\gamma_{\beta})_{\beta < \alpha_\gamma} \mid \exists \gamma' > \gamma \ \alpha_\gamma < \alpha_{\gamma'}\}$. If this set is cofinal in $((x^\gamma_{\beta})_{\beta < \alpha_\gamma})_{\gamma < \delta}$, then for each $\gamma$ inside that set pick some witness $\gamma'$, and let $y_{\gamma}$ be the witness obtained from Lemma \ref{lemma:singletons}. Now $\{y_\gamma \mid \exists \gamma' > \gamma \ \alpha_\gamma < \alpha_{\gamma'}\}$ is the desired upper bound.

If the set from the paragraph above is not cofinal, then there exists some $\delta' < \delta$ such that for $\delta' \leq \gamma < \gamma' < \delta$ we always have that $\alpha_\gamma \geq \alpha_{\gamma'}$. As the $\alpha_\gamma$ are ordinals, decreases can happen only finitely many times. Thus, by moving to a suitable cofinal subset we can safely assume that all $\alpha_\gamma$ are equal to some fixed $\alpha$.

Again by moving to a suitable cofinal subset, we can assume that $\mathrm{cof}(((x^\gamma_{\beta})_{\beta < \alpha})_{\gamma < \delta}) = \delta$. If $\delta = 1$, the statement is trivial. If $\alpha = 1$, then $(x^\gamma_0)_{\gamma < \delta}$ is the desired upper bound. It remains to handle the case $\alpha = \delta > 1$.

We construct some function $f : \alpha \to \alpha$, such that the desired upper bound $(y_\epsilon)_{\epsilon < \alpha}$ is of the form $y_\epsilon = x_{f(\epsilon)}^\epsilon$. We proceed as follows: Set $f(0) = 0$. Once $f(\zeta)$ has been defined for all $\zeta < \epsilon$, pick for each $\zeta < \epsilon$ some $g(\zeta)$ such that $x_{f(\zeta)}^\zeta \preceq x_{g(\zeta)}^\epsilon$ and $x_{\epsilon}^\zeta \preceq x_{g(\zeta)}^\epsilon$. As $\epsilon < \alpha$, it cannot be that $\{x_{g(\zeta)}^\epsilon \mid \zeta < \epsilon\}$ is cofinal in $\{x_\beta^\epsilon \mid \beta < \alpha\}$. Thus, it has some upper bound, and we define $f(\epsilon)$ such that $x_{f(\epsilon)}^\epsilon$ is such an upper bound.
\end{proof}

Let us illustrate the problem of extending Lemma \ref{lemma:theobstruction} by an example:

\begin{example}[{\cite[Example 1]{markowsky}}]
\label{example:omega1omega}
Let $(X,\preceq) = \omega_1 \times \omega$, i.e.~the product order of the first uncountable ordinal and the first infinite ordinal. Consider the chain of chains given by $x_n^\gamma = (\gamma,n)$, this corresponds to the case $\alpha = \omega$, $\delta = \omega_1$ in Lemma \ref{lemma:theobstruction}. If this chain of chains had an upper bound, then $\omega_1 \times \omega$ would need to admit a cofinal chain. However, this is not the case.
\end{example}

However, we can guarantee the existence of a maximal chain above any chain when there is no uncountable increasing chain of increasing chains.

\begin{theorem}
\label{theo:maximalelements}
If all increasing chains of elements in $\mathrm{IC}$ (i.e., increasing chains of increasing chains of elements of $(X, \preceq)$) have a countable number of elements, then for every $A \in \mathrm{IC}$ there exists a maximal $B \in \mathrm{IC}$ with $A \sqsubseteq B$.
\begin{proof}
We first argue that Condition 2 in Lemma \ref{lemma:theobstruction} is vacuously true. As all increasing chains in $\mathrm{IC}$ are countable, the only possible value $\delta > 1$ for $\delta = \mathrm{cof}(((x^\gamma_{\beta})_{\beta < \alpha})_{\gamma < \delta})$ is $\delta = \omega$. As $(X,\preceq)$ embeds into $\mathrm{IC}$, if all chains in $\mathrm{IC}$ are countable, then so are all chains in $(X,\preceq)$. This tells us that the only possible value for $\alpha$ is $\alpha = \omega$. But then $\alpha \neq \delta$ cannot be satisfied.

By Lemma \ref{lemma:theobstruction}, Condition 1 follows. We can then apply Zorn's Lemma (Lemma \ref{lemma:zorn}) to conclude the claim.
\end{proof}
\end{theorem}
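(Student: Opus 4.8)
The plan is to reduce the problem to an application of Zorn's Lemma (Lemma~\ref{lemma:zorn}) on the quasiorder $\mathrm{IC}$, and to this end I will show that the hypothesis forces Condition~2 of Lemma~\ref{lemma:theobstruction} to hold for a trivial reason: no chain of chains can simultaneously satisfy all the side conditions listed there. The key observation is that the hypothesis---that every increasing chain in $\mathrm{IC}$ has only countably many elements---constrains both of the relevant ordinals $\alpha$ and $\delta$ appearing in Condition~2 to equal $\omega$, which then violates the requirement $\alpha \neq \delta$.

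Concretely, I would argue as follows. First, the outer ordinal: by Corollary~\ref{corr:cofinality_countable} (or directly by the remarks on cofinality in the order-theory preliminaries), the cofinality $\delta = \mathrm{cof}(((x^\gamma_{\beta})_{\beta < \alpha})_{\gamma < \delta})$ of a countable chain of chains, if it exceeds $1$, must be an infinite regular cardinal, and the only such value available under the countability hypothesis is $\delta = \omega$. Second, the inner ordinal: since $(X,\preceq)$ order-embeds into $\mathrm{IC}$ via the identification $x \mapsto (x)_{\beta<1}$, the countability of all chains in $\mathrm{IC}$ transfers to countability of all chains in $(X,\preceq)$; hence any increasing chain $(x^\gamma_\beta)_{\beta<\alpha}$ in $X$ is countable, and by the same cofinality argument its cofinality $\alpha$, when $>1$, must equal $\omega$. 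Thus under the stated conditions $\mathrm{cof}((x^\gamma_{\beta})_{\beta < \alpha}) = \alpha > 1$ forces $\alpha = \omega$ and $\mathrm{cof}(((x^\gamma_{\beta})_{\beta < \alpha})_{\gamma < \delta}) = \delta > 1$ forces $\delta = \omega$, so $\alpha = \delta = \omega$, contradicting the requirement $\alpha \neq \delta$.

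Having shown Condition~2 is vacuously satisfied, Lemma~\ref{lemma:theobstruction} yields Condition~1: every increasing chain of elements of $\mathrm{IC}$ has an upper bound in $\mathrm{IC}$. This is exactly the hypothesis of Zorn's Lemma applied to the quasiordered set $(\mathrm{IC}, \sqsubseteq)$, so every element $A \in \mathrm{IC}$ lies below a maximal element $B \in \mathrm{IC}$, which is the desired conclusion.

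The main conceptual point---and the only place where care is genuinely needed---is the transfer of countability from $\mathrm{IC}$ to $(X,\preceq)$, so that the inner ordinal is controlled as well as the outer one. This rests on the embedding $x \mapsto (x)_{\beta<1}$ being order-preserving and on the fact that an increasing chain in $X$ gives rise, via this embedding, to an increasing chain of the same length in $\mathrm{IC}$; once this is in place, the symmetry between the two roles of the countability hypothesis makes the $\alpha \neq \delta$ clash immediate. I do not expect any serious obstacle beyond stating this embedding cleanly, since all the heavy lifting on cofinalities of countable chains has already been done in Corollary~\ref{corr:cofinality_countable}.
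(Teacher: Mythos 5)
Your proposal is correct and follows essentially the same route as the paper's own proof: you show Condition~2 of Lemma~\ref{lemma:theobstruction} is vacuous by forcing $\delta = \omega$ via the countability hypothesis and regularity of cofinalities, transferring countability from $\mathrm{IC}$ to $(X,\preceq)$ through the embedding $x \mapsto (x)_{\beta<1}$ to force $\alpha = \omega$, and then concluding via Lemma~\ref{lemma:theobstruction} and Zorn's Lemma. Your write-up is in fact slightly more explicit than the paper's at the one delicate point (invoking Corollary~\ref{corr:cofinality_countable} to justify that the only countable cofinality exceeding $1$ is $\omega$), but the argument is identical in structure.
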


A small modification of the example shows that we cannot replace the requirement that $\mathrm{IC}$ has only countable increasing chains in Theorem \ref{theo:maximalelements} with the simpler requirement that $(X,\preceq)$ has only countable increasing chains:
\begin{example}\label{example:countable_chains_not_sufficient}
Let $X = \omega_1 \times \omega$, and let $(\alpha,n) \prec (\beta,m)$ iff $\alpha \leq \beta$ and $n < m$. Then $(X,\preceq)$ has only countable increasing chains, but $\mathrm{IC}$ still has the chain of chains given by $x_n^\gamma = (\gamma,n)$ as in Example \ref{example:omega1omega}.

\end{example}

\subsection{Uncountably long chains of chains}

Unfortunately, we can design a game such that there exists an uncountable increasing chain of increasing chains. Thus the existence of a maximal element above any chain is not guaranteed by Theorem \ref{theo:maximalelements}. In fact, we will see that the chain of chains of uncountable length we construct is not below any maximal chain.

\begin{figure}[t]
\begin{subfigure}[b]{0.4\textwidth}
    \begin{tikzpicture}[->,>=stealth',shorten >=1pt, initial text={}]
      \node [initial above,state,rectangle] (q5) {$v_0$};
      \node [state] (q1)    [below=of q5]                  {$v_1$};
      \node[state,rectangle]          (q2) [right=of q1]         {$v_2$};
      \node[state]          (q3) [right=of q2]         {$\ell_2$};
     \node[state]          (q4) [left=of q1]         {$\ell_1$};
      \path (q5) edge node {$$} (q1);
            \path (q5) edge [loop right] node {$$} (q5);
      \path (q1) edge [bend left] node {$$} (q2);
      \path (q2) edge [bend left] node {$$} (q1);
      \path (q2) edge node {$$} (q3);

      \path (q1) edge node {$$} (q4);
       \path (q3) edge [loop above] node {$$} (q3);
            \path (q4) edge [loop above] node {$$} (q4);
    \end{tikzpicture}
    \caption{A variant of the \emph{Help-me?} game with an extra loop}
    \label{fig:bestanimalplusloop}
    \end{subfigure}
    \hfil
    \begin{subfigure}[b]{0.4\textwidth}
    \begin{tikzpicture}[->,>=stealth',shorten >=1pt, initial text={}]

      \node [initial above, state] (q1)                     {$v_0$};
       \node [state,rectangle] [above right=of q1] (q5) {$v_1$};
      \node[state,rectangle]          (q2) [right=of q1]         {$v_2$};
      \node[state]          (q3) [right=of q2]         {$\ell_2$};
     \node[state]          (q4) [left=of q1]         {$\ell_1$};
      \path (q5) edge node {$ $} (q2);
      \path (q1) edge node[above left] {$ b $} (q5);

      \path (q1) edge [bend left] node[below] {$ a $} (q2);
      \path (q2) edge [bend left] node {$$} (q1);
      \path (q2) edge node {$$} (q3);

      \path (q1) edge node {$$} (q4);
       \path (q3) edge [loop above] node {$$} (q3);
            \path (q4) edge [loop above] node {$ $} (q4);
    \end{tikzpicture}
    \caption{A variant of the \emph{Help-me?} game with two paths from $v_0$ to $v_2$}
    \label{fig:bestanimal2choices_1}
\end{subfigure}
\caption{Variants of the \emph{Help-me?} game}
\end{figure}
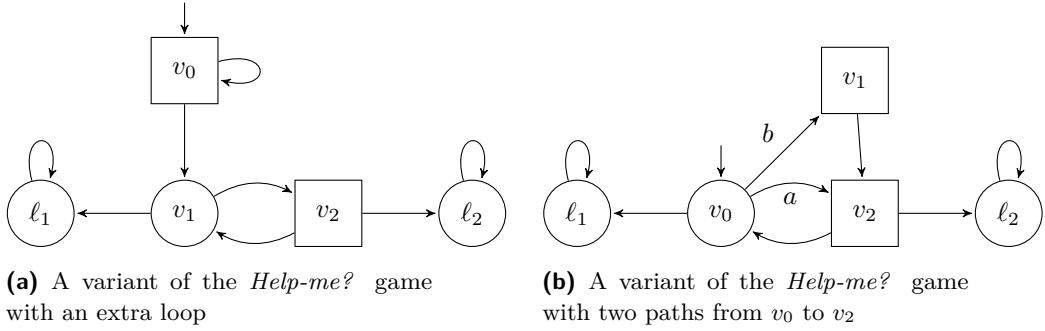

\begin{example}
\label{ex:bestanimalplusloop}
We consider a variant of the \emph{Help-me?} game (Example \ref{ex:best_animal}), depicted in Figure \ref{fig:bestanimalplusloop}.
The strategies of the protagonist in this game can be described by functions $f : \mathbb{N} \to \mathbb{N} \cup \{\infty\}$ describing how often the protagonist is willing to repeat the second loop (between $q_1$ and $q_2$) given the number of repetitions the antagonist made in the first loop (at $q_0$). With the same reasoning as in Example \ref{ex:best_animal} we find that the strategy corresponding to a function $g$ dominates the strategy corresponding to $f$ iff $\forall n \in \mathbb{N} \ f(n) = \infty \Leftrightarrow g(n) = \infty$ and $\forall n \in  \mathbb{N} f(n) \leq g(n)$.
\end{example}

\begin{definition}
Let $\mathbb{N}^\mathbb{N}$ denote the set of functions $f : \mathbb{N} \to \mathbb{N}$. For $f, g \in \mathbb{N}^\mathbb{N}$, let $f \leq g$ denote that $\forall n \in \mathbb{N} \ f(n) \leq g(n)$.
\end{definition}

\begin{observation}
\label{obs:forchains}
There is an embedding of $(\mathbb{N}^\mathbb{N},\leq)$ into the strategies of the game in Example \ref{ex:bestanimalplusloop} ordered by dominance such that no strategy in the range of embedding is dominated by a strategy outside the range of the embedding.
\end{observation}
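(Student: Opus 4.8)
The plan is to exhibit an explicit map $\Phi$ from $(\mathbb{N}^\mathbb{N}, \leq)$ into the strategies of the game in Example~\ref{ex:bestanimalplusloop} and verify two things: that $\Phi$ is an order-embedding for the dominance relation, and that no strategy outside the range of $\Phi$ can dominate a strategy in the range. From Example~\ref{ex:bestanimalplusloop}, strategies of the protagonist correspond to functions $f : \mathbb{N} \to \mathbb{N} \cup \{\infty\}$, with dominance characterised by: $g$ dominates $f$ iff $f(n) = \infty \Leftrightarrow g(n) = \infty$ for all $n$, and $f(n) \leq g(n)$ for all $n$. The natural choice is to let $\Phi$ be the inclusion of $\mathbb{N}^\mathbb{N}$ (the finite-valued functions) into this larger set of strategies. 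First I would record that on the subset of everywhere-finite functions, the condition ``$f(n)=\infty \Leftrightarrow g(n)=\infty$'' is automatically satisfied, so the dominance relation restricted to $\Phi(\mathbb{N}^\mathbb{N})$ collapses to exactly the pointwise order $f \leq g$. This immediately gives that $\Phi$ is an embedding of $(\mathbb{N}^\mathbb{N}, \leq)$ into the strategies ordered by dominance.

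The second requirement is the substantive one: I must show that if $h$ is any strategy (function $\mathbb{N} \to \mathbb{N} \cup \{\infty\}$) that dominates some $\Phi(f)$ with $f \in \mathbb{N}^\mathbb{N}$, then $h$ already lies in the range of $\Phi$, i.e.\ $h$ is everywhere finite. This follows directly from the dominance characterisation: if $h$ dominates $\Phi(f)$, then for all $n$ we need $f(n) = \infty \Leftrightarrow h(n) = \infty$. Since $f$ is everywhere finite, $f(n) \neq \infty$ for every $n$, hence $h(n) \neq \infty$ for every $n$, so $h$ is everywhere finite and therefore equals $\Phi(h) \in \Phi(\mathbb{N}^\mathbb{N})$. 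This is really just reading off the equivalence clause in the dominance criterion, so the work is light once the criterion from Example~\ref{ex:bestanimalplusloop} is in hand.

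The only delicate point I expect is making sure the direction of the biconditional is used correctly: the dominance criterion requires the $\infty$-pattern of dominator and dominated to \emph{agree}, not merely that the dominated one's infinities are preserved. It is precisely this agreement (the forward direction, that $h(n) = \infty$ forces $f(n) = \infty$) that forbids a dominator from having an infinite value where $f$ is finite, and this is exactly what pins $h$ into the finite-valued range. I would state this carefully, since a weaker one-directional monotonicity condition would not suffice to keep the dominator inside the range. With the embedding property and this closure-under-domination property established, the observation is proved. The main value of this observation, which I would not prove here but flag as its intended use, is that it transports any long increasing chain in $(\mathbb{N}^\mathbb{N}, \leq)$ into an increasing chain of strategies whose upper bounds must again lie in the range, setting up the construction of an uncountable chain of chains promised in the surrounding subsection.
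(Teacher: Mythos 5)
Your proof is correct and matches the paper's (implicit) reasoning: the paper states this as an unproved observation immediately following the dominance characterisation in Example~\ref{ex:bestanimalplusloop}, and your argument --- the inclusion of finite-valued functions is an embedding since dominance restricted to them is the pointwise order, and the biconditional $f(n)=\infty \Leftrightarrow g(n)=\infty$ forces any dominator of a finite-valued strategy to be finite-valued --- is exactly the intended justification. Your emphasis on needing the full biconditional rather than a one-directional condition is the right point to flag.
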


\begin{proposition}[\footnote{This result is adapted from an answer by user \emph{Deedlit} on \url{math.stackexchange.org} \cite{stackexchange}.}]
\label{prop:longchains}
For every chain $(f_n)_{n \in \mathbb{N}}$ in $(\mathbb{N}^\mathbb{N},\leq)$ there exists a chain of chains $((f_n^\alpha)_{n < \omega})_{\alpha < \omega_1}$ of length $\omega_1$ with $(f_n^0)_{n < \omega} \sqsupseteq (f_n)_{n < \omega}$.
\end{proposition}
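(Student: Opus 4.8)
The plan is to build the family $(C_\alpha)_{\alpha<\omega_1}$, with $C_\alpha = (f^\alpha_n)_{n<\omega}$, by transfinite recursion on $\alpha$, maintaining as invariant that $(C_\beta)_{\beta\leq\alpha}$ is an increasing chain in $\mathrm{IC}(\mathbb{N}^\mathbb{N},\leq)$. The reformulation I would use throughout is that $\sqsubseteq$ is inclusion of generated ideals: to each $\omega$-chain $C=(g_n)_n$ I associate the countably generated directed lower set $D_C = \{h\in\mathbb{N}^\mathbb{N}\mid \exists n\ h\leq g_n\}$. Then $C\sqsubseteq C'$ iff $D_C\subseteq D_{C'}$, and $C\doteq C'$ iff $D_C=D_{C'}$. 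For the base case I simply set $C_0=(f_n)_{n<\omega}$, so that $C_0\doteq(f_n)_{n<\omega}$, and in particular $(f^0_n)_{n<\omega}\sqsupseteq(f_n)_{n<\omega}$ as required.

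For the successor step, given $C_\alpha=(f^\alpha_n)_n$ I would diagonalise out of its ideal: put $h(k)=f^\alpha_k(k)+1$ and define $f^{\alpha+1}_n = f^\alpha_n + h$ pointwise. Each $f^{\alpha+1}_n$ is finite-valued, the sequence is strictly increasing (adding the fixed $h$ preserves the strict inequalities of $C_\alpha$), and $f^\alpha_n\leq f^{\alpha+1}_n$ yields $C_\alpha\sqsubseteq C_{\alpha+1}$. Strictness is the crux of this step: $h\leq f^{\alpha+1}_0$ gives $h\in D_{C_{\alpha+1}}$, whereas $h(m)=f^\alpha_m(m)+1>f^\alpha_m(m)$ shows $h\not\leq f^\alpha_m$ for every $m$, so $h\notin D_{C_\alpha}$; hence $C_{\alpha+1}\not\sqsubseteq C_\alpha$, i.e.\ $C_\alpha\sqsubset C_{\alpha+1}$.

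The step I expect to be the main obstacle is the limit case. For a countable limit $\lambda<\omega_1$, which necessarily has cofinality $\omega$, I consider $D^\ast=\bigcup_{\beta<\lambda}D_{C_\beta}$. As the $D_{C_\beta}$ form an increasing family of directed sets, $D^\ast$ is directed and closed under pointwise maxima, and it is generated by the \emph{countable} set $\{f^\beta_n\mid \beta<\lambda,\ n<\omega\}$. Enumerating this generating set as $\{g_0,g_1,\dots\}$, I build a weakly increasing cofinal sequence by $e_0=g_0$ and $e_{i+1}=\max(e_i,g_{i+1})\in D^\ast$; it is cofinal in $D^\ast$, hence dominates every $D_{C_\beta}$. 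A short argument shows $D^\ast$ has no greatest element (a maximum would lie in some $D_{C_\beta}$, forcing $C_{\beta+1}\doteq C_\beta$ against strictness), so $(e_i)_i$ is not eventually constant and can be thinned to a strictly increasing $\omega$-chain $C_\lambda$ with $D_{C_\lambda}=D^\ast$. Then $C_\beta\sqsubseteq C_\lambda$ for all $\beta<\lambda$, and strictness $C_\beta\sqsubset C_\lambda$ follows automatically from $C_\beta\sqsubset C_{\beta+1}\sqsubseteq C_\lambda$. This is precisely the assertion that a countable chain of chains admits an upper bound, which is exactly why the recursion succeeds at every countable stage.

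The reason the construction stops at $\omega_1$ — and is the conceptual point worth stating — is that the same argument fails at $\omega_1$ itself: its cofinality is uncountable, so $\bigcup_{\alpha<\omega_1}D_{C_\alpha}$ is no longer countably generated and admits no cofinal $\omega$-chain, which is what pins the length of the constructed chain of chains at $\omega_1$. Since the recursion visits only $\alpha<\omega_1$, every limit encountered is countable and the three cases above suffice to produce the whole $\omega_1$-indexed increasing chain of chains with $C_0\sqsupseteq(f_n)_{n<\omega}$. Finally, I would note that insisting each $C_\alpha$ be indexed by $\omega$ costs nothing: by Corollary~\ref{corr:cofinality_countable} any countable chain produced along the way is equivalent to a singleton or to an $\omega$-chain, and the successor step always outputs an honest $\omega$-chain.
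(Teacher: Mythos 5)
Your argument is correct, and it reaches the result by a genuinely different organization than the paper's. The paper writes down one uniform pointwise formula for all $\alpha < \omega_1$ at once --- $f^{\alpha+1}_n(k) = \max_{j \leq k} f^\alpha_{n+j}(k) + 1$ at successors, and $f^\alpha_n(k) = \max_{m \leq n+k} f^{\alpha[m]}_n(k)$ at limits via fixed fundamental sequences $\alpha[m]$ --- and then has to prove monotonicity by a somewhat delicate induction over $\beta$ (tracking thresholds $t$ beyond which $f^\alpha_n \leq f^\beta_n$ and using that only finitely many intermediate ordinals are inspected), with strictness established by a separate diagonal contradiction. You instead run a transfinite recursion whose invariant is phrased through the generated lower sets $D_C$, so that $\sqsubseteq$ becomes literal inclusion of ideals: your successor step is the same diagonalization as the paper's (your $h(k) = f^\alpha_k(k)+1$ plays exactly the role of the paper's $+1$ on a diagonal maximum), but your limit step replaces the fundamental-sequence formula by the observation that a union of countably many such ideals in $(\mathbb{N}^\mathbb{N}, \leq)$, which is closed under pointwise maxima, is countably generated and hence cofinally an $\omega$-chain via running maxima. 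What this buys you is that monotonicity and strictness are one-line checks at every stage, and the limit step becomes conceptually transparent --- it is precisely the ``countable chains of chains are bounded'' phenomenon, a concrete instance of the case $\alpha = \delta = \omega$ treated in Lemma \ref{lemma:theobstruction} and underlying Theorem \ref{theo:maximalelements}. What it costs is uniformity: your construction depends on an enumeration choice at every limit ordinal, though the paper's fundamental sequences are an equally choice-dependent device, as its own footnote concedes. One administrative point that you and the paper both treat loosely: if the given chain $(f_n)$ is only weakly increasing, then $C_0 = (f_n)$ is not literally an element of $\mathrm{IC}$; your closing remark (thin via Corollary \ref{corr:cofinality_countable}, or add a strictly increasing function as in your successor step) repairs this at no cost, and the paper's $f^0_n(k) = \max\{f(k),k\}$ does no better on this score, so this is a cosmetic rather than a substantive gap.
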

\begin{proof}
For each countable limit ordinal $\alpha$, we fix\footnote{We have no computability or other uniformity requirements to satisfy, and can thus just invoke the axiom of choice. Otherwise, as discussed e.g.~in \cite[Section 3.1]{promel} this approach would fail.} some fundamental sequence $(\alpha[m])_{m < \omega}$ of ordinals with $\alpha[m] < \alpha$ and $\sup_{m \in \omega} \alpha[m] = \alpha$.

Let $f^0_n(k) = \max \{f(k), k\}$. Let $f_n^{\alpha + 1}(k) = \max_{j \leq k}(f_{n+j}^\alpha)(k) + 1$, and for limit ordinals $\alpha$, let $f_n^\alpha(k) = \max_{m \leq n + k} f_n^{\alpha[m]}(k)$.

{\bf Claim:} If $\alpha \leq \beta$, then $(f_n^\alpha)_{n < \omega} \sqsubseteq (f_m^\beta)_{m < \omega}$.
\begin{proof}
It suffices to show that if $\alpha \leq \beta$, then $f_n^\alpha \leq f_n^\beta$ for all $n$ greater than some $t$. If $\beta = \alpha + 1$, this is immediate already for $t = 0$. For $\beta$ a limit ordinal, we note that $f_n^{\beta[m]} \leq f_n^\beta$ for $n \geq m$.

The claim then follows by induction over $\beta$. Recall that if $\beta$ is a limit ordinal and $\alpha < \beta$, then there is some $m \in \omega$ with $\alpha \leq \beta[m]$. Since for any given $\alpha,\beta$, the ordinals $\gamma$ between $\alpha$ and $\beta$ we will need to inspect in the induction form a decreasing chain, there are only finitely many such ordinals. In particular, the maximum of all thresholds $t$ we encounter is well-defined.
\end{proof}

{\bf Claim:} If $\alpha > \beta$, then $(f_n^\alpha)_{n < \omega} \not\sqsubseteq (f_m^\beta)_{m < \omega}$.
\begin{proof}
Due to transitivity of $\sqsubseteq$ and the previous claim, it suffices to show that $(f_m^{\alpha+1})_{m < \omega} \not\sqsubseteq (f_n^\alpha)_{n < \omega}$. Write $g_n = f_n^\alpha$. Assume the contrary, i.e.~that for all $n < \omega$ there exists some $m < \omega$ such that for all $k \in \mathbb{N}$ and for all $j \leq k$ we have that $g_{n+j}(k) + 1\leq g_m(k)$. In particular, for $n = 0$ we would have that $\forall k \in \mathbb{N} \ \forall j \leq k \ g_j(k) + 1\leq g_m(k)$, and then setting $k = j = m$, that $g_m(m) + 1 \leq g_m(m)$, which is a contradiction.
\end{proof}
\end{proof}

\begin{corollary}\label{cor:example_uncountable_chains}
The game in Example \ref{ex:bestanimalplusloop} has uncountably long chains of chains not below any maximal chains.
\begin{proof}
Combine Observation \ref{obs:forchains} and Proposition \ref{prop:longchains}.
\end{proof}
\end{corollary}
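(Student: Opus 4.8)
The plan is to read off the chain of chains directly from Observation~\ref{obs:forchains} and Proposition~\ref{prop:longchains}, and then to rule out any maximal chain above it by exploiting the rigidity of $(\mathbb{N}^\mathbb{N},\le)$ under pointwise domination. It suffices to exhibit a single witnessing chain of chains.

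First I would fix any chain $(f_n)_{n<\omega}$ in $(\mathbb{N}^\mathbb{N},\le)$ and apply Proposition~\ref{prop:longchains} to obtain a family $((f_n^\alpha)_{n<\omega})_{\alpha<\omega_1}$. The two claims established inside that proposition give, for $\alpha<\beta$, both $(f_n^\alpha)_{n<\omega}\sqsubseteq (f_m^\beta)_{m<\omega}$ and $(f_m^\beta)_{m<\omega}\not\sqsubseteq (f_n^\alpha)_{n<\omega}$, so this is a strictly $\sqsubseteq$-increasing chain of length $\omega_1$, hence uncountable even up to $\doteq$. Transporting it through the dominance-preserving embedding of Observation~\ref{obs:forchains} yields an uncountably long increasing chain of chains $\mathcal{C}$ among the strategies of the game of Example~\ref{ex:bestanimalplusloop}, which settles the first half of the statement.

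For the second half, suppose towards a contradiction that some maximal chain $M$ of strategies satisfies $(f_n^\alpha)_{n<\omega}\sqsubseteq M$ for every $\alpha<\omega_1$; in particular $M$ lies above $(f_n^0)_{n<\omega}$, whose members are $\infty$-free (they lie in the range of the embedding). I would first argue that $M$ may be assumed to live inside $\mathbb{N}^\mathbb{N}$. Recall from Example~\ref{ex:bestanimalplusloop} that one strategy dominates another only when the two share the same pattern of $\infty$-values; since each $f_n^\alpha$ is $\infty$-free, any element of $M$ dominating some $f_n^\alpha$ is itself $\infty$-free. Using that $M$ is a chain dominating, say, $f_0^0$, one checks that the set $M'$ of such $\infty$-free elements is cofinal in $M$ (any $g\in M$ either precedes a dominating element or is one itself), so $M\doteq M'$ with $M'$ a chain in $(\mathbb{N}^\mathbb{N},\le)$; this is exactly where the extra clause of Observation~\ref{obs:forchains} is needed.

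The key step, and the one I expect to be the main obstacle, is a cofinality fact: every increasing chain in $(\mathbb{N}^\mathbb{N},\le)$ has cofinality $1$ or $\omega$. Indeed, along an increasing chain $(g_\xi)_{\xi<\lambda}$ with $\lambda$ regular and uncountable, each coordinate map $\xi\mapsto g_\xi(k)$ is non-decreasing into $\mathbb{N}$, hence bounded (otherwise $\lambda$ would be a countable supremum of smaller ordinals, contradicting regularity) and therefore eventually constant, say beyond $\xi_k$; since $\omega_1\le\lambda$ is regular, $\xi^*=\sup_k\xi_k<\lambda$ and the chain is constant past $\xi^*$, forcing cofinality $1$. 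Applying this to $M'$ together with Corollary~\ref{corr:cofinality_countable} shows that $M\doteq M'$ is equivalent to a singleton or to a countable chain $(h_n)_{n<\omega}$ in $\mathbb{N}^\mathbb{N}$. I would then feed $(h_n)_{n<\omega}$ back into Proposition~\ref{prop:longchains}: writing $(h_n^0)_{n<\omega}\sqsubseteq (h_n^1)_{n<\omega}$ for the first two chains it produces, we get $M\doteq (h_n)_{n<\omega}\sqsubseteq (h_n^1)_{n<\omega}$, while the second claim of that proposition (with $\alpha=1$, $\beta=0$) gives $(h_n^1)_{n<\omega}\not\sqsubseteq (h_n^0)_{n<\omega}$ and hence $(h_n^1)_{n<\omega}\not\sqsubseteq M$. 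After transport through the embedding this yields $M\sqsubset (h_n^1)_{n<\omega}$, contradicting the maximality of $M$ and completing the proof.
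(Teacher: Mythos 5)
Your proposal is correct and matches the paper's intended argument: the paper's proof is literally just ``combine Observation~\ref{obs:forchains} and Proposition~\ref{prop:longchains}'', and your write-up is the natural unpacking of exactly that combination --- transport the $\omega_1$-chain of chains through the embedding, use the Observation's extra clause to pull any putative maximal chain above it back into $(\mathbb{N}^\mathbb{N},\leq)$, and feed a countable cofinal subchain back into Proposition~\ref{prop:longchains} to produce a strictly $\sqsubseteq$-larger chain contradicting maximality. The auxiliary fact you isolate along the way (every increasing chain in $(\mathbb{N}^\mathbb{N},\leq)$ has cofinality $1$ or $\omega$, via eventual constancy of the coordinate maps and Corollary~\ref{corr:cofinality_countable}) is left implicit in the paper but is genuinely needed for this step, and your argument for it is sound.
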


\subsection{Chains over countable quasiorders $(X,\preceq)$}
Our proof of Proposition \ref{prop:longchains} crucially relied on functions of type $f : \mathbb{N} \to \mathbb{N}$ with arbitrarily high rate of growth. In concrete applications such functions would typically be unwelcome. In fact, for almost all classes of games of interest in (theoretical) computer science, a countable collection of strategies suffices for the players to attain their attainable goals. Restricting to computable strategies often makes sense. Many games played on finite graphs are even finite-memory determined (see \cite{paulyleroux4} for how this extends to the quantitative case), and thus strategies implementable by finite automata are all that need to be considered.

Restricting consideration to a countable set of strategies indeed circumvents the obstacle presented by Proposition \ref{prop:longchains}. The reason is that the cardinality of the length of a chain of chains cannot exceed that of the underlying quasiorder $(X,\preceq)$:

\begin{proposition}
\label{prop:noverylongchains}
For any increasing chain $((x^\gamma_{\beta})_{\beta < \alpha})_{\gamma < \delta}$ in $\mathrm{IC}(X,\preceq)$ we find that $|\delta| \leq |X|$.
\begin{proof}
Let $X_\gamma = \{x \in X \mid \exists \beta < \alpha \ x \preceq x^\gamma_\beta\}$. We find that $X_{\gamma_1} \subsetneq X_{\gamma_2}$ for any $\gamma_1 < \gamma_2 < \delta$ as a direct consequence of $(x^{\gamma_1}_{\beta})_{\beta < \alpha} \sqsubset (x^{\gamma_2}_{\beta})_{\beta < \alpha}$. Pick for each $\gamma < \delta$ some $y_\gamma \in X_{\gamma + 1} \setminus X_{\gamma}$. Then $y_{\cdot} : \delta \to X$ is an injection, establishing $|\delta| \leq |X|$.
\end{proof}
\end{proposition}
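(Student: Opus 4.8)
The plan is to show that the length $\delta$ of any increasing chain of chains injects into $X$, which immediately gives $|\delta| \leq |X|$. The key observation is that strict dominance at the level of $\mathrm{IC}$ forces a genuine increase in the \emph{downward closure} of the set of elements appearing in each chain. First I would, for each index $\gamma < \delta$, associate the set of all elements of $X$ that are dominated by some member of the $\gamma$-th chain; formally, set
\[X_\gamma = \{x \in X \mid \exists \beta < \alpha \ x \preceq x^\gamma_\beta\}.\]
Note that $X_\gamma$ is exactly the downward closure (under $\preceq$) of the elements of the chain $(x^\gamma_\beta)_{\beta < \alpha}$, so $X_\gamma \subseteq X$ for every $\gamma$.

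Next I would establish strict monotonicity of this assignment: for $\gamma_1 < \gamma_2 < \delta$ we have $X_{\gamma_1} \subsetneq X_{\gamma_2}$. Inclusion is easy, since $(x^{\gamma_1}_\beta)_{\beta < \alpha} \sqsubseteq (x^{\gamma_2}_\beta)_{\beta < \alpha}$ says every element of the first chain is below some element of the second, and dominance is transitive, so anything in $X_{\gamma_1}$ lands in $X_{\gamma_2}$. The properness of the inclusion comes from the chain of chains being \emph{strictly} increasing: since $(x^{\gamma_1}_\beta)_{\beta < \alpha} \sqsubset (x^{\gamma_2}_\beta)_{\beta < \alpha}$ but not conversely, there must be some element of the second chain not dominated by any element of the first, witnessing $X_{\gamma_2} \setminus X_{\gamma_1} \neq \emptyset$.

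Having the strict chain $(X_\gamma)_{\gamma < \delta}$ of subsets of $X$, I would finish by choosing, for each $\gamma < \delta$, a witness $y_\gamma \in X_{\gamma+1} \setminus X_\gamma$, which is nonempty by the properness just established. The map $\gamma \mapsto y_\gamma$ is injective: if $\gamma_1 < \gamma_2$, then $y_{\gamma_1} \in X_{\gamma_1 + 1} \subseteq X_{\gamma_2}$ while $y_{\gamma_2} \notin X_{\gamma_2}$, so they differ. This injection of $\delta$ into $X$ yields $|\delta| \leq |X|$. The only point needing slight care is the properness argument, i.e.~correctly reading off from the definition of $\sqsubset$ on $\mathrm{IC}$ that strict dominance of chains produces a new element in the downward closure; everything else is routine manipulation of downward-closed sets and a standard ordinal-indexed injection. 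No transfinite induction or choice beyond picking the witnesses $y_\gamma$ is required.
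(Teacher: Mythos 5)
Your proof is correct and is essentially identical to the paper's own argument: the same downward closures $X_\gamma$, the same strict-inclusion observation derived from $(x^{\gamma_1}_\beta)_{\beta<\alpha} \sqsubset (x^{\gamma_2}_\beta)_{\beta<\alpha}$, and the same choice of witnesses $y_\gamma \in X_{\gamma+1}\setminus X_\gamma$ yielding an injection of $\delta$ into $X$. You even spell out the one point the paper leaves implicit (that strictness of $\sqsubset$ plus reflexivity of $\preceq$ produces the new element), so nothing is missing.
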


\begin{corollary}
\label{corr:countablymanystrategiesgood}
If $(X,\preceq)$ is countable, then any increasing chain is maximal or below a maximal chain.
\begin{proof}
Proposition \ref{prop:noverylongchains} shows that Theorem \ref{theo:maximalelements} applies.
\end{proof}
\end{corollary}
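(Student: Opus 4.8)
The plan is to reduce Corollary~\ref{corr:countablymanystrategiesgood} entirely to the machinery already assembled, observing that countability of $(X,\preceq)$ forces exactly the hypothesis of Theorem~\ref{theo:maximalelements}. The only nontrivial link to supply is that countability of the base quasiorder propagates to countability of \emph{all} increasing chains of increasing chains in $\mathrm{IC}(X,\preceq)$, after which Theorem~\ref{theo:maximalelements} delivers the conclusion verbatim.

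First I would invoke Proposition~\ref{prop:noverylongchains}: for any increasing chain of chains $((x^\gamma_\beta)_{\beta<\alpha})_{\gamma<\delta}$ we have $|\delta| \leq |X|$. Since $(X,\preceq)$ is assumed countable, $|X| \leq \aleph_0$, and therefore every increasing chain in $\mathrm{IC}$ has length $\delta$ of cardinality at most $\aleph_0$, i.e.~is countable. This is precisely the antecedent of Theorem~\ref{theo:maximalelements}, namely that all increasing chains of elements of $\mathrm{IC}$ have a countable number of elements.

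Having verified the hypothesis, I would simply apply Theorem~\ref{theo:maximalelements} to conclude that for every $A \in \mathrm{IC}$ there is a maximal $B \in \mathrm{IC}$ with $A \sqsubseteq B$. Unpacking the identification of a single strategy $x$ with the one-element chain $(x)_{\beta<1}$ established after Definition~\ref{def:dominanceforchains}, this says that every increasing chain of strategies is either itself maximal or dominated by a maximal chain, which is the assertion of the corollary.

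I do not expect a genuine obstacle here: the corollary is a clean specialisation, and the entire content lives in the two prior results. The one point requiring a moment of care is confirming that Proposition~\ref{prop:noverylongchains} bounds the length of chains \emph{in $\mathrm{IC}$} (the chains of chains), not merely chains in $(X,\preceq)$, so that the cardinality bound feeds directly into the hypothesis of Theorem~\ref{theo:maximalelements}, which is stated in terms of chains of chains; since Proposition~\ref{prop:noverylongchains} is phrased exactly for increasing chains $((x^\gamma_\beta)_{\beta<\alpha})_{\gamma<\delta}$ in $\mathrm{IC}(X,\preceq)$, this match is immediate and the proof collapses to a one-line citation of both results.
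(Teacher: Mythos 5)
Your proposal is correct and follows exactly the paper's route: the paper's proof is the one-line observation that Proposition~\ref{prop:noverylongchains} (bounding the length $\delta$ of any increasing chain in $\mathrm{IC}$ by $|X| \leq \aleph_0$) verifies the hypothesis of Theorem~\ref{theo:maximalelements}, which then yields the conclusion. Your write-up merely makes explicit the cardinality bookkeeping and the identification of single strategies with one-element chains, both of which the paper leaves implicit.
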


 \begin{example}\label{ex:best_animal_3}
We return to the \emph{Help-me?} game (Example \ref{ex:best_animal}, Figure \ref{fig:best_animal}). With the analysis done in Example~\ref{ex:best_animal_2}, we have seen that any increasing chain $C$ is either maximal or such that $C \sqsubseteq (\sigma_n)_{n < \omega}$, which is maximal.
This fact can be derived directly from Corollary \ref{corr:countablymanystrategiesgood} as the number of strategies in $\game$ is countable. Note also that the seemingly irrelevant loop we added in Figure \ref{fig:bestanimalplusloop} has a fundamental impact on the behaviour of chains of strategies!
\end{example}

\section{Generalised safety/reachability games}

\begin{definition}\label{def:gen_safety_reach_game}
A \emph{generalised safety/reachability game} (for Player~$i$) $\game=\langle P,G, L, (p_i)_{i\in P}\rangle$ is a turn-based multiplayer game on a finite graph such that:
\begin{itemize}[topsep=0pt]
\item $L \subseteq V$ is a finite set of \emph{leaves},
\item for each $\ell \in L$, we have that $(\ell, v) \in E$ if, and only if $v=\ell$, that is, each leaf is equipped with a self-loop, and no other outgoing transition,
\item for each $\ell \in L$, there exists an associated payoff $n_{\ell} \in \Z$ such that:
for each outcome $\rho$, we have $p_i(\rho) = \begin{cases}
n_{\ell} \text{~if~} \rho \in V^*{\ell}^{\omega},\\
0 \text{~otherwise.~ }
\end{cases}
$
\end{itemize}
\end{definition}

The traditional reachability games can be recovered as the special case where all leaves are associated with the same positive payoff, whereas the traditional safety games are those generalised safety/reachability games with a single negative payoff attached to leaves. This class was studied under the name \emph{chess-like} games in \cite{boros,boros2}.

Generalised safety/reachability games are \emph{well-formed} for Player~$i$.
Furthermore, they are prefix-independent, that is, for any outcome $\rho$ and history $h$, we have that $p_i(h\rho) = p_i(\rho)$.
Without loss of generality, we consider that there is either a unique leaf $\ell(n) \in L$ or no leaf for each possible payoff $n \in \Z$.

It follows from the transfer theorem in \cite{paulyleroux4} (in fact, already from the weaker transfer theorem in \cite{depril3}) that generalised safety/reachability games are finite memory determined.
With a slight modification, we see that for any  history $h$ and strategy $\sigma$, there exists a finite-memory strategy $\sigma'$ such that $\cVal(h,\sigma')=\cVal(h,\sigma)$ and $\aVal(h,\sigma')=\aVal(h,\sigma)$. We shall thus restrict our attention to finite memory strategies, of which there are only countably many. We then obtain immediately from Corollary \ref{corr:countablymanystrategiesgood}:

\begin{corollary}
In a generalised safety/reachability game, every increasing chain comprised of finite memory strategies is either maximal or dominated by a maximal such chain.
\end{corollary}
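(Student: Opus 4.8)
The plan is to reduce the claim to the already-established order-theoretic machinery, specifically Corollary~\ref{corr:countablymanystrategiesgood}, by verifying its single hypothesis in the present setting. The corollary states that whenever the underlying quasiorder $(X,\preceq)$ is countable, every increasing chain is maximal or below a maximal chain. So the entire task is to exhibit the appropriate countable quasiorder and check that the dominance relation on it coincides with the restriction of the dominance relation we care about.

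First I would fix the quasiorder. Take $X$ to be the set of finite-memory strategies of Player~$i$ in the given generalised safety/reachability game, equipped with the weak dominance relation $\preceq$ restricted to $X$. Since $(\Sigma_i(\game),\preceq)$ is a quasiorder (as already noted in the paper), its restriction to any subset is again a quasiorder, so $(X,\preceq)$ is a quasiorder. The key structural fact to invoke is the sentence preceding the statement: generalised safety/reachability games are finite-memory determined (via the transfer theorem of \cite{paulyleroux4}, or even \cite{depril3}), and moreover for any history $h$ and strategy $\sigma$ there is a finite-memory $\sigma'$ with $\aVal(h,\sigma')=\aVal(h,\sigma)$ and $\cVal(h,\sigma')=\cVal(h,\sigma)$. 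This justifies restricting attention to $X$.

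Second I would observe that $X$ is countable: a finite-memory strategy is given by a finite Mealy/Moore automaton over the finite vertex set $V$, and there are only countably many finite automata over a fixed finite alphabet. Hence $|X| \leq \aleph_0$. With countability of $(X,\preceq)$ in hand, Corollary~\ref{corr:countablymanystrategiesgood} applies directly and yields that every increasing chain in $(X,\preceq)$ is maximal or below a maximal chain in $\mathrm{IC}(X,\preceq)$. Finally I would note that maximality within $\mathrm{IC}(X,\preceq)$ is exactly the statement ``maximal among chains comprised of finite-memory strategies'', which is what the corollary asserts, so no further translation is needed.

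I do not expect a genuine obstacle here, since this is a direct specialisation of a general theorem; the only point requiring care is the quantifier in the claim. The statement speaks of chains \emph{comprised of finite memory strategies} being dominated by \emph{maximal such chains}, i.e.\ maximality and domination are both taken within $\mathrm{IC}(X,\preceq)$ rather than within the full $\mathrm{IC}(\Sigma_i(\game),\preceq)$. The mildly subtle step is therefore simply to keep the ambient quasiorder equal to $(X,\preceq)$ throughout, so that ``maximal'' and ``below a maximal chain'' are interpreted relative to finite-memory strategies; once this is fixed, the proof is one line: apply Corollary~\ref{corr:countablymanystrategiesgood} to the countable quasiorder $(X,\preceq)$.
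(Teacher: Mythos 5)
Your proposal is correct and matches the paper's own argument exactly: the paper likewise restricts to the (countable) set of finite-memory strategies and obtains the statement as an immediate application of Corollary~\ref{corr:countablymanystrategiesgood}. Your additional care about interpreting ``maximal'' within $\mathrm{IC}(X,\preceq)$ for $X$ the finite-memory strategies is the right reading of the statement and is implicit in the paper's phrasing ``maximal \emph{such} chain''.
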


If our goal is only to obtain a dominance-related notion of rationality, then for generalised safety/reachability games we can be satisfied with maximal chains comprised of finite memory strategies. However, for applications, it would be desirable to have a concrete understanding of these maximal chains. For this, having used Zorn's Lemma in the proof of their existence surely is a bad omen!

After collecting some useful lemmas on dominance in generalised safety/reachability games in Section \ref{subsec:dominancelemmas}, we will introduce the notion of \emph{uniform chains} in Section \ref{subsec:uniformchains}. These are realized by automata of a certain kind, and thus sufficiently concrete to be amenable to algorithmic manipulations.

\subsection{Dominance in generalised safety/reachability games}
\label{subsec:dominancelemmas}
Given a generalised safety/reachability game $\game$ and two strategies $\sigma_1$ and $\sigma_2$ of Player~$i$, we can provide a criterion to show that $\sigma_1$ is not dominated by $\sigma_2$:

\begin{lemma}\label{lem:dominance_histories}
Let $\sigma_1$ and $\sigma_2$ be two strategies of Player $i$ in a generalised safety/reachability game $\game$.
Then, $\sigma_1 \not \preceq \sigma_2$ if, and only if,
there exists a history $h$ compatible with $\sigma_1$ and $\sigma_2$ such that $\last h \in V_i$, $\sigma_1(h) \neq \sigma_2(h)$ and $\cVal(h,\sigma_1) > \aVal(h,\sigma_2)$.
\end{lemma}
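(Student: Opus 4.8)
The plan is to prove both implications directly, exploiting two structural features of generalised safety/reachability games recalled earlier: they are prefix-independent, and their payoff function takes only finitely many values (one per leaf, plus $0$). The latter ensures that the supremum defining $\cVal$ and the infimum defining $\aVal$ are in fact attained by genuine antagonist strategies, which I will use freely in the backward direction.

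For the forward direction, suppose $\sigma_1 \not\preceq \sigma_2$, so there is some $\tau \in \Sigma_{-i}$ with $p(\sigma_1, \tau) > p(\sigma_2, \tau)$. I would consider the two outcomes $\rho_1 = \outcome_{v_0}(\sigma_1,\tau)$ and $\rho_2 = \outcome_{v_0}(\sigma_2,\tau)$. Since they carry different payoffs they must differ, and I take $h$ to be their longest common prefix, which is a genuine (finite) history. At $\last h$ the two outcomes branch; because both plays use the same $\tau$, the branching cannot occur at an antagonist vertex, hence $\last h \in V_i$ and $\sigma_1(h)\neq\sigma_2(h)$, and by construction $h$ is compatible with both strategies. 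It then remains to chain $\cVal(h,\sigma_1) \geq p(\outcome_h(\sigma_1,\tau)) = p(\sigma_1,\tau) > p(\sigma_2,\tau) = p(\outcome_h(\sigma_2,\tau)) \geq \aVal(h,\sigma_2)$, where the outer inequalities are immediate from $\cVal$ being a supremum and $\aVal$ an infimum, and the equalities use that $\rho_1,\rho_2$ pass through $h$ together with prefix-independence.

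For the converse, assume such a history $h$ exists. Since $h$ is compatible with both $\sigma_1$ and $\sigma_2$, these strategies agree with the moves of $h$ at every protagonist vertex strictly inside $h$, so both are consistent with reaching $h$. Using finiteness of the payoff range, I fix an antagonist strategy $\tau_c$ attaining $p(\outcome_h(\sigma_1,\tau_c)) = \cVal(h,\sigma_1)$ and an antagonist strategy $\tau_a$ attaining $p(\outcome_h(\sigma_2,\tau_a)) = \aVal(h,\sigma_2)$. The crux is to assemble a \emph{single} $\tau$ that realises the cooperative value against $\sigma_1$ and simultaneously the antagonistic value against $\sigma_2$. This is possible precisely because $\sigma_1(h) \neq \sigma_2(h)$: the histories extending $h\cdot\sigma_1(h)$ and those extending $h\cdot\sigma_2(h)$ form two disjoint subtrees, so I can set $\tau$ to follow $h$ on the strict prefixes of $h$, to behave as $\tau_c$ on the first subtree, as $\tau_a$ on the second, and arbitrarily elsewhere.

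Against $\sigma_1$, this $\tau$ drives the play along $h$ to $\last h$ and then into the $\sigma_1$-subtree, yielding $p(\sigma_1,\tau) = \cVal(h,\sigma_1)$; symmetrically $p(\sigma_2,\tau) = \aVal(h,\sigma_2)$. The hypothesis $\cVal(h,\sigma_1) > \aVal(h,\sigma_2)$ then gives $p(\sigma_1,\tau) > p(\sigma_2,\tau)$, i.e.~$\sigma_1 \not\preceq \sigma_2$. The main obstacle I expect is not any hard estimate but the well-definedness of $\tau$ as a single function of histories: this is exactly where the disjointness granted by $\sigma_1(h)\neq\sigma_2(h)$ is indispensable, since one antagonist strategy must never be asked to both cooperate and antagonise on the same history.
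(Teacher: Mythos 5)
Your proposal is correct and takes essentially the same route as the paper: the only cosmetic difference is that the paper establishes the forward direction contrapositively (assuming $\cVal(h,\sigma_1) \leq \aVal(h,\sigma_2)$ at every disagreement point and deducing $\sigma_1 \preceq \sigma_2$), whereas you argue it directly via the longest common prefix of the two outcomes under a distinguishing $\tau$ --- logically the mirror image of the same argument. Your backward construction, splicing a cooperative $\tau_c$ and an antagonistic $\tau_a$ on the disjoint subtrees below $h\cdot\sigma_1(h)$ and $h\cdot\sigma_2(h)$ (well-defined exactly because $\sigma_1(h) \neq \sigma_2(h)$), is identical to the paper's definition of $\tau'$, and your explicit justification that the values are attained (finitely many payoffs) makes precise a step the paper leaves implicit.
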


Intuitively, if there is no history where the two strategies disagree, they are in fact equivalent, and if, at every history where they disagree, the best payoff $\sigma_1$ can achieve (that is, $\cVal(h,\sigma_1)$) is less than the one $\sigma_2$ can ensure (that is, $\aVal(h,\sigma_2)$), then $\sigma_1 \preceq \sigma_2$.
On the other hand, if they disagree at a history $h$ and the best payoff $\sigma_1$ can achieve is strictly greater than the one $\sigma_2$ can ensure, then there exist a strategy of the antagonist that will yield exactly these payoffs against $\sigma_1$ and $\sigma_2$ respectively, which means that $\sigma_1 \not \preceq \sigma_2$.
This result follows from the proof of Theorem~$11$ in~\cite{BrenguierPRS16}.
The proof adapted to our setting can be found in the appendix. 

\begin{proof}[Proof of Lemma~\ref{lem:dominance_histories}]
\begin{itemize}
\item[$\Longrightarrow$] Suppose that for every history $h$ compatible with $\sigma_1$ and $\sigma_2$ such that $\last h  \in V_i$ and $\sigma_1(h) \neq \sigma_2(h)$, we have that $\cVal(h,\sigma_1) \leq \aVal(h,\sigma_2)$.
We show that $\sigma_1 \preceq \sigma_2$.
Let $\tau$ be a strategy of Player $-i$.
Consider $\rho_1 = \outcome( \sigma_1, \tau)$ and $\rho_2 = \outcome(\sigma_2, \tau)$.
If for all prefixes $h' \prefix \rho_1$ such that $\last {h'} \in V_i$, it holds that $\sigma_1({h'}) = \sigma_2(h')$, then in fact $\rho_1 = \rho_2$ and $p_i(\sigma_1, \tau) = p_i(\sigma_2, \tau)$.
Otherwise, let $h$ be the least common prefix of $\rho_1$ and $\rho_2$ such that $\last h  \in V_i$ and $\sigma_1(h) \neq \sigma_2(h)$.
We know that $p_i(\rho_1) \leq \cVal(h, \sigma_1)$ and $p_i(\rho_2) \geq \aVal(h, \sigma_2)$ since $h \prefix \rho_1$ and $h \prefix \rho_2$.
As $\cVal(h,\sigma_1) \leq \aVal(h,\sigma_2)$, we have that $p_i(\sigma_1, \tau) \leq p_i(\sigma_2, \tau)$.
Thus, for every $\tau \in \Sigma_{-i}$, it holds that $p_i(\sigma_1, \tau) \leq p_i(\sigma_2, \tau)$, that is, $\sigma_1 \preceq \sigma_2$.

\item[$\Longleftarrow$] Let $h$ be a history compatible with $\sigma_1$ and $\sigma_2$ such that $\last h \in V_i$, $\sigma_1(h) \neq \sigma_2(h)$ and $\cVal(h,\sigma_1) > \aVal(h,\sigma_2)$.
Then, there exists two strategies $\tau_1$ and $\tau_2$ of player $-i$ such that $p_i(h,\sigma_1, \tau_1) = \cVal(h,\sigma_1)$ and $p_i(h,\sigma_2, \tau_2) = \aVal(h,\sigma_2)$.
Let $\tau$ be a strategy of player $-i$ compatible with $h$, and define $\tau' (h') = \begin{cases}
\tau_1(h')   \text{~if~} h\sigma_1(h) \prefix h', \\
\tau_2(h')   \text{~if~} h\sigma_2(h) \prefix h', \\
\tau(h') \text{~otherwise.~}
\end{cases}
$

The strategy $\tau'$ is well defined, as $\sigma_1(h) \neq \sigma_2(h)$.
Furthermore, we have that $p_i(\sigma_1, \tau') = p_i(h,\sigma_1, \tau_1) = \cVal(h,\sigma_1) > \aVal(h,\sigma_2) = p_i(h,\sigma_2, \tau_2) = p_i(\sigma_2, \tau')$, since generalised safety/reachability games are prefix-independent.
Thus, $\sigma_1 \not \preceq \sigma_2$.
\end{itemize}

\end{proof}

\medskip

We call such a history $h$ a \emph{non-dominance witness} of $\sigma_1$ by $\sigma_2$. The existence of non-dominance witnesses allows us to conclude that in generalised safety/reachability games, all increasing chains are countable (not just those comprised of finite memory strategies).

\begin{corollary}
If $(\sigma_\beta)_{\beta < \alpha}$ is an increasing chain in generalised safety/reachability game, then $\alpha$ is countable.
\begin{proof}
Assume that a history $h$ is a witness of non-dominance of $\sigma_2$ by $\sigma_1$, and of $\sigma_3$ by $\sigma_2$, but not of $\sigma_1$ by $\sigma_2$ or $\sigma_2$ by $\sigma_3$. Then $\cVal(h,\sigma_2) > \aVal(h,\sigma_1)$, $\cVal(h,\sigma_3) > \aVal(h,\sigma_2)$, $\cVal(h,\sigma_1) \leq \aVal(h,\sigma_2)$ and $\cVal(h,\sigma_2) \leq \aVal(h,\sigma_3)$. It follows that $\aVal(h,\sigma_1) < \aVal(h,\sigma_3)$ and $\cVal(h,\sigma_1) < \cVal(h,\sigma_3)$. Thus, if there are $k$ different possible values, then any increasing chain of strategies using $h$ as witness of non-dominance between them can have length at most $2k - 1$.

But if there were an uncountably long increasing chain, by the pigeon hole principle it would have an uncountably long subchain where all non-dominance witnesses in the reverse direction are given by the same history.
\end{proof}
\end{corollary}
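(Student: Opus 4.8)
The plan is to show that any increasing chain $(\sigma_\beta)_{\beta < \alpha}$ is countable by exploiting the non-dominance witnesses guaranteed by Lemma~\ref{lem:dominance_histories} together with a pigeonhole argument over the \emph{countably many} histories of the game. First I would observe that since $(\sigma_\beta)_{\beta < \alpha}$ is strictly increasing, for every $\beta < \gamma$ we have $\sigma_\beta \prec \sigma_\gamma$, in particular $\sigma_\gamma \not\preceq \sigma_\beta$; so Lemma~\ref{lem:dominance_histories} supplies a history $h_{\gamma,\beta}$ witnessing the non-dominance of $\sigma_\gamma$ by $\sigma_\beta$. The key observation is that $\history(\game)$ is countable (the graph is finite, so histories are finite words over a finite vertex set), hence if $\alpha$ were uncountable, the map sending each pair to its witness history would force, by the pigeonhole principle, an uncountable subchain all of whose pairwise non-dominance witnesses in the reverse direction use one and the same history $h$.

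Having isolated a single history $h$, the second step is to bound the length of any increasing chain whose consecutive non-dominance witnesses are all $h$. This is exactly the computation carried out in the statement's own reasoning: if $h$ witnesses the non-dominance of $\sigma_{j+1}$ by $\sigma_j$ but, by maximality/totality of the chain, $h$ is \emph{not} a witness in the reverse direction (i.e.~$\sigma_j \preceq \sigma_{j+1}$ so $\cVal(h,\sigma_j) \leq \aVal(h,\sigma_{j+1})$), then combining $\cVal(h,\sigma_{j+1}) > \aVal(h,\sigma_j)$ with the reverse inequalities across three consecutive strategies yields $\aVal(h,\sigma_{j}) < \aVal(h,\sigma_{j+2})$ and $\cVal(h,\sigma_{j}) < \cVal(h,\sigma_{j+2})$. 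Thus along the subchain the pair $(\aVal(h,\cdot),\cVal(h,\cdot))$ strictly increases every two steps, and since there are only $k$ possible payoff values in a generalised safety/reachability game, these values can increase only finitely often. Concretely the subchain using $h$ has length at most $2k-1$, which is finite and in particular countable.

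The main obstacle I anticipate is making the reduction to a single witness history rigorous: a priori each \emph{pair} $(\beta,\gamma)$ gets its own witness, and I need to pass to a subchain on which the witnessing is coherent enough that the $2k-1$ bound applies. The clean way to do this is the contrapositive phrasing already suggested: \emph{assume} an uncountable increasing chain exists; since each of uncountably many consecutive non-dominance witnesses lies in the countable set $\history(\game)$, some single history $h$ must serve as witness for uncountably many of them, producing an uncountable sub-collection of strategies whose consecutive non-dominances are all witnessed by $h$. But the length-bound argument shows no chain can have more than $2k-1$ strategies linked by a common witness $h$, contradicting uncountability. I would take care that the inequalities $\cVal(h,\sigma_1) < \cVal(h,\sigma_3)$ and $\aVal(h,\sigma_1) < \aVal(h,\sigma_3)$ follow correctly from the direction of the witnessing (witness of $\sigma_2$ by $\sigma_1$ gives $\cVal(h,\sigma_2) > \aVal(h,\sigma_1)$, while non-witness in the other direction gives $\cVal(h,\sigma_1) \leq \aVal(h,\sigma_2)$), since it is the interleaving of these four inequalities that forces strict growth and hence finiteness.
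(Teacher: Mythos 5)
Your proposal is correct and follows essentially the same route as the paper's own proof: extract witnesses of non-dominance in the reverse direction via Lemma~\ref{lem:dominance_histories}, use the pigeonhole principle over the countable set $\history(\game)$ to pass to an uncountable subchain sharing a single witness history $h$, and derive a contradiction from the $2k-1$ length bound obtained by interleaving the inequalities $\cVal(h,\sigma_{j+1}) > \aVal(h,\sigma_j)$ and $\cVal(h,\sigma_j) \leq \aVal(h,\sigma_{j+1})$ to force strict growth of the values at $h$. Your added care about why $\history(\game)$ is countable and about the direction of the witnessing only makes explicit what the paper's terser write-up leaves implicit.
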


As we only handle countable chains, in the following we use the usual notation $(\sigma_n)_{n \in \N}$ to index chains.

The following lemma states that we can also extract witnesses for a strategy to be non-maximal (non-admissible or strictly dominated):

\begin{lemma}\label{lem:strict_dominance}
Let $\game$ be a generalised safety/reachability game and $\sigma$ a strategy of Player $i$.
The strategy $\sigma$ is not admissible if, and only if there exists a history $h$ compatible with $\sigma$ such that $\aVal(h,\sigma) \leq \cVal(h, \sigma) \leq \aVal(h) \leq acVal(h)$ where at least one inequality is strict.
\end{lemma}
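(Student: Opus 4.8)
The plan is to characterize admissibility in generalised safety/reachability games by relating it to the antagonistic and cooperative values, using Lemma~\ref{lem:dominance_histories} as the main tool. Recall that $\sigma$ is not admissible iff there exists some $\sigma'$ with $\sigma \prec \sigma'$, i.e.\ $\sigma \preceq \sigma'$ together with $\sigma \not\succeq \sigma'$. The strategy is to show that the existence of such a dominating $\sigma'$ is equivalent to $\sigma$ being suboptimal in the precise quantitative sense captured by the chain of inequalities $\aVal(h,\sigma) \leq \cVal(h,\sigma) \leq \aVal(h) \leq acVal(h)$, where one of them is strict. The first two inequalities always hold ($\aVal(h,\sigma) \leq \cVal(h,\sigma)$ by definition, and $\cVal(h,\sigma) \leq acVal(h)$ by well-formedness), so the content is really in detecting when $\sigma$ fails to attain the optimal $\aVal(h)$ at some reachable history, or attains it but fails to reach $acVal(h)$ cooperatively.

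For the ``only if'' direction I would start from a witness history $h$ of non-dominance in the reverse direction: since $\sigma$ is not admissible, there is $\sigma' \succ \sigma$, and by Lemma~\ref{lem:dominance_histories} applied to $\sigma \not\succeq \sigma'$ (the strict part), there exists a compatible history $h$ with $\last h \in V_i$, $\sigma(h) \neq \sigma'(h)$, and $\cVal(h,\sigma') > \aVal(h,\sigma)$. I would then argue that at this $h$ the strategy $\sigma$ must be deficient: because $\sigma \preceq \sigma'$ globally, $\sigma'$ does at least as well everywhere, and the strict gap at $h$ forces either $\aVal(h,\sigma) < \aVal(h)$ (so $\sigma$ is not worst-case optimal from $h$, giving $\cVal(h,\sigma) \le \aVal(h)$ with the relevant strict inequality) or, if $\sigma$ is worst-case optimal from $h$, that it fails to be cooperative optimal, yielding $\cVal(h,\sigma) < acVal(h)$. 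The bookkeeping here is to translate the value gap into exactly one of the two forms of strictness in the stated chain.

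For the ``if'' direction, assume such an $h$ with a strict inequality exists; I would construct an explicitly better strategy $\sigma'$ that \emph{agrees with $\sigma$ off the cone above $h$} and, from $h$, switches to a worst-case cooperative optimal strategy attaining $\aVal(h)$ and $acVal(h)$ (this exists by well-formedness and finite-memory determinacy, cf.\ the remarks following Definition~\ref{def:gen_safety_reach_game}). Using prefix-independence and Lemma~\ref{lem:dominance_histories}, I would verify that $\sigma \preceq \sigma'$ (at every disagreement history, $\sigma'$ guarantees at least what $\sigma$ can hope for) and that the strict inequality at $h$ produces an antagonist response separating their payoffs, so $\sigma \prec \sigma'$ and $\sigma$ is not admissible.

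The main obstacle I expect is the ``only if'' direction: extracting, from the mere existence of some dominating $\sigma'$, a \emph{single} history $h$ at which $\sigma$ is value-deficient in one of the two prescribed ways. The difficulty is that $\sigma'$ may improve over $\sigma$ gradually across several histories, so I must argue that the \emph{first} point of divergence (or a carefully chosen witness given by Lemma~\ref{lem:dominance_histories}) already exhibits the value gap, and then case-split cleanly on whether the deficiency is antagonistic ($\aVal(h,\sigma) < \aVal(h)$) or cooperative ($\cVal(h,\sigma) < acVal(h)$ while worst-case optimal). Handling the boundary case where $\sigma$ is worst-case optimal but the gap is purely cooperative, without double-counting, is where I would be most careful.
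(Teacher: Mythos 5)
Your overall plan coincides with the paper's own proof: for the ``if'' direction the paper likewise builds $\sigma' = \sigma[h \leftarrow \sigma'_h]$ where $\sigma'_h$ attains both $\aVal(h)$ and $\acVal(h)$, and for the ``only if'' direction it likewise extracts a disagreement history via Lemma~\ref{lem:dominance_histories} and case-analyses the values. Your ``if'' sketch is sound. However, your ``only if'' direction has a genuine gap at exactly its crux. First, your framing claim is false: you assert that ``the first two inequalities always hold,'' citing $\cVal(h,\sigma) \leq \acVal(h)$ ``by well-formedness.'' That is not one of the chain's inequalities, and it fails in general: $\acVal(h)$ only bounds the cooperative values of \emph{worst-case optimal} strategies, and a risky strategy can have $\cVal(h,\sigma) > \acVal(h)$. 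The actual middle inequality of the chain, $\cVal(h,\sigma) \leq \aVal(h)$, is the substantive condition of the lemma, not an automatic one, and your sketch never derives it.

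Concretely, the failing step is the parenthetical inference that $\aVal(h,\sigma) < \aVal(h)$ ``gives $\cVal(h,\sigma) \leq \aVal(h)$'': failing to be worst-case optimal from $h$ says nothing about the cooperative value, which can be arbitrarily large. The paper obtains the middle inequality from the hypothesis $\sigma \preceq \sigma'$ instead: by the contrapositive of Lemma~\ref{lem:dominance_histories}, no disagreement history can be a non-dominance witness of $\sigma$ by $\sigma'$, so at the chosen $h$ one has $\cVal(h,\sigma) \leq \aVal(h,\sigma') \leq \aVal(h)$, which establishes the chain unconditionally at $h$. Strictness is then forced by contradiction rather than by your dichotomy: if all four values were equal, then $\aVal(h,\sigma') \geq \cVal(h,\sigma) = \aVal(h)$ makes $\sigma'$ worst-case optimal at $h$, whence $\cVal(h,\sigma') \leq \acVal(h) = \aVal(h)$, contradicting the witness property $\cVal(h,\sigma') > \aVal(h,\sigma) = \aVal(h)$. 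Your second branch (``worst-case optimal but cooperatively deficient'') is subsumed by this argument; note that in that branch one in fact has $\cVal(h,\sigma) = \aVal(h)$, so the strictness lands on $\aVal(h) < \acVal(h)$, not merely on $\cVal(h,\sigma) < \acVal(h)$ as you state. The repair is local, since the missing step comes from the very lemma you cite (and you use the analogous fact when verifying $\sigma \preceq \sigma'$ in the other direction), but as written the key inequality $\cVal(h,\sigma) \leq \aVal(h)$ is unsupported and your case split does not by itself produce the stated chain.
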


This result is a reformulation of Theorem~$11$ in~\cite{BrenguierPRS16} catered to our context and with a focus on the non-admissibility rather than on admissibility. For the proof, we use the following notation: Given two strategies $\sigma, \sigma'$ of Player~$i$, and a history $h$, $\sigma[h \leftarrow \sigma']$ denotes the strategy that follows $\sigma$ and \emph{shifts} to $\sigma'$ at history $h$.
Formally, for all histories $h' \in \history(\game)$, the strategy $\sigma[h \leftarrow \sigma']$ is such that:
\[ \sigma[h \leftarrow \sigma'](h') =
\begin{cases}
\sigma'(h') \text{~if~} h \prefix h', \\
\sigma(h') \text{~otherwise~}.

\end{cases}
\]

\begin{proof}[Proof of Lemma \ref{lem:strict_dominance}]
\begin{itemize}
\item[$\Leftarrow$] Let $\sigma'_h$ be a strategy compatible with $h$ such that $\aVal(h,\sigma'_h) = \aVal(h)$ and $\cVal(h,\sigma'_h) = \acVal(h)$.
Let $\sigma'$ be the strategy that follows $\sigma$ and switches to $\sigma'_h$ at history $h$, that is, $\sigma'= \sigma[h \rightarrow \sigma'_h]$.
We claim that $\sigma \prec \sigma'$.
Indeed, we know that $\aVal(h,\sigma) \leq \cVal(h, \sigma) \leq \aVal(h) \leq \acVal(h)$.
For every strategy $\tau \in \Sigma_{-i}$, if $h \not \prefix \outcome {\sigma, \tau}$, by definition of $\sigma'$, we have that $\outcome{\sigma', \tau} = \outcome{\sigma,\tau}$, thus $p_i(\sigma, \tau) = p_i(\sigma', \tau)$.
If $h  \prefix \outcome {\sigma, \tau}$, then $h  \prefix \outcome {\sigma', \tau}$ and $\outcome{\sigma', \tau} = \outcome{h,\sigma'_h,\tau}$.
Hence $\aVal(h,\sigma) \leq p_i(\sigma,\tau) \leq \cVal(h,\sigma) \leq \aVal(h,\sigma'_h)= \aVal(h,\sigma') \leq p_i(\sigma',\tau) \leq \cVal(h,\sigma'_h) = \cVal(\sigma',\tau)$ .
Thus, $\sigma \preceq \sigma'$.
Furthermore, there is a strict inequality in $\aVal(h,\sigma) \leq \cVal(h, \sigma) \leq \aVal(h) \leq \acVal(h)$.
Hence, there exists $\tau$ such that $h \prefix \outcome {\sigma, \tau}$ and $p_i(\sigma,\tau) <p_i(\sigma',\tau)$.
\item[$\Rightarrow$]
Let $\sigma'$ be such that $\sigma \prec \sigma'$.
In particular, $\sigma \preceq \sigma'$.
By Lemma~\ref{lem:dominance_histories}, we know that there exists a history $h$ compatible with $\sigma$ and $\sigma'$ such that $\last h \in V_i$, $\sigma(h) \neq \sigma(h)$ and $\cVal(h,\sigma) \leq \aVal(h,\sigma')$.
Since the domination is strict between $\sigma$ and $\sigma'$, we further know that the sequence of inequalities $\aVal(h,\sigma)\leq \cVal(h,\sigma) \leq \aVal(h,\sigma')\leq \cVal(h,\sigma')$  with at least one strict inequality holds.
Towards contradiction, we assume that the chain of inequalities $\aVal(h,\sigma) \leq \cVal(h, \sigma) \leq \aVal(h) \leq \acVal(h)$ where at least one inequality is strict does not hold.
That is, either $\aVal(h,\sigma) = \cVal(h, \sigma) = \aVal(h) = \acVal(h)$ or $\cVal(h,\sigma) > \aVal(h)$.
Suppose that $\aVal(h,\sigma) = \cVal(h, \sigma) = \aVal(h) = \acVal(h)$.
As $\aVal(h,\sigma') \leq \aVal(h)$, we have that $\aVal(h,\sigma)= \cVal(h,\sigma) = \aVal(h,\sigma') < \cVal(h,\sigma')$.
Since $\sigma'$ guarantees $\aVal(h)$, we know that $\cVal(h,\sigma') \leq \acVal(h)$.
Thus $\acVal(h)=\aVal(h) < \cVal(h,\sigma') \leq \acVal(h)$, which is a contradiction.
Suppose now that $\cVal(h,\sigma) > \aVal(h)$.
As $\aVal(h,\sigma') \leq \aVal(h)$, it follows that $\cVal(h,\sigma) > \aVal(h, \sigma')$, which contradicts the fact that $\cVal(h,\sigma) \leq \aVal(h,\sigma')$.

\end{itemize}
\end{proof}

\begin{definition}
\label{def:preadmissible}
Call a history $h$ as in Lemma \ref{lem:strict_dominance} a non-admissibility witness for $\sigma$. Call $\sigma$ preadmissible, if for every non-admissibility witness $hv$ of $\sigma$ we find that $h = h'vh''$ with $\aVal(h'v,\sigma) = \aVal(h'v)$ and $\cVal(h'v,\sigma) = acVal(h'v)$.
\end{definition}

While a preadmissible strategy may fail to be admissible, it is not possible to improve upon it the first time it enters some vertex. Only when returning to a vertex later it may make suboptimal choices. Moreover, before a dominated choice is possible at a vertex, previously both the antagonistic and the antagonistic-cooperative value were realized at that vertex by the preadmissible strategy.

\begin{lemma}
\label{lemma:belowpreadmissible}
In a generalised safety/reachability game, every strategy is either preadmissible or dominated by a preadmissible strategy.
\begin{proof}
For each vertex $v$ in the game, we fix a finite memory strategy $\tau^v$ that realizes $\aVal(v)$ and $acVal(v)$. Note that since generalised safety/reachability games are prefix independent, values depend only on the current vertex, but not on the entire history.

We start with a finite memory strategy $\sigma$. If it is not already preadmissible, then it has witnesses of non-admissibility violating the desired property. Whether a history $h$ is a witness of non-admissibility for a finite memory strategy $\sigma$ depends only on the last vertex of $h$ and the current state of $\sigma$. We now modify $\sigma$ such that whenever $\sigma$ is in a combination of vertex $v$ and state $s$ corresponding to a problematic witness of non-admissibility, the new strategy $\sigma'$ moves to playing $\tau^v$ instead. The choices of $v$, $s$ and $\tau^k$ ensure that $\sigma'$ dominates $\sigma$.

The new strategy $\sigma'$ may fail to be preadmissible, again, and we repeat the construction. Now any problematic history in $\sigma'$ needs to enter the automaton for some $\tau^v$ at some point. By choice of $\tau^v$, the history where $\tau^v$ has just been entered cannot be a witness of non-admissibility. It follows that a problematic history entering $\tau^v$ cannot end in $v$. Repeating the updating process for at most as many times as there are vertices in the game graph will yield a preadmissible finite memory strategy dominating $\sigma$.
\end{proof}
\end{lemma}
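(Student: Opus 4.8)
The plan is to carry out the argument within finite-memory strategies, as we have already restricted to these, and to lean heavily on prefix-independence: since the payoff depends only on which leaf (if any) is eventually reached, the values $\aVal(h)$, $\acVal(h)$, $\aVal(h,\sigma)$ and $\cVal(h,\sigma)$ all depend on the history $h$ only through its last vertex together with the current memory state of $\sigma$. By Lemma~\ref{lem:strict_dominance}, whether $h$ is a non-admissibility witness for a finite-memory $\sigma$ is therefore determined by the pair $(\last h, s)$, with $s$ the state $\sigma$ reaches after $h$; there are only finitely many such pairs. As auxiliary data I would first fix, for every vertex $v$, a finite-memory strategy $\tau^v$ that is worst-case cooperative optimal from $v$, i.e.\ $\aVal(v,\tau^v)=\aVal(v)$ and $\cVal(v,\tau^v)=\acVal(v)$; these exist because generalised safety/reachability games are well-formed and finite-memory determined.

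The core of the construction is a grafting step. Starting from $\sigma$, I would locate every configuration $(v,s)$ that realizes a non-admissibility witness violating the preadmissibility condition of Definition~\ref{def:preadmissible}, and redefine the strategy so that upon first reaching such a configuration it abandons $\sigma$ and plays $\tau^v$ (from $\tau^v$'s initial memory) forever after; call the result $\sigma'$. That $\sigma'$ dominates $\sigma$ follows by the reasoning of the $\Leftarrow$ direction of Lemma~\ref{lem:strict_dominance}: for any antagonist strategy the two plays coincide until the first such $h$ (ending in $v$) is reached, and there the witness chain $\aVal(h,\sigma)\le\cVal(h,\sigma)\le\aVal(h)=\aVal(h,\tau^v)\le\cVal(h,\tau^v)=\acVal(h)$ forces $p(\sigma,\tau)\le\cVal(h,\sigma)\le\aVal(h)\le p(\sigma',\tau)$. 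Since at most one switch occurs along any single play, these local improvements combine into $\sigma\preceq\sigma'$, with strictness witnessed at a genuine violation.

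The main obstacle is termination, since $\sigma'$ may again fail to be preadmissible, so the grafting must be iterated and the number of rounds bounded. The crucial structural fact is that the history at which some $\tau^v$ has just been entered (which ends in $v$) can never itself be a non-admissibility witness: there $\aVal=\aVal(v)=\aVal(h)$ and $\cVal=\acVal(v)=\acVal(h)$, so the chain of Lemma~\ref{lem:strict_dominance} degenerates to equalities and no inequality is strict. Consequently every surviving violating witness of $\sigma'$ must enter some grafted copy of $\tau^v$, and such a witness cannot end in the corresponding entry vertex $v$. This is the engine of progress: each round enlarges the set of vertices that are safe as endpoints of violating witnesses, so after at most $|V|$ rounds no violating witness remains and the resulting finite-memory strategy is preadmissible and dominates $\sigma$. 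The delicate bookkeeping I expect to be the hardest part is verifying that grafting to protect one vertex does not resurrect violations at a vertex already protected in an earlier round; this is exactly what the ``cannot end in $v$'' property, applied uniformly to all grafted configurations within a single round, is designed to guarantee.
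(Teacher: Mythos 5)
Your proposal is correct and takes essentially the same route as the paper's own proof: fix a finite-memory worst-case cooperative optimal $\tau^v$ for each vertex, graft it at the finitely many problematic configurations $(v,s)$ (which is well defined since, by prefix-independence, whether a history is a non-admissibility witness for a finite-memory strategy depends only on $\last{h}$ and the memory state), derive domination from the witness inequalities of Lemma~\ref{lem:strict_dominance}, and iterate at most $|V|$ times because a history that has just entered some $\tau^v$ cannot itself be a non-admissibility witness, so surviving violations cannot end at the entry vertex. Even the bookkeeping you flag as delicate --- that grafting for one vertex does not resurrect violations at already-protected vertices --- is treated in the paper at exactly the same (informal) level of detail, so your write-up matches the published argument in both structure and rigor.
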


\begin{lemma}
\label{lemma:nonadmissibilitynondominance}
If $h$ is not a witness of non-admissibility of $\sigma$, and not a witness of non-dominance of $\sigma$ by $\tau$, then $h$ is not a witness of non-dominance of $\tau$ by $\sigma$.
\end{lemma}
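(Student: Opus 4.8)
The plan is to prove the contrapositive's structure directly by unpacking the definitions of the two relevant witness notions (Lemma~\ref{lem:dominance_histories} for non-dominance, Lemma~\ref{lem:strict_dominance} for non-admissibility) at the history $h$, and to argue that the value inequalities they provide are mutually incompatible in the stated configuration. Concretely, I would assume the two hypotheses and show that $h$ cannot be a non-dominance witness of $\tau$ by $\sigma$, i.e.~that the defining condition $\cVal(h,\tau) > \aVal(h,\sigma)$ fails, so that $\cVal(h,\tau) \leq \aVal(h,\sigma)$.

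First I would record what each hypothesis gives. Since $h$ is \emph{not} a witness of non-dominance of $\sigma$ by $\tau$, Lemma~\ref{lem:dominance_histories} tells us that the witness condition fails at $h$: either the strategies agree at $h$, or $\cVal(h,\sigma) \leq \aVal(h,\tau)$. The more delicate hypothesis is that $h$ is \emph{not} a witness of non-admissibility of $\sigma$; by Lemma~\ref{lem:strict_dominance} (read at the history $h$ rather than globally, using prefix-independence so values depend only on the current vertex) this means the chain of inequalities $\aVal(h,\sigma) \leq \cVal(h,\sigma) \leq \aVal(h) \leq acVal(h)$ with a strict inequality does \emph{not} hold at $h$. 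As in the proof of Lemma~\ref{lem:strict_dominance}, negating this yields that either all four quantities coincide, or $\cVal(h,\sigma) > \aVal(h)$. I would then combine these: the key chain of inequalities to exploit is $\cVal(h,\tau) \leq acVal(h)$ (valid whenever $\tau$ guarantees the antagonistic value, which one gets from worst-case optimality along the relevant prefix) together with $\aVal(h) \leq \aVal(h,\sigma)$ coming from the negated non-admissibility condition, to squeeze $\cVal(h,\tau)$ below $\aVal(h,\sigma)$.

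I expect the main obstacle to be correctly handling the two disjuncts coming from the failure of non-admissibility, and making sure the values of $\tau$ are controlled: we have direct information about $\sigma$ from both hypotheses, but information about $\tau$ only through the non-dominance of $\sigma$ by $\tau$ (giving $\cVal(h,\sigma) \leq \aVal(h,\tau)$) and through general value bounds such as $\aVal(h,\tau) \leq \cVal(h,\tau)$ and $\cVal(h,\tau) \leq acVal(h)$. The case $\cVal(h,\sigma) > \aVal(h)$ must be shown impossible or to force agreement, since otherwise one could not close the argument; here I would use that $\aVal(h,\tau) \leq \aVal(h)$ contradicts $\cVal(h,\sigma) \leq \aVal(h,\tau)$ when $\cVal(h,\sigma) > \aVal(h)$, ruling that disjunct out and leaving only the degenerate case where all of $\sigma$'s values equal $\aVal(h)=acVal(h)$. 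In that surviving case I would conclude $\cVal(h,\tau) \leq acVal(h) = \aVal(h,\sigma)$, so that $h$ fails the non-dominance-of-$\tau$-by-$\sigma$ witness condition and the claim follows.
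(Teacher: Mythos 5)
Your proposal is correct and takes essentially the same route as the paper's proof: both note that the candidacy conditions (compatibility, disagreement at $h$) are symmetric, use the failed non-dominance of $\sigma$ by $\tau$ to obtain $\cVal(h,\sigma) \leq \aVal(h,\tau) \leq \aVal(h)$ --- which is exactly how the paper implicitly rules out your disjunct $\cVal(h,\sigma) > \aVal(h)$, forcing all values in the chain $\aVal(h,\sigma) \leq \cVal(h,\sigma) \leq \aVal(h,\tau) \leq \aVal(h) \leq \acVal(h)$ to coincide --- and then conclude $\cVal(h,\tau) \leq \acVal(h) = \aVal(h,\sigma)$ from the worst-case optimality of $\tau$ at $h$. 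Your explicit two-case split on the negated non-admissibility condition is merely a reorganization of the paper's single equality-chain argument, with all the same key steps present.
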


\begin{proof}[Proof of Lemma \ref{lemma:nonadmissibilitynondominance}]
To even be a candidate for a witness of non-dominance of $\tau$ by $\sigma$, $h$ needs to be compatible with $\sigma$ and $\tau$ and satisfy $\last h \in V_1$, $\sigma(h) \neq \tau(h)$. Not being a witness of non-dominance of $\sigma$ by $\tau$ then implies $\cVal(h,\sigma) \leq \aVal(h,\tau)$. It follows in particular that $\aVal(h,\sigma) \leq \cVal(h,\sigma) \leq \aVal(h,\tau) \leq \aVal(h) \leq \acVal(h)$. The only way $h$ can not be a witness of non-admissibility of $\sigma$ is if $\aVal(h,\sigma) = \cVal(h,\sigma) = \aVal(h,\tau) = \aVal(h) = \acVal(h)$. Since $\aVal(h,\tau) = \aVal(h)$, it follows that $\cVal(h,\tau) \leq \acVal(h,\tau) = \aVal(\sigma)$, i.e.~$h$ is not a witness of non-dominance of $\tau$ by $\sigma$.
\end{proof}

\begin{lemma}
\label{lemma:equalvalues}
Given an initialized game with initial vertex $v_0$, the following holds:
If for two strategies $\sigma$ and $\tau$ it holds that for any maximal history $h$ compatible with both, there is a prefix $h'$ with $\aVal(h',\sigma) = \aVal(h',\tau)$ and $\cVal(h',\sigma) = \cVal(h',\tau)$, then $\aVal(v_0,\sigma) = \aVal(v_0,\tau)$ and $\cVal(v_0,\sigma) = \cVal(v_0,\tau)$.
\end{lemma}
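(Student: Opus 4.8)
The plan is to reduce the claim to the two \emph{symmetric} inequalities $\aVal(v_0,\sigma) \le \aVal(v_0,\tau)$ and $\cVal(v_0,\sigma) \le \cVal(v_0,\tau)$. Since the hypothesis is unchanged when $\sigma$ and $\tau$ are swapped (both the notion of maximal common history and the equalities $\aVal(h',\sigma)=\aVal(h',\tau)$, $\cVal(h',\sigma)=\cVal(h',\tau)$ are symmetric), proving these two inequalities immediately yields the reverse ones, hence the two desired equalities.

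First I would pin down the combinatorial content of ``maximal history compatible with both''. A history is compatible with both $\sigma$ and $\tau$ exactly when the two strategies make the same move at every proper prefix ending in $V_i$, and such a history is maximal precisely when its last vertex lies in $V_i$ and $\sigma(h)\neq\tau(h)$; call these the \emph{splitting points}. The key observation is a dichotomy for plays: fixing an antagonist strategy $\tau'$, the play $\pi=\outcome(\tau,\tau')$ either resolves all of its $V_i$-prefixes identically under $\sigma$ and $\tau$, in which case $\pi=\outcome(\sigma,\tau')$ as well (an \emph{agreement branch}); or it has a unique shortest splitting-point prefix $h$, after which it continues with the move of whichever strategy generated it. When $h$ exists it is a maximal common history, so the hypothesis supplies a prefix $h'\prefix h$ with $\aVal(h',\sigma)=\aVal(h',\tau)$ and $\cVal(h',\sigma)=\cVal(h',\tau)$, and since $h'\prefix h$ lies in the common tree, $h'$ is itself compatible with both strategies.

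Next I would record the elementary monotonicity facts that do the real work: if $h$ is compatible with $\sigma$, then $\aVal(h,\sigma)\ge\aVal(v_0,\sigma)$ and $\cVal(h,\sigma)\le\cVal(v_0,\sigma)$, because the outcomes carrying $h$ as a prefix form a subset of all outcomes of $\sigma$, so an infimum can only rise and a supremum only fall; together with the bracketing $\aVal(h,\sigma)\le p_i(\outcome_h(\sigma,\tau'))\le\cVal(h,\sigma)$ for every $\tau'$. Note that whenever $h'\prefix\pi$ the play $\pi$ literally equals $\outcome_{h'}(\cdot,\tau')$, so its payoff is exactly the quantity bracketed by the values at $h'$. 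With these in hand the antagonistic inequality follows per adversary: for arbitrary $\tau'$ put $\pi=\outcome(\tau,\tau')$; on an agreement branch $\pi=\outcome(\sigma,\tau')$ gives $\aVal(v_0,\sigma)\le p_i(\pi)$ directly, while otherwise the good prefix $h'$ yields the chain $\aVal(v_0,\sigma)\le\aVal(h',\sigma)=\aVal(h',\tau)\le p_i(\pi)$, the last step because $\pi=\outcome_{h'}(\tau,\tau')$ is one response against $\tau$ from $h'$. Taking the infimum over $\tau'$ gives $\aVal(v_0,\sigma)\le\aVal(v_0,\tau)$. The cooperative inequality is the mirror image: for arbitrary $\tau'$ put $\pi=\outcome(\sigma,\tau')$ and obtain either $p_i(\pi)\le\cVal(v_0,\tau)$ on an agreement branch, or $p_i(\pi)\le\cVal(h',\sigma)=\cVal(h',\tau)\le\cVal(v_0,\tau)$ through the good prefix; the supremum over $\tau'$ then gives $\cVal(v_0,\sigma)\le\cVal(v_0,\tau)$.

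I expect the only genuinely delicate point to be the bookkeeping around the dichotomy and compatibility: one must verify that the shortest disagreement prefix $h$ really is a \emph{maximal} common history (so the hypothesis applies), that the selected prefix $h'$ is compatible with both strategies (so the monotonicity facts are available at $h'$), and that $\pi$ genuinely factors as $\outcome_{h'}(\cdot,\tau')$ so that its payoff is controlled by the value at $h'$. None of this needs the $\epsilon$-approximations typical of value arguments: in a generalised safety/reachability game payoffs range over the finite set $\{0\}\cup\{n_\ell\mid \ell\in L\}$, so every infimum and supremum above is attained and all the inequalities stay exact.
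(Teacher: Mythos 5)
Your proof is correct and follows essentially the same route as the paper's: both arguments play a single antagonist strategy against $\sigma$ and $\tau$ simultaneously, observe that the longest common prefix of the two resulting plays (your ``shortest splitting-point prefix'') is a maximal history compatible with both, and transfer the value equalities through the prefix $h'$ supplied by the hypothesis --- you merely phrase this directly, as a per-adversary inequality followed by $\inf$/$\sup$ and a symmetry swap, where the paper argues by contradiction via a separating real $r$. Your closing remark about attained infima and suprema in generalised safety/reachability games is superfluous: neither your argument nor the paper's uses attainment, so the lemma holds for arbitrary payoff functions exactly as you proved it.
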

\begin{proof}
Assume for the sake of a contradiction that $\aVal(v_0,\sigma) < \aVal(v_0,\tau)$. Then there is a real $r \in \R$, and a strategy $\pi$ of the antagonist such that for any strategy $\pi'$ of the antagonist $p(\outcome(\sigma,\pi)) < r \leq p(\outcome(\tau,\pi'))$. If $\outcome(\sigma,\pi) = \outcome(\tau,\pi)$, this is clearly impossible. Thus, $\outcome(\sigma,\pi)$ and $\outcome(\tau,\pi)$ have some longest common prefix $h$, which is a maximal history compatible with $\sigma$ and $\tau$ (for it must be the protagonist who behaves differently first in $\outcome(\sigma,\pi)$ and $\outcome(\tau,\pi)$).

By assumption, $h$ has a prefix $h'$ with $\aVal(h',\sigma) = \aVal(h',\tau)$. Now $\aVal(h',\sigma) \leq p(\outcome(\sigma,\pi)) < r$. If $\aVal(h',\tau) < r$, then the antagonist must have a strategy $\pi''$ such that $h'$ is a prefix of $\outcome(\tau,\pi'')$ and $p(\outcome(\tau,\pi'')) < r$. But that contradictions $r \leq p(\outcome(\tau,\pi'))$ holding for all $\pi'$.

The proof for the cooperative value in place of the antagonistic one proceeds analogously.
\end{proof}

\begin{lemma}
\label{lemma:valuescoincide}Given an initialized game with initial vertex $v_0$, the following holds:
If $\sigma$ is preadmissible and $\sigma \preceq \tau$, then $\aVal(v_0,\sigma) = \aVal(v_0,\tau)$ and $\cVal(v_0,\sigma) = \cVal(v_0,\tau)$.
\end{lemma}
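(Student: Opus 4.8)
The plan is to reduce the statement to Lemma~\ref{lemma:equalvalues}, which tells us it suffices to exhibit, for every maximal history $H$ compatible with both $\sigma$ and $\tau$, a prefix at which the antagonistic and cooperative values of $\sigma$ and $\tau$ coincide. Such a maximal common history necessarily ends in a vertex $v=\last{H}\in V_i$ with $\sigma(H)\neq\tau(H)$: at antagonist vertices, and at protagonist vertices where the two strategies agree, the history can still be extended while staying compatible with both. Before the case analysis I record a \emph{local dominance} fact: for any history $P$ compatible with both $\sigma$ and $\tau$ one has $\sigma\preceq_P\tau$. This follows by a forcing argument using prefix-independence: given an antagonist strategy $\pi$ from $P$, build a global antagonist strategy $\hat\pi$ that plays along $P$ on the prefixes of $P$ and switches to $\pi$ afterwards; since $P$ is compatible with both strategies, $\outcome(\sigma,\hat\pi)$ and $\outcome(\tau,\hat\pi)$ both extend $P$ and then follow $\pi$, so prefix-independence gives $p(\sigma,\hat\pi)=p(P,\sigma,\pi)$ and $p(\tau,\hat\pi)=p(P,\tau,\pi)$, and $\sigma\preceq\tau$ yields $p(P,\sigma,\pi)\le p(P,\tau,\pi)$. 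Taking the infimum and the supremum over $\pi$ gives $\aVal(P,\sigma)\le\aVal(P,\tau)$ and $\cVal(P,\sigma)\le\cVal(P,\tau)$.

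Now fix a maximal common history $H$ and split on whether $H$ is a non-admissibility witness of $\sigma$ (Definition~\ref{def:preadmissible}). If it is \emph{not}, I take the prefix to be $H$ itself. Since $\sigma\preceq\tau$, by Lemma~\ref{lem:dominance_histories} $H$ is not a non-dominance witness of $\sigma$ by $\tau$, hence $\cVal(H,\sigma)\le\aVal(H,\tau)$. By Lemma~\ref{lemma:nonadmissibilitynondominance}, $H$ is then not a non-dominance witness of $\tau$ by $\sigma$ either, so (again via Lemma~\ref{lem:dominance_histories}, using $\last{H}\in V_i$ and $\tau(H)\neq\sigma(H)$) we get $\cVal(H,\tau)\le\aVal(H,\sigma)$. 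Chaining these two inequalities with the trivial $\aVal(\cdot)\le\cVal(\cdot)$ for each strategy yields the cycle $\aVal(H,\tau)\le\cVal(H,\tau)\le\aVal(H,\sigma)\le\cVal(H,\sigma)\le\aVal(H,\tau)$, forcing all four quantities to be equal; in particular the values of $\sigma$ and $\tau$ coincide at $H$.

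If $H$ \emph{is} a non-admissibility witness of $\sigma$, preadmissibility of $\sigma$ provides a strictly earlier occurrence of $v$ along $H$, that is, a proper prefix $P\prefix H$ with $\last{P}=v$, $\aVal(P,\sigma)=\aVal(P)$ and $\cVal(P,\sigma)=\acVal(P)$. I take this $P$ as the desired prefix; being a prefix of $H$, it is compatible with both $\sigma$ and $\tau$. The local-dominance fact gives $\aVal(P,\sigma)\le\aVal(P,\tau)$; since $\aVal(P,\sigma)=\aVal(P)$ is the largest attainable antagonistic value and $\aVal(P,\tau)\le\aVal(P)$, we conclude $\aVal(P,\tau)=\aVal(P)=\aVal(P,\sigma)$. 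Thus $\tau$ is worst-case optimal at $P$, so $\cVal(P,\tau)\le\acVal(P)$ by the definition of $\acVal$; on the other hand local dominance gives $\cVal(P,\tau)\ge\cVal(P,\sigma)=\acVal(P)$, so $\cVal(P,\tau)=\acVal(P)=\cVal(P,\sigma)$. Hence the values coincide at $P$, and Lemma~\ref{lemma:equalvalues} concludes the proof.

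I expect the main obstacle to be the second case: correctly using preadmissibility to relocate from the ``bad'' history $H$ (where $\sigma$ may be strictly improvable, so the values need not match there) to an earlier visit $P$ of the same vertex at which $\sigma$ realizes both $\aVal(v)$ and $\acVal(v)$, and then squeezing $\tau$'s values at $P$ between the bounds coming from local dominance and from worst-case optimality. The supporting local-dominance transfer from $v_0$ down to $P$, which crucially relies on prefix-independence of generalised safety/reachability games, is the other delicate point to state carefully.
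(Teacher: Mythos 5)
Your proof is correct and takes essentially the same route as the paper's: a reduction to Lemma~\ref{lemma:equalvalues}, with the same two-case split on whether the maximal common history is a non-admissibility witness, using Lemma~\ref{lem:dominance_histories} and Lemma~\ref{lemma:nonadmissibilitynondominance} in the first case and preadmissibility plus the squeeze $\aVal(P,\tau)=\aVal(P)$, $\cVal(P,\tau)=\acVal(P)$ in the second. The only difference is that you explicitly prove the local-dominance transfer ($\sigma \preceq \tau$ implies $\sigma \preceq_P \tau$ for any history $P$ compatible with both, via prefix-independence), a step the paper uses implicitly when it asserts $\aVal(h',\sigma) \leq \aVal(h',\tau)$; making it explicit is a welcome clarification rather than a deviation.
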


\begin{proof}
We show that the conditions of Lemma \ref{lemma:equalvalues} are satisfied, which will imply our desired conclusion. Consider a maximal history $h$ compatible with both $\sigma$ and $\tau$. First, assume that $h$ is not a witness of non-admissibility of $\sigma$. Since $\sigma \preceq \tau$, by Lemma \ref{lem:dominance_histories} $h$ cannot be a witness of non-dominance of $\sigma$ by $\tau$, i.e.~$\cVal(h,\sigma) \leq \aVal(h,\tau)$. By Lemma \ref{lemma:nonadmissibilitynondominance}, it follows that $h$ is not a witness of non-dominance of $\tau$ by $\sigma$ either, i.e.~$\cVal(h,\tau) \leq \aVal(h,\sigma)$. Put together, we have $\aVal(h,\sigma) = \cVal(h,\sigma) = \aVal(h,\tau) = \cVal(h,\tau)$.

It remains the case where $h$ is a witness of non-admissibility of $\sigma$. Then by preadmissibility of $\sigma$, $h$ has some prefix $h'$ with $\aVal(h',\sigma) = \aVal(h')$ and $\cVal(h',\sigma) = \acVal(h')$. Since $\sigma \preceq \tau$, we must have $\aVal(h',\sigma) \leq \aVal(h',\tau)$, so it follows that $\aVal(h',\sigma) = \aVal(h',\tau)$, and then that $\cVal(h',\tau) \leq \acVal(h') = \cVal(h',\sigma) \leq \cVal(h',\tau)$, i.e.~$\cVal(h',\sigma) = \cVal(h',\tau)$.
\end{proof}

\subsection{Parameterized automata and uniform chains}
\label{subsec:uniformchains}

Let a \emph{parameterized automaton} be a  Mealy automaton that in addition can access
a single counter in the following way:
In a counter-access-state, a transition is chosen based on
whether the counter value is $0$ or not.
Otherwise, the counter is decremented by $1$.

\begin{definition}\label{def:parameterized_automaton}
A \emph{parameterized automaton} for Player $i\in P$ over a game graph $G=(V,E)$ is a tuple $\mathcal{M} = (M, M_C, m_0, V, \mu, \nu)$ where:
\begin{itemize}[topsep=0pt]
\item $M$ is a non-empty finite set of memory states and $M_C \subseteq M$ is the set of \emph{counter-access} states,
\item $m_0$ is the initial memory state,
\item $V$ is the set of vertices of $G$,
\item $\mu : M \times V \times \N \to M \times \N$ is the memory and counter update function,
\item $\nu : M \times V_i \times \N \to V$ is the move choice function for Player $i$, such that $(v, \nu(m,v,k)) \in E$ for all $m \in M$ and $v \in V_i$ and $n \in \N$.
\end{itemize}

The memory and counter-update function $\mu$ respects the following conditions:
for each $m \in M \setminus M_C$, and $v \in V$, there exists $m' \in M$ such that $\mu(m,v,n)=(m',n)$ for all $n \in \N$.
for each $m \in M_C$, and $v \in V$, there exists $m' \in M$ such that $\mu(m,v,n)=(m',n-1)$ for all $n > 0$ and $m'' \in M$ such that $\mu(m,v,0)=(m'',0)$.
The move choice function $\nu$ respects the following conditions:
for each $m \in M \setminus M_C$, and $v \in V_i$, there exists $v' \in V$ such that $\nu(m,v,n)=v'$ for all $n \in \N$.
for each $m \in M_C$, and $v \in V_i$, there exists $m' \in M$ such that $\nu(m,v,n)=(m',n)$ for all $n > 0$ and $m'' \in M$ such that $\nu(m,v,0)=(m'',0)$.
\end{definition}

To ease presentation and understanding, we call transitions that decrement the counter \emph{green} transitions, the transitions only taken when the counter value is $0$  \emph{red} transitions, and the ones that do not depend on the counter value \emph{black} transitions.
This classification between \emph{green}, \emph{red} and \emph{black} transitions extends naturally to the edges of the product $\mathcal{M} \times G$ (that is, the graph with set of vertices $M \times V$ and edges induced by the functions $\mu$ and $\nu$).

Parameterized automata can be seen as a collection of finite Mealy automata, one for each initialization of the counter.
Thus, we say that a parameterized automaton $\mathcal{M}$ realizes a \emph{sequence} of finite-memory strategies $(\sigma_n)_{n \in \N}$.
In the remainder of the paper, we focus on chains realized by parameterized automata:

\begin{definition}\label{def:uniform chain}
Let a chain $(\sigma_n)_{n\in \N}$ of regular strategies be called a \emph{uniform} chain if there is a parameterized
automaton M that realizes $\sigma_n$ if the counter is initialized with the value $n$. If $(\sigma_n)_{n\in \N}$ is maximal for $\sqsubseteq$ amongst the increasing chains comprised of finite memory strategies, we call it a a maximal uniform chain.
\end{definition}

\begin{example}
The \emph{Help-me?} game from Figure~\ref{fig:best_animal} is clearly a generalised safety/reachability game with two leaves.
The chain of strategies $(s_k)_{k \in \N}$ exposed in Example~\ref{ex:best_animal} is a uniform chain, as it is realized by the parameterized automaton that loops $k$ times when its counter is initialized with value $k$.
Figure~\ref{fig:bestanimal_product} shows the product between this parameterized automaton and the game graph.
The green edge corresponds to the transition to take when the counter value is greater than $0$ and should be decremented, while the red edge corresponds to the transition to take when the counter value is  $0$.

\end{example}

\begin{figure}[h!]
\centering
    \begin{tikzpicture}[->,>=stealth',shorten >=1pt, initial text={}]
      \node [initial above ,state] (q1)                      {$q_0$};
      \node[state,rectangle]          (q2) [right=of q1]         {$q_1$};
      \node[state]          (q3) [right=of q2]         {$\ell(2)$};
     \node[state]          (q4) [left=of q1]         {$\ell(1)$};
      \path (q1) edge [bend left, green] node {$ $} (q2);
      \path (q2) edge [bend left] node {$$} (q1);
      \path (q2) edge node {$$} (q3);
      \path (q1) edge [red] node {$$} (q4);
       \path (q3) edge [loop above] node {$$} (q3);
            \path (q4) edge [loop above] node {$ $} (q4);
    \end{tikzpicture}
    \caption{Product of the \emph{Help-me?} game with parameterized automaton with a single memory state realizing $(s_k)_{k \in \N}$
    }
    \label{fig:bestanimal_product}
\end{figure}
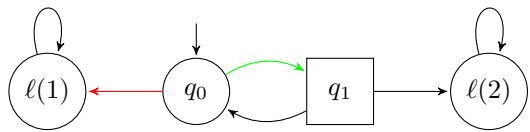

The following theorem shows us that uniform chains indeed suffice to realize any rational behaviour in the sense of maximal chains:

\begin{theorem}
\label{theo:dominatedbyuniformchain}
In a generalised safety/reachability game, every dominated finite memory strategy is dominated by an admissible finite memory strategy or by a maximal uniform chain.
\end{theorem}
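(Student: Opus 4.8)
The plan is to reduce to preadmissible strategies and then convert the ``wait longer before committing'' behaviour of an optimal improvement into a single counter. First I would invoke Lemma~\ref{lemma:belowpreadmissible}: the given dominated finite memory strategy $\sigma$ is preadmissible or dominated by a preadmissible finite memory strategy $\sigma'$. Since $x \mapsto (x)_{\beta<1}$ embeds $(\Sigma_i,\preceq)$ into $\mathrm{IC}$ and $\sqsubseteq$ is transitive, it suffices to produce the required dominating object above $\sigma'$, so I would assume $\sigma$ itself is preadmissible; if $\sigma$ is admissible we are already in the first alternative, hence assume $\sigma$ is preadmissible but not admissible. By the countability of chains established above (and Corollary~\ref{corr:countablymanystrategiesgood}) I may index the chain to be built as $(\sigma_n)_{n \in \N}$.

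For the construction I would fix, for each vertex $v$, a finite memory worst-case-cooperative-optimal strategy $\tau^v$ realizing $\aVal(v)$ and $\acVal(v)$, as in the proof of Lemma~\ref{lemma:belowpreadmissible}. By Lemma~\ref{lem:strict_dominance} together with Definition~\ref{def:preadmissible}, every non-admissibility witness of the preadmissible $\sigma$ occurs only on a \emph{return} to a vertex $v$: the first visit already realizes $\aVal(v)$ and $\acVal(v)$, and only a later visit makes a dominated choice. The natural improvement is to switch to $\tau^v$ at such a witness; the reason a single strategy cannot capture the optimum is exactly the phenomenon isolated in Example~\ref{ex:best_animal}, namely that switching one loop-iteration later keeps the hope of antagonist cooperation alive strictly longer, so \emph{loop $n{+}1$ times then play $\tau^v$} dominates \emph{loop $n$ times then play $\tau^v$}. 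I would therefore read off the offending loop in the finite product graph $\mathcal{M}\times G$ of $\sigma$'s Mealy automaton and build a parameterized automaton (Definition~\ref{def:parameterized_automaton}): the loop edges become green (counter-decrementing), and the red edge taken when the counter reaches $0$ leaves the loop into the gadget for $\tau^v$. Initializing the counter to $n$ yields $\sigma_n$, and each $\sigma_n$ is preadmissible by construction, being worst-case-cooperative-optimal on first visits and committing to $\tau^v$ only when forced.

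Next I would verify that $(\sigma_n)_{n \in \N}$ is an increasing chain dominating $\sigma$ via the witness criterion of Lemma~\ref{lem:dominance_histories}: the unique history where $\sigma_n$ and $\sigma_{n+1}$ disagree is the counter-$0$ exit after $n$ loops, and there $\cVal(h,\sigma_{n+1}) > \aVal(h,\sigma_n)$ because $\sigma_{n+1}$ still permits reaching the cooperatively-better leaf while $\sigma_n$ has committed, whence $\sigma_n \prec \sigma_{n+1}$; the same criterion gives $\sigma \sqsubseteq (\sigma_n)_{n \in \N}$. For maximality, suppose $(\sigma_n)_n \sqsubseteq C$ for a finite memory chain $C$; it suffices to dominate a cofinal set of $C$, i.e.\ the finite memory $\tau$ with $\sigma_n \preceq \tau$ for some $n$, the smaller elements following by transitivity. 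Since $\sigma_n$ is preadmissible, Lemma~\ref{lemma:valuescoincide} forces $\aVal(v_0,\tau)=\aVal(v_0,\sigma_n)$ and $\cVal(v_0,\tau)=\cVal(v_0,\sigma_n)$, and the local analysis behind Lemmas~\ref{lemma:equalvalues} and~\ref{lemma:nonadmissibilitynondominance} shows $\tau$ can differ from the $\sigma_n$ only by maintaining the cooperative hope on the loop for a number of iterations before committing. As $\tau$ is \emph{finite memory}, that number is finite, so $\tau \preceq \sigma_{n'}$ for any $n'$ exceeding it; this rightly excludes escapees such as the admissible $s_\omega$ of Example~\ref{ex:best_animal}, which lies above no $\sigma_n$. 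Hence $C \sqsubseteq (\sigma_n)_n$ and the uniform chain is maximal.

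The main obstacle I anticipate is the single-counter requirement: a preadmissible but non-admissible finite memory strategy may exhibit several distinct offending loops, whereas Definition~\ref{def:parameterized_automaton} allows only one counter. I expect the heart of the argument to be showing that one counter suffices, that is, that after the outermost loop is parameterized the remaining non-admissibility witnesses can be absorbed into finite memory through the $\tau^v$ gadgets (iterating the update of Lemma~\ref{lemma:belowpreadmissible} for at most $|V|$ rounds), so that the single counter governs the only genuinely unbounded behaviour. Getting this bookkeeping right while keeping each $\sigma_n$ preadmissible, so that the maximality argument above still applies, is where the real work lies.
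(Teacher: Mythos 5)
Your overall route is the paper's own: reduce to preadmissible strategies via Lemma~\ref{lemma:belowpreadmissible}, convert the ``retry the hopeful behaviour'' loops of the product automaton into green edges of a parameterized automaton with red edges for committing, verify the chain with the witness criterion of Lemma~\ref{lem:dominance_histories}, and derive maximality from the fact that a finite-memory competitor cannot count. However, there is a genuine gap in your construction: you never distinguish the sign of the value secured by committing. The paper's construction checks whether $\aVal(v,s_v^i) \leq 0$, and in that case redirects \emph{unconditionally} (a black edge, no counter), possibly ending with a single admissible strategy --- this is precisely how the first alternative of the theorem arises. Your recipe always produces a chain, and without the case distinction it can output a non-maximal one: in the boolean variant of the \emph{Help-me?} game ($\ell_1 \mapsto 0$), every $s_k$ is strictly dominated by the finite-memory strategy $s_\omega$, yet a worst-case cooperative optimal $\tau^{v_0}$ may legitimately be chosen as ``loop once then quit'' (it also realizes $\aVal(v_0)=0$ and $\acVal(v_0)$), in which case your construction yields the chain $(s_{k+1})_{k}$, strictly below the singleton $s_\omega$ and hence not maximal. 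Your red-edge target $\tau^v$ is underdetermined exactly where the sign of $\aVal$ matters, so the step ``the uniform chain is maximal'' would fail.

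The maximality argument itself is also gestured at rather than carried out, and the gesture does not close. Applying Lemma~\ref{lemma:valuescoincide} at $v_0$ only gives equality of values at the root, which is far from $\tau \preceq \sigma_{n'}$: the witnesses of non-dominance that must be excluded sit at maximal histories deep in the play, where the paper has to apply Lemma~\ref{lemma:valuescoincide} to \emph{subgames} and re-establish preadmissibility of $\sigma_N$ there (your claim that each $\sigma_n$ is preadmissible ``by construction'' is unjustified; the paper only secures this in the relevant subgame by taking $N$ large). Moreover you only know $\sigma_n \preceq \tau$ for one fixed $n$, and nothing relates $\tau$ to the large-index $\sigma_{N}$ you want to dominate it with; the paper bridges this by taking a common upper bound $\tau'$ of $\tau$ and $\sigma_N$ \emph{inside the competing chain} --- the one place the chain hypothesis is genuinely used --- combined with the pumping-style choice $N = mt+1$ showing $\sigma_N$ never crosses a red edge at a history compatible with $\tau$ (otherwise $\aVal(hv,\tau) \leq 0$, contradicting the sign condition from the construction). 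Finally, the multi-witness, single-counter bookkeeping that you explicitly defer as ``the real work'' is not resolved by isolating one offending loop: the paper places green/red edges at \emph{all} problematic vertex-state pairs sharing one counter, and the bound $N = mt+1$ absorbs whatever finite counting $\tau$ can do anywhere at once. As it stands, the proposal has the right architecture but leaves both the case distinction and the quantitative maximality argument, i.e.\ the parts that actually prove the theorem, open.
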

\begin{proof}
By Lemma \ref{lemma:belowpreadmissible} it suffices to prove the claim for preadmissible strategies (Definition \ref{def:preadmissible}). We thus start with a preadmissible finite memory strategy $\sigma$.

By the prefix-independence of generalised safety/reachability games, for any combination of vertex $v$ in the game and state $s$ in the automaton realizing $\sigma$, either a history ending in $v$ and state $s$ is a witness for non-admissibility of $\sigma$ or not. Let $N$ be the set of such pairs corresponding to non-admissibility witnesses. By the definition of preadmissibility, we cannot reach any $(v,s) \in N$ without first passing through some $(v,s_v^i)$ with $\aVal(v,s_v^i) = \aVal(v)$ and $\cVal(v,s_v^i) = acVal(v,s_v^i)$. By expanding the automaton if necessary (to remember where we were when first encountering some vertex), we can assume that for any $(v,s) \in N$ there is canonic choice of prior $(v,s_v^i)$.

We now construct either a parameterized automaton from $\sigma$ that either realizes a single maximal strategy, or a maximal uniform chain. If $N$ is empty, we are done. Otherwise, consider $(v,s) \in N$ and the corresponding $(v,s_v^i)$, and compare the associated values: Since the antagonist can reach $(v,s)$ from $(v,s_v^i)$, it has to hold that $\aVal(v,s_v^i) \leq \aVal(v,s) \leq \cVal(v,s) \leq \cVal(v,s_v^i)$. By choice of $(v,s_v^i)$, we have $\aVal(v,s) \leq \aVal(v) = \aVal(v,s_v^i)$, and thus $\aVal(v,s_v^i) = \aVal(v,s)$. Since $(v,s) \in N$, we see that even  $\aVal(v,s_v^i) = \aVal(v,s) = \cVal(v,s) < \cVal(v,s_v^i)$ holds by Lemma \ref{lem:strict_dominance}.

If $\aVal(v,s_v^i) \leq 0$, we modify the automaton to act in $(v,s)$ as it does in $(v,s_v^i)$. If $\aVal(v,s_v^i)$, then we add green edges to let the automaton act in $(v,s)$ as in $(v,s_v^i)$, and red edges to act as it would do originally. The comparison of the values lets us conclude via Lemma \ref{lem:dominance_histories} that the parameterized automaton $\mathcal{M}$ either realizes a single strategy dominated $\sigma$, or a uniform chain dominating $\sigma$.

It remains to argue that the strategy/uniform chain realized by $\mathcal{M}$ is maximal. Let $\sigma_n$ be the strategy where $\mathcal{M}$ is initialized with $n \in \N$ (wlog assume that $n$ is larger than the size of $\mathcal{M}$). Assume that $\tau \succ \sigma_n$, and let $h$ be a witness of $\tau \npreceq \sigma_n$ according to Lemma \ref{lem:dominance_histories}, i.e.~satisfying $\cVal(h,\tau) > \aVal(h,\sigma_n)$. Since $\sigma_n \preceq \tau$, we have $\cVal(h,\sigma_n) \leq \aVal(h,\tau)$, so $\aVal(h,\sigma_n) \leq \cVal(h,\sigma_n) \leq \aVal(h,\tau) \leq \cVal(h,\tau)$ with one inequality being strict. In particular, $h$ is a witness of non-admissibility of $\sigma_n$. By construction of $\mathcal{M}$, $h$ must already have been a witness of non-admissibility of $\sigma$, and the next move after $h$ must be given by a red edge. This already implies that if $\mathcal{M}$ realizes a single strategy, then that strategy is maximal.

Let $m$ be the size of the parameterized automaton $\mathcal{M}$, let $t$ be the size of the automaton realizing $\tau$, and $N = mt + 1$. We compare $\sigma_N$ and $\tau$ by considering the maximal histories compatible with both. If there were such a history $hv$ compatible with both $\sigma_N$ and $\tau$ where $\sigma_N$ is about to apply a red edge, then it has to hold that on histories extending $hv$, $\tau$ always acts at $v$ as $\mathcal{M}$ does following a green edge, for $\tau$ cannot count up to $N$ (in particular, $h$ is maximal for being compatible with $\tau$ and $\sigma_N$). It follows that $\aVal(hv,\tau) \leq 0$. Let $h'v$ be a prefix of this form of $hv$ compatible with $\sigma_n$ not ending in a red edge (this exists, since $n > m$). Then $\aVal(h'v,\tau) \leq 0$, and since $\tau \succeq \sigma_n$, $\aVal(h'v,\sigma_n) = \aVal(v,s_v^i) \leq 0$. But then when constructing $\mathcal{M}$, we would not have placed red and green edges at $(v,s_v^i)$, leading to a contradiction. Thus, at any maximal history compatible with $\sigma_N$ and $\tau$, $\sigma_N$ will follow a green or black edge next.

If $\tau$ is part of a chain $(\tau_i)_{i \in \mathbb{N}}$ with $(\sigma_i)_{i \in \mathbb{N}} \sqsubseteq (\tau_i)_{i \in \mathbb{N}}$, then $\tau$ and $\sigma_N$ have a common upper bound $\tau'$. If some maximal history $h$ compatible with both $\sigma_N$ and $\tau$ is not compatible with $\tau'$, then $h$ has a longest prefix $h'$ compatible with $\tau'$. If $h$ is compatible with $\tau'$, but $\tau'(h) \neq \sigma_N(h)$, we set $h' = h$. As shown above, $h'$ cannot be a witness of non-admissibility of $\sigma_N$, and by Lemma \ref{lem:dominance_histories} it cannot be a witness of non-dominance of $\sigma_N$ by $\tau'$, since $\sigma_N \preceq \tau'$. Lemma \ref{lemma:nonadmissibilitynondominance} then gives us that $h'$ is not a witness of non-dominance of $\tau'$ by $\sigma_N$, i.e.~$\cVal(h',\tau') \leq \aVal(h',\sigma_N)$. Together with $\sigma_N \preceq \tau'$ we get that $\aVal(h',\sigma_N) = \cVal(h',\sigma_N)$. Since $h$ is compatible with $\sigma_N$ and extends $h'$, it follows that $\aVal(h',\sigma_N) = \aVal(h,\sigma_N) = \cVal(h,\sigma_N)$. Since $\tau \preceq \tau'$, it follows that $\cVal(h',\tau) \leq \cVal(h',\tau') = \aVal(h',\sigma_N)$. Since $h$ is compatible with $\tau$ and extends $h'$, it follows that $\cVal(h,\tau) \leq \cVal(h',\tau) \leq \aVal(h',\sigma_N) = \aVal(h,\sigma_N)$, i.e.~that $h$ is not a witness of non-dominance of $\tau$ by $\sigma_N$.

The remaining case we need to consider is that where $h$ is compatible with $\tau'$ and $\tau'(h) = \sigma_N(h)$. Consider the subgame starting after that move. Since we have chosen $N$ sufficiently big, in this subgame it is impossible for $\sigma_N$ to pass through a red edge without passing through a green edge at the same vertex. By construction, this ensures that $\sigma_N$ is still preadmissible in this subgame. Since reaching the subgame is compatible with $\tau'$ and $\sigma_N$, restricting to this subgame, we still have that $\sigma_N \preceq \tau'$. Thus, we can apply Lemma \ref{lemma:valuescoincide} to the subgame, and conclude that $\aVal(h,\tau') = \aVal(h,\sigma_N)$ and $\cVal(h,\tau') = \cVal(h,\sigma_N)$. Since $h$ cannot be a witness of non-dominance of $\tau$ by $\tau'$, it holds that $\cVal(h,\tau) \leq \aVal(h,\tau') = \aVal(h,\sigma_N)$. Thus, $h$ is not a witness of non-dominance of $\tau$ by $\sigma_N$ either.

As we have ruled out all candidates for witnesses of non-dominance of $\tau$ by $\sigma_N$, by Lemma \ref{lem:dominance_histories} we may conclude that $\tau \preceq \sigma_N$.
\end{proof}

Theorem~\ref{theo:dominatedbyuniformchain} cannot be extended to state that every chain comprised of finite memory strategies is below an admissible strategy or a maximal uniform chain. Note that there are only countably many uniform chains.

\begin{example}
There is a generalised safety/reachability game where there are uncountably many incomparable maximal chains of finite memory strategies.
\begin{proof}
Consider the game depicted in Figure \ref{fig:bestanimal2choices_1}. For any $p \in \{a,b\}^\omega$, define a chain of finite memory strategies by letting the $n$-strategy be \emph{loop $n$ times while playing the symbols from $p_{\leq n}$, then quit}. For each $p$, we obtain a different maximal chain.
\end{proof}
\end{example}

The particular structure of parameterized automata over safety/reachability game graphs lets us point out useful properties and patterns, that we present before tackling algorithmic properties in the next section.
Notice that for each path $\tilde{h}$ in $\mathcal{M} \times G$, there exists a unique corresponding path $h$ in $G$, as there may exist a transition between $(m,v)$ and $(m',v')$ in $\mathcal{M} \times G$ only if $(v,v') \in E$, by definition of the product.

\begin{definition}\label{def:valid_path}
Let  $\game$ be a generalised safety/reachability game and $\mathcal{M}$ be a parameterized automaton over the game graph $G$.
Then, we say that a (finite or infinite) path $\tilde{h}$ in the product $\mathcal{M} \times G$ is \emph{valid} if it respects the following two conditions:
\begin{enumerate}
\item there is a finite number of green transitions in $\tilde{h}$, and
\item no green transition appears after a red transition in $\tilde{h}$.
\end{enumerate}
\end{definition}

If a path $\tilde{h}$ is valid then there exists a strategy in the sequence realized by $\mathcal{M}$ that is compatible with the corresponding path $h$ in $G$.
Furthermore, if $\tilde{h}$ does not contain any red transition, then $h$ is compatible with an infinite number of histories in the sequence, namely all strategies corresponding to the counter initialized to a value greater than the number of green transitions in $\tilde h$.
For a valid path $\tilde h$, let $| \tilde h|_g$ denote the number of green transitions in $\tilde h$, and $| \tilde h|_r$ the number of red transitions in $\tilde h$.
Furthermore, a valid path $\tilde h$ can be decomposed in two valid paths, that is, $\tilde h = \tilde h_g \tilde h_r$ where $|\tilde h_g|_r=0$ and $| \tilde h_r|_g =0$, such that $(\tilde h_r)_0=(\mu(\last{\tilde h_g},0),\nu(\last{\tilde h_g},0))$.
If $| \tilde h|_r=0$, we have $\tilde h = \tilde h_g$ and $\tilde h_r = \varepsilon$.

\begin{lemma}\label{lem:compatibility}
Let  $\game$ be a generalised safety/reachability game and $\mathcal{M}$ be a parameterized automaton over the game graph $G$.
Let $\tilde h$ be a valid path in $\mathcal{M} \times G$ such that $\tilde h_0 = (m_0,v_0)$ and let $k \in \N$.
Then, the history $h$ is compatible with $\sigma_k$ if, and only if, $| \tilde h |_r =0$ and $k\geq | \tilde h |_g$, or $| \tilde h |_r > 0$ and $k =   | \tilde h |_g $.
\end{lemma}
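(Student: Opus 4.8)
The plan is to analyze how the counter value governs which transitions a path can legally take, and to show that a valid path $\tilde h$ corresponds to a history compatible with the strategy $\sigma_k$ (the strategy realized by $\mathcal{M}$ initialized with counter value $k$) exactly when the counter bookkeeping is consistent with $k$. First I would recall the semantics established in Definition~\ref{def:parameterized_automaton}: each green transition decrements the counter by $1$, each red transition is enabled only when the counter value is $0$, and black transitions are counter-independent. Hence, following $\tilde h$ starting from counter value $k$, the counter value after traversing the green-only prefix $\tilde h_g$ equals $k - |\tilde h|_g$, using the decomposition $\tilde h = \tilde h_g \tilde h_r$ with $|\tilde h_g|_r = 0$ and $|\tilde h_r|_g = 0$ recalled just before the statement.

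The core of the argument splits along the two cases in the statement. I would first treat $|\tilde h|_r = 0$: here $\tilde h = \tilde h_g$ consists only of green (and black) transitions, so the path can be followed as long as the counter never needs to be $0$ to enable a red transition --- which never happens. The only requirement is that each of the $|\tilde h|_g$ green decrements be legal, i.e.\ the counter stays nonnegative, which holds precisely when the initial value $k \geq |\tilde h|_g$. This gives compatibility of $h$ with $\sigma_k$ exactly for $k \geq |\tilde h|_g$. For the second case $|\tilde h|_r > 0$, the path contains a red transition, which by the automaton's semantics is enabled only when the counter value is exactly $0$; since by validity (Definition~\ref{def:valid_path}) all green transitions precede all red transitions, the counter must hit $0$ precisely after the green prefix $\tilde h_g$ has consumed all its decrements. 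Thus $k - |\tilde h|_g = 0$, i.e.\ $k = |\tilde h|_g$, is forced, and conversely this value of $k$ makes every transition along $\tilde h$ legal (greens decrement from $k$ down to $0$, then reds fire at value $0$, black transitions are unconstrained).

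For both directions I would argue by induction on the length of $\tilde h$, tracking the invariant that the counter value after a prefix equals $k$ minus the number of green transitions taken so far, and that a red transition can be appended only when this quantity is $0$. The validity conditions are exactly what guarantee this invariant is well-behaved: condition~$1$ (finitely many greens) ensures the counter does not need to be decremented indefinitely, and condition~$2$ (no green after a red) ensures that once the counter reaches $0$ and reds begin, no further decrement is demanded that could violate nonnegativity. I expect the main obstacle to be purely bookkeeping: making the correspondence between the counter value and $|\tilde h|_g$ precise at the transition where the green prefix ends and the red suffix begins, and verifying that the boundary transition $(\tilde h_r)_0 = (\mu(\last{\tilde h_g},0),\nu(\last{\tilde h_g},0))$ is indeed the point where the counter equals $0$. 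Once this boundary is handled cleanly, both implications follow directly from the invariant, and the equivalence stated in the lemma is immediate.
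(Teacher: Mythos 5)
Your proposal is correct and takes essentially the same approach as the paper: the paper's proof is precisely the counter-bookkeeping argument you describe, splitting into the cases $|\tilde h|_r = 0$ versus $|\tilde h|_r > 0$, using the decomposition $\tilde h = \tilde h_g \tilde h_r$, and relying on the invariant that the counter value equals $k$ minus the number of green transitions traversed so far. The only presentational difference is that for the ``only if'' direction the paper, instead of your induction, pinpoints the first position $n$ where the transition type of $\tilde h$ disagrees with the counter status of the run and exhibits an explicit move mismatch, e.g.\ $h_{n+1} = \nu(\tilde h_n, 1) \neq \nu(\tilde h_n, 0) = \sigma_k(h_{\leq n})$, which makes precise the step you phrase informally as a red or green transition not being ``legal''.
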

\begin{proof}
Let $\tilde h$ be a valid path in $\mathcal{M} \times G$ and let $k \in \N$.
\begin{itemize}

\item[$\Rightarrow$]
Assume that the history $h$ is compatible with $\sigma_k$.
\begin{enumerate}
\item If $| \tilde h |_r =0$, and $k < | \tilde h |_g$, then there exists $n < |\tilde h|$ such that $| \tilde h_{\leq n}|_g =k$ and $\sigma_k( h_{\leq n})=\nu(\tilde h_{n},0)$.
Since $\tilde{h}$ is a valid path, by condition $2.$ in Definition~\ref{def:valid_path}, we have that $\tilde h_{n+1} = (\mu(\tilde h_n,1), \nu(\tilde h_n,1))$.
Hence $h_{n+1} = \nu(\tilde{h}_n,1)$ and we have that $h_{n+1} = \nu(\tilde{h}_n,1) \neq \nu(\tilde h_{n},0) = \sigma_k(\tilde h_{\leq n})$.
That is, $h$ is not compatible with $\sigma_k$ which is a contradiction.
\item If $| \tilde h |_r > 0$, and $k \neq   | \tilde h |_g $, we have two cases:
If $k <   | \tilde h |_g $, there exists a smallest $n < |\tilde h|$ such that $| \tilde h_{\leq n}|_g =k$,  and $\sigma_k(\tilde h_{\leq n})=\nu(\tilde h_{n},0)$.
Since $\tilde{h}$ is a valid path and $k <   | \tilde h |_g $, by condition $2.$ in Definition~\ref{def:valid_path}, we have that $\tilde h_{n+1} = (\mu(\tilde h_n,1), \nu(\tilde h_n,1))$.
Hence $h_{n+1} = \nu(\tilde{h}_n,1)$ and we have that $h_{n+1} = \nu(\tilde{h}_n,1) \neq \nu(\tilde h_{n},0) = \sigma_k( h_{\leq n})$.
That is, $h$ is not compatible with $\sigma_k$ which is a contradiction.
If $k >   | \tilde h |_g $, there exists a smallest $n < |\tilde h|$ such that $\tilde h_{n+1} = (\mu(\tilde h_n,0), \nu(\tilde h_n,0))$.
Since  $k >   | \tilde h |_g $, we have that $\sigma_k(h_{\leq n})=\nu(\tilde{h}_{\leq n},1)$.
Hence  we have that $h_{n+1} = \nu(\tilde{h}_n,0) \neq \nu(\tilde h_{n},1) = \sigma_k( h_{\leq n})$.
That is, $h$ is not compatible with $\sigma_k$ which is a contradiction.

\end{enumerate}

\item[$\Leftarrow$]

\begin{enumerate}
\item Assume $| \tilde h |_r =0$ and $k\geq | \tilde h |_g$.
Then for all $n < |\tilde h|$ such that $\tilde{h}_n \in M \times V_i$, we have that $\tilde h_{n+1} = (\mu(\tilde{h}_n, 1),\nu(\tilde{h}_n, 1))$.
Since $k \geq  | \tilde h |_g$, we also have that $\sigma_k(h_n)=\nu(\tilde{h}_n, 1) = h_{n+1}$ (that is, the counter value cannot have been decremented $k$ times by following $\tilde{h}$) .
As $\tilde h_0 =(m_0,v_0)$, it follows that $\tilde h$ is indeed compatible with $\sigma_k$.

\item Assume $| \tilde h |_r > 0$ and $k =   | \tilde h |_g $.
Recall that $\tilde{h}$ can be decomposed in $\tilde h_g \tilde h_r$ where $|\tilde h_g|_r=0$ and $| \tilde h_r|_g =0$, such that $(\tilde h_r)_0=(\mu(\last{\tilde h_g},0),\nu(\last{\tilde h_g},0))$.
As  $|\tilde h_g|_g=|\tilde h|_g= k$ by case $1.$ above, we know that $h_g$ is compatible with $\sigma_k$.
Furthermore, the counter value at $\last{h_g}$ is $0$ when initialized at $k$.
Thus, for all  $|\tilde h_g| \leq n < |\tilde h|$ such that $\tilde{h}_n \in M \times V_i$, we have that $\tilde h_{n+1} = (\mu(\tilde{h}_n, 0),\nu(\tilde{h}_n, 0))$.
Consequently, we also have that $\sigma_k(h_n)=\nu(\tilde{h}_n, 0) = h_{n+1}$ (that is, the counter value has reached $0$ after $\tilde h_g$ and then follows $\tilde{h}_r$).
As $\tilde h_0 =(m_0,v_0)$, it follows that $\tilde h$ is indeed compatible with $\sigma_k$.

\end{enumerate}

\end{itemize}
\end{proof}

\subsection{Algorithmic properties}
\label{subsec:algorithms}

In this section, we prove two decidability results concerning parametrized automata.

First, we prove that we can decide whether the sequence of strategies realized by a parameterized automaton is a chain.
Note that this decision problem is not trivial: not every parameterized automaton realizes an (increasing) chain of strategies.
For instance, if we switch the red and green transitions in the automaton/game graph product of figure~\ref{fig:bestanimal_product},
the sequence of strategies realized consists of $s_\omega$ when the counter is initialized with value $0$,
and $s_0$ when it is initialized with any other value.
As $s_\omega \not \preceq s_0$, it is not a chain.

Second, we demonstrate that we can compare uniform chains: given two parametrized automata defining chains of strategies, we can decide whether one is dominated by the other.
We begin by proving that strategies realized by Mealy automata are comparable.

\begin{lemma}\label{lem:comparison_two_strategies}
Let $\game$ be a generalised safety/reachability game,
let $\sigma$ and $\sigma'$ be finite-memory strategies realized by the finite Mealy automata $\mathcal{M}$ and $\mathcal{M}'$.
It is decidable in $\textsc{PTime}$ whether $\sigma \preceq \sigma'$.
\end{lemma}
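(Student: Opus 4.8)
I want to decide, in polynomial time, whether $\sigma \preceq \sigma'$ where both are finite-memory strategies given by Mealy automata $\mathcal{M}$ and $\mathcal{M}'$. The key tool is Lemma~\ref{lem:dominance_histories}: $\sigma \not\preceq \sigma'$ holds if and only if there is a history $h$ compatible with both, with $\last{h} \in V_i$, $\sigma(h) \neq \sigma'(h)$, and $\cVal(h,\sigma) > \aVal(h,\sigma')$. So the decision procedure amounts to searching for such a witness history, and answering ``yes, $\sigma \preceq \sigma'$'' precisely when no witness exists.

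**The product construction.** The plan is to build the product graph $\mathcal{M} \times \mathcal{M}' \times G$, whose vertices are triples $(s,s',v)$ recording the memory state of each automaton together with the current game vertex. This product has size polynomial in the inputs. Since generalised safety/reachability games are prefix-independent, the relevant values $\cVal(h,\sigma)$ and $\aVal(h,\sigma')$ depend only on the reached product vertex, not on the full history $h$. A history compatible with both $\sigma$ and $\sigma'$ corresponds exactly to a path from the initial triple in this product where, at every protagonist vertex, the successor is the one dictated by the relevant automaton; because both strategies are deterministic finite-memory, the set of reachable triples $(s,s',v)$ that are compatible with both is computable by a straightforward graph reachability in polynomial time. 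The ``disagreement'' condition $\sigma(h)\neq\sigma'(h)$ becomes the local condition, at a reachable triple $(s,s',v)$ with $v \in V_i$, that the move prescribed by $\mathcal{M}$ differs from that prescribed by $\mathcal{M}'$.

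**Computing the values and checking the witness condition.** For each reachable triple $(s,s',v)$, I need $\cVal$ from the $\mathcal{M}$-component and $\aVal$ from the $\mathcal{M}'$-component. Concretely, $\cVal(h,\sigma)$ is the best payoff achievable in the product $\mathcal{M}\times G$ restricted to the strategy $\sigma$ when the coalition cooperates, which is a reachability-style computation on the finite product $\mathcal{M}\times G$: it is the maximal leaf payoff reachable from $(s,v)$ along edges respecting $\sigma$'s moves at protagonist vertices. Dually, $\aVal(h,\sigma')$ is the value the antagonist can force against $\sigma'$ from $(s',v)$, a min-over-antagonist computation on $\mathcal{M}'\times G$; since payoffs are assigned only at leaves with self-loops and are $0$ otherwise, both are solvable in polynomial time by standard attractor/reachability fixpoint computations over the finite products. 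Having tabulated these values, I scan all reachable-and-compatible triples $(s,s',v)$ with $v\in V_i$ and disagreeing moves, and test whether any satisfies $\cVal > \aVal$. If one does, output $\sigma\not\preceq\sigma'$; otherwise output $\sigma\preceq\sigma'$.

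**Main obstacle.** The only subtle point, and the one I would treat most carefully, is justifying that the value computations are genuinely local to the product vertex and that reachability in the product faithfully captures ``compatible with both $\sigma$ and $\sigma'$''. The first is handled by prefix-independence (as already noted in the proof of Lemma~\ref{lemma:belowpreadmissible}, values depend only on the current vertex, and here only on the current automaton-vertex pair). The second requires checking that a path in the triple-product projects to a legitimate history compatible with both strategies, which follows from the fact that the Mealy automata are deterministic and jointly determine the protagonist's moves while leaving the antagonist's moves free; this freedom is exactly what the $\cVal$ and $\aVal$ computations quantify over. All remaining steps are polynomial-time graph algorithms, so the overall bound is $\textsc{PTime}$.
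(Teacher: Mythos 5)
Your proposal is correct and takes essentially the same route as the paper's proof: both rest on the witness characterisation of Lemma~\ref{lem:dominance_histories}, locate disagreement points via reachability in the product $\mathcal{M}\times G\times\mathcal{M}'$, and then compare a cooperative value on the $\mathcal{M}\times G$ side with an antagonistic value on the $\mathcal{M}'\times G$ side---the paper's Challenger/Prover game is merely a game-theoretic packaging of your direct value computations, since solving it reduces to exactly the same reachability/safety analyses on the two sub-products. One detail you should state precisely: $\cVal(h,\sigma)$ is \emph{not} simply the maximal leaf payoff reachable from $(s,v)$, because a leaf-avoiding infinite play yields payoff $0$ (which matters when all reachable leaves carry negative payoffs, and dually for $\aVal$); this is exactly why the paper's winning condition switches to safety conditions in the cases $c=0$ or $a=0$, and with that cycle-detection adjustment your algorithm coincides with the paper's.
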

\begin{proof}
We construct the game $\game'$ of perfect information for two players, \emph{Challenger} and \emph{Prover}, such that Prover wins the game if and only if $\sigma \preceq \sigma'$.
Let $\mu^\mathcal{M}$ and $\nu^\mathcal{M}$ be the memory update and move choice functions of $\mathcal{M}$, and let $\mu^{\mathcal{M}'}$ and $\nu^{\mathcal{M}'}$ be the memory update and move choice functions of $\mathcal{M}'$.
Formally, the game $\game'=\langle P', G', W_C \rangle $ is such that the set of players $P'$ is composed of Challenger and Prover, and the game graph $G' = (V',E')$ is composed of:
\begin{itemize}
\item a copy of the graph $(\mathcal{M} \times G \times \mathcal{M}')$,
\item for each $c \in L \cup \lbrace 0 \rbrace$, a copy $(\mathcal{M} \times G)^c$ of the graph $(\mathcal{M} \times G)$,
\item for each $a \in L \cup \lbrace 0 \rbrace$, a copy $(\mathcal{M}' \times G)^a$ of the graph $(\mathcal{M}' \times G)$,
\item for each pair $(c,a) \in L \cup \lbrace 0 \rbrace \times L \cup \lbrace 0 \rbrace$ such that $c > a$, and for each vertex $(m^\mathcal{M},v, m^{\mathcal{M}'})$ of the graph $(\mathcal{M} \times G \times \mathcal{M}')$ such that $Succ(m^\mathcal{M},v, m^{\mathcal{M}'}) = \emptyset$:
\begin{itemize}
\item a vertex $(c,a,m^\mathcal{M},v, m^{\mathcal{M}'})$,
\item an edge $((m^\mathcal{M},v, m^{\mathcal{M}'}),(c,a,m^\mathcal{M},v, m^{\mathcal{M}'}))$,
\item an edge $((c,a,m^\mathcal{M},v, m^{\mathcal{M}'}),(m'^{\mathcal{M}},v')^c)$ where $(m'^{\mathcal{M}},v') = (\mu^\mathcal{M} (m^\mathcal{M},v), \nu^\mathcal{M}(m^\mathcal{M},v))$ (ie, $(m'^{\mathcal{M}},v')$ is the successor of $(m^\mathcal{M},v)$ in $(\mathcal{M} \times G)$),
\item an edge $((c,a,m^\mathcal{M},v, m^{\mathcal{M}'}), (m'^{\mathcal{M}'},v')^a)$, where $(m'^{\mathcal{M}'},v') = (\mu^{\mathcal{M}'} (m^{\mathcal{M}'},v), \nu^{\mathcal{M}'}(m^{\mathcal{M}'},v))$ (ie, $(m'^{\mathcal{M}'},v')$ is the successor of $(m^{\mathcal{M}'},v)$ in $(\mathcal{M}' \times G)$).
\end{itemize}
\end{itemize}
All vertices in $V'$ belong to Challenger except the ones of the form $(c,a,m^\mathcal{M},v, m^{\mathcal{M}'})$, that belong to Prover.
As the sets $L \cup \lbrace 0 \rbrace$ is finite, there are also a finitely many pairs $(c,a) \in L \cup \lbrace 0 \rbrace \times L \cup \lbrace 0 \rbrace$ such that $c>a$.
Thus, we see that we wind up with a game graph $G'$ polynomial in the size of the original strategy automata $\mathcal{M}$ and $\mathcal{M}'$ and game graph $G$.

The winning condition $W_C$ for Challenger can be expressed as a boolean combination of safety and reachability conditions.
First, Challenger must reach a vertex of the form $(c,a,m^\mathcal{M},v, m^{\mathcal{M}'})$.
Then, if both $c,a \neq 0$ he must reach either any vertex of the form $(m^\mathcal{M}, \ell(c))^c$ in $(\mathcal{M} \times G)^c$, or reach any vertex of the form $(m^{\mathcal{M}'}, \ell(a))^a$ in $(\mathcal{M}' \times G)^a$.
If $c =0$ and $a \neq 0$, he must either:
\begin{itemize}
\item reach any vertex in $(\mathcal{M} \times G)^c$ (to ensure he is playing in the subgame $\mathcal{C}$) \emph{and} be safe from all vertices of the form $(m^\mathcal{M}, \ell(x))^c$  with $x \neq 0$ in $(\mathcal{M} \times G)^c$,
\item or reach any vertex in $(\mathcal{M}' \times G)^a$ (to ensure he is playing in the subgame $\mathcal{A}$) \emph{and} reach any vertex of the form $(m^{\mathcal{M}'}, \ell(a))^a$ in $(\mathcal{M}' \times G)^a$.
\end{itemize}
The case $c \neq 0, a =0$ can be expressed in similar terms.
Solving the game $\game'$ amounts to solve, for each reachable vertex of the form $(c,a,m^\mathcal{M}, v, {m}^{\mathcal{M}'})$, one reachability (or safety in the case $c=0$) game in the subgame $(\mathcal{M} \times G)^c$ initialized in $(\mu^\mathcal{M} (m^\mathcal{M},v), \nu^\mathcal{M}(m^\mathcal{M},v))^c$ and one reachability (or safety in the case $a=0$) game in the subgame $(\mu^{\mathcal{M}'} (m^{\mathcal{M}'},v), \nu^{\mathcal{M}'}(m^{\mathcal{M}'},v))^a$ initialized in $(m^{\mathcal{M}'},v)$.
Thus, $\game'$ is solvable in $\textsc{PTime}$.

\medskip
We claim that Challenger wins $\game'$ if, and only if $\sigma \not \preceq \sigma'$.

\begin{itemize}
\item[$\Rightarrow$] Let $S$ be a winning strategy of Challenger in $\game'$.
This means that for each strategy $T$ of Prover, the outcome $\rho$ of $(S,T)$ satisfies the winning condition $W_C$.
Thus, we know that every such $\rho$ leaves $(\mathcal{M} \times G \times \mathcal{M}')$ after a finite number of steps, that is, there exists $k_{\rho} \in \N$ such that $\rho_{k_{\rho}}$ is of the form $(c,a,m^\mathcal{M},v, m^{\mathcal{M}'})$.
As Prover has only one move to make in each outcome, which occurs after the choice of $c$ and $a$ by Challenger, there actually are only two different possible outcomes for $S$: the outcome $\rho_c$, which corresponds to Prover choosing to contest $c$, and the outcome $\rho_a$, which corresponds to Prover choosing to contest $a$.
The two outcomes $\rho_c$ and $\rho_a$ share a longest common prefix $ h' \in V'^*$ that satisfies that $h' = \tilde {h} (c,a,m^\mathcal{M},v, m^{\mathcal{M}'})$ where $\tilde {h} \in (\mathcal{M} \times G \times \mathcal{M}')$, $|\tilde{h}|=k=k_{\rho_c} = k_{\rho_a}$, $\last {\tilde {h}} = (m^\mathcal{M},v, m^{\mathcal{M}'})$ and $Succ(m^\mathcal{M},v, m^{\mathcal{M}'}) = \emptyset$
(otherwise, there is no transition to the vertex $(c,a,m^\mathcal{M},v, m^{\mathcal{M}'})$ and no way to reach the subgames $\mathcal{C}$ or $\mathcal{A}$).

Furthermore, as $S$ is winning, there exist two infinite paths $\tilde {h}^c$ in $(\mathcal{M} \times G)^c$ and $\tilde {h}^a$ in $(\mathcal{M}' \times G)^a$ such that $\rho_c= \tilde {h} (c,a,m^\mathcal{M},v, m^{\mathcal{M}'}) \tilde{ h}^c$ and $\rho_a= \tilde {h}(c,a,m^\mathcal{M},v, m^{\mathcal{M}'}) \tilde{h}^a$.
Let $h$ be the history in $G$ corresponding to $\tilde {h}$ in $(\mathcal{M} \times G \times \mathcal{M}')$ and let $h^c$ and $h^a$ be the infinite paths in $G$ corresponding to $\tilde{ h}^c$ and $\tilde{ h}^a$.
Consider now the outcomes $h h^c$ and $h h^a$ in $\game$.
By construction, the outcome $h h^c$ is compatible with $\sigma$, and yields payoff $c$. 
Thus, $\cVal(h, \sigma)\geq c$.
By construction, the outcome $h h^a$ is compatible with $\sigma'$ and yields payoff $a$. 
Thus, $\aVal(h, \sigma') \leq a$.
Finally, we have that $h$ is compatible with $\sigma$ and $\sigma'$, $\sigma(h) \neq \sigma'(h)$ and $\cVal(h, \sigma) > \aVal(h, \sigma') $, that is, $h$ is a non-dominance witness for $\sigma$ and $\sigma'$.
Thus, $\sigma \not \preceq \sigma'$.

\item[$\Leftarrow$] Suppose $\sigma \not \preceq \sigma'$.
There exists a history $h$ compatible with $\sigma$ and $\sigma'$ such that  $\sigma(h) \neq \sigma'(h)$ and $\cVal(h,\sigma) > \aVal(h,\sigma')$.
Let $c = \cVal( h,\sigma)$ and $a = \aVal( h,\sigma')$.
There exists an infinite path $h^c$ from $\last {h}$ such that the outcome $hh^c$ is compatible with $\sigma$ and yields payoff $c$.
Hence, there exists a corresponding infinite path $\tilde{h}^c$ from $\last {\tilde{h}}$ in $(\mathcal{M} \times G)$, and thus a corresponding path in $(\mathcal{M} \times G)^c$.
Similarly, there exists an infinite path $h^a$ from $\last {h}$ such that the outcome $hh^a$ is compatible with $\sigma'$ and yields payoff $a$. Hence, there exists a corresponding infinite path $\tilde{h}^a$ from $\last {\tilde{h}}$ in $(\mathcal{M}' \times G)$ and thus a corresponding path in $(\mathcal{M}' \times G)^a$.
Let $\sigma'$ be the strategy of Challenger in $\game'$ that:
\begin{itemize}
\item in the first phase, follows $\tilde{h}$ in  $(\mathcal{M} \times G \times \mathcal{M}')$,
\item at $\last {\tilde{h}}= (m^\mathcal{M},v, m^{\mathcal{M}'})$, chooses the successor vertex to be $(c,a,m^\mathcal{M},v, m^{\mathcal{M}'})$,
\item in $(\mathcal{M} \times G)^c$, follows $\tilde{h}^c$,
\item and in $(\mathcal{M}' \times G)^a$, follows $\tilde{h}^a$.
\end{itemize}
The strategy $\sigma'$ is winning for Challenger.
Indeed, since the values $c$ and $a$ are accurate, the existence of both $\tilde{h}^c$ and $\tilde{h}^a$ are guaranteed.
Furthermore, if $c\neq 0$, a vertex of the form $(m^\mathcal{M}, \ell(c))^c$ appears in $\tilde{h}^c$, thus the outcome $\rho_c$ of $\game'$ is winning.
If $c=0$, either a vertex of the form $(m^\mathcal{M}, \ell(0))^c$ appears in $\tilde{h}^c$ or $\tilde{h}^c$ avoids all leaves,  thus the outcome $\rho_c$ of $\game'$ is winning.
Similarly, we can see that $\rho_a$ is also winning, regardless of $a$ being equal to $0$ or not.
Thus the outcome of $\game'$ with Challenger playing according to $S$ is winning regardless of the strategy of Prover, that is, his choice between contesting $c$ or contesting $a$.

\end{itemize}
\end{proof}

We now expose equivalences between the decision problems we are interested in,
and properties $(\textsf{P}_1)$, $(\textsf{P}_2)$, $(\textsf{P}_3)$ and $(\textsf{P}_4)$
that can be decided with the use of Lemma~\ref{lem:comparison_two_strategies}.

\begin{proposition}\label{prop:bound}
Let $\game$ be a generalised safety/reachability game over a graph $G$.
Let $\mathcal{M}$ be a Mealy automaton realizing a finite memory strategy $M$,
and let $\mathcal{S}$ and $\mathcal{T}$ be parameterized automata realizing sequences $(S_n)_{n\in\N}$ and $(T_n)_{n\in\N}$ of finite memory strategies.
Then:
\begin{enumerate}[topsep=0pt]
\item\label{prop:bound_chain}
Let $N_{\preceq} = |G||\mathcal{S}|$.

Then $(S_n)_{n\in\N}$ is a chain if and only if $(\textsf{P}_1)$ $T_{i} \preceq T_{i+1}$ for every $1 \leq i \leq N_{\preceq}$.
\item\label{prop:bound_inc_chain}
Let $N_{\prec} = |G||\mathcal{S}| + (|G||\mathcal{S}|)!$.

Then $(S_n)_{n\in\N}$ is an increasing chain if and only if $(\textsf{P}_2)$ $T_{i} \prec T_{i+1}$ for every $1 \leq i \leq N_{\prec}$.
\item\label{prop:bound_comp1}
Let $N_{T} = |G||\mathcal{T}|(|\mathcal{M}| + 1) + 1$, and suppose that $(T_n)_{n \in \N}$ is a chain.

Then $M \not \sqsubseteq (T_n)_{n \in \N}$ if and only if $(\textsf{P}_3)$ $M \not \preceq T_{N_{T}}$.
\item\label{prop:bound_comp2}
Let $N_{\mathcal{S}} = |G||\mathcal{S}|(2|\mathcal{T}|+1)$, and suppose that $(S_n)_{n \in \N}$ and $(T_n)_{n \in \N}$ are chains.

Then $(S_n)_{n \in \N} \not \sqsubseteq (T_n)_{n \in \N}$ if and only if $(\textsf{P}_4)$ $S_{N_{\mathcal{S}}} \not \preceq (T_n)_{n \in \N}$.
\end{enumerate}
\end{proposition}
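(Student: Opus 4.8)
The plan is to treat all four statements with one toolkit and then read off the individual bounds. In each case we must decide a quantification over the infinitely many finite-memory strategies realized by a parameterized automaton (``$(S_n)$ is a chain'', ``$M$ is dominated by $(T_n)$'', etc.), and the goal is to collapse this to a check over finitely many indices. The two workhorses are Lemma~\ref{lem:dominance_histories}, which turns each (non-)dominance into the existence of a witnessing history, and Lemma~\ref{lem:compatibility}, which says such a witness is a \emph{valid} path in the relevant product graph whose number of green transitions pins down the counter value(s) for which it is live. Prefix-independence is essential: the values attached to a witness depend only on the current configuration (memory state, vertex, and remaining counter), not on the history reaching it. I would first record, for a configuration $(m,v)$ of $\mathcal{S}\times G$, that the $\aVal$ and $\cVal$ of the continuation with remaining counter $c$ are well-defined functions of $(m,v,c)$ that are eventually constant in $c$ (a pumping argument on green cycles, with threshold at most the number of product configurations) and, whenever the relevant sequence is assumed to be a chain, monotone in $c$.

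For \ref{prop:bound_chain}, being a weakly increasing chain is equivalent by transitivity of $\preceq$ to $S_i\preceq S_{i+1}$ for all $i$. By Lemma~\ref{lem:dominance_histories} a failure $S_i\not\preceq S_{i+1}$ is witnessed at the configuration where $S_i$'s counter first reaches $0$, i.e.\ a counter-access configuration reached by a green-only valid path with exactly $i$ green transitions, and the witnessing inequality (the $\cVal$ of the red continuation exceeding the $\aVal$ of the one-more-loop continuation) depends only on that configuration. If such a configuration is reachable with more than $|G||\mathcal{S}|$ green transitions, the green path repeats a configuration; deleting a green cycle reaches the \emph{same} configuration with fewer green transitions, and iterating lands the same witness at some index in $[1,N_{\preceq}]$. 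Hence no first failure occurs beyond $N_\preceq$. For \ref{prop:bound_inc_chain} the extra requirement is strictness $S_{i+1}\not\preceq S_i$, which by the same analysis asks, \emph{for every} $i$, for a counter-access configuration with $\cVal$ of the one-loop continuation above $\aVal$ of the red continuation, reachable with exactly $i$ green transitions. Whether such a configuration exists is a property of the index $i$ alone, and ``reachability with exactly $i$ green transitions'' is governed by the green cycles of $\mathcal{S}\times G$: the pattern is eventually periodic with pre-period at most $|G||\mathcal{S}|$ and period dividing $\mathrm{lcm}$ of the green-cycle lengths, itself at most $(|G||\mathcal{S}|)!$. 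Thus checking strictness up to $N_\prec=|G||\mathcal{S}|+(|G||\mathcal{S}|)!$ certifies it for all $i$.

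For \ref{prop:bound_comp1}, since $(T_n)$ is a chain, transitivity makes $\{n\mid M\preceq T_n\}$ upward closed, so its complement is downward closed and $M\not\sqsubseteq(T_n)$ means $M\not\preceq T_n$ for all $n$. Working in $\mathcal{M}\times G\times\mathcal{T}$, a witness of $M\not\preceq T_n$ is a configuration $(p,v,q)$ reached with $g$ green $\mathcal{T}$-transitions, disagreeing there, with the $\cVal$ of $M$ exceeding the $\aVal$ of $T_n$ at remaining counter $n-g$. Splitting on whether $(p,v,q)$ lies on a green cycle, each witnessing configuration is either \emph{persistent}, yielding $M\not\preceq T_n$ for all large $n$ (so, by downward-closedness, for \emph{all} $n$, hence $M\not\sqsubseteq(T_n)$ and in particular $M\not\preceq T_{N_T}$), or \emph{transient}, ceasing to witness once $n-g$ passes the stabilization threshold $|G||\mathcal{T}|$, with $g$ bounded by the product size $|G||\mathcal{M}||\mathcal{T}|$, i.e.\ by $n\le |G||\mathcal{T}|(|\mathcal{M}|+1)=N_T-1$. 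So if no persistent witness exists, $M\preceq T_{N_T}$, giving the equivalence. Statement~\ref{prop:bound_comp2} then combines this with pumping on the $\mathcal{S}$-side: since $(S_n)$ is a chain, $\{m\mid S_m\not\sqsubseteq(T_n)\}$ is upward closed, so $(S_n)\not\sqsubseteq(T_n)$ holds iff $S_m\not\sqsubseteq(T_n)$ for its least index; characterizing $S_m\not\sqsubseteq(T_n)$ by persistent-in-$n$ witnesses in $\mathcal{S}\times G\times\mathcal{T}$ and bounding the first $m$ at which one appears (via the $\mathcal{S}$-green-cycle pumping together with the counter-stabilization estimates) yields the threshold $N_{\mathcal{S}}=|G||\mathcal{S}|(2|\mathcal{T}|+1)$.

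The main obstacle I anticipate is the value analysis underlying \ref{prop:bound_comp1} and \ref{prop:bound_comp2}: making precise that the $\aVal$ (and $\cVal$) as a function of the remaining counter is monotone and eventually constant, with a stabilization threshold controlled by the product size, and that a witness at the bounding index can be chosen so that its remaining counter is already past that threshold, so that ``persistent'' genuinely means ``live for all large indices''. This is exactly where prefix-independence and a careful green-cycle pumping argument must be combined, and where the four constants are read off; the reductions to upward/downward-closed index sets and the equivalences with $(\textsf{P}_1)$–$(\textsf{P}_4)$ are then routine.
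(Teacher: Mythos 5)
Your overall skeleton does track the paper's proof: witnesses via Lemma~\ref{lem:dominance_histories}, localization in product graphs via Lemma~\ref{lem:compatibility} and prefix-independence, pumping of green cycles, and monotonicity of the index sets from the chain hypothesis; your treatments of items 1--3 are essentially the paper's (minimal failing index plus cycle deletion for $N_\preceq$; pumping period dividing $(|G||\mathcal{S}|)!$ for $N_\prec$; your persistent/transient dichotomy is the paper's case split between few total greens, a loop of $\mathcal{M}\times G\times\mathcal{T}$ inside the history, and a loop of $G\times\mathcal{T}$ inside the continuation). However, the cornerstone you propose to ``first record'' is false as stated: $\aVal$ and $\cVal$ as functions of the remaining counter need \emph{not} be eventually constant, for any threshold. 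The memory can track the number of green transitions modulo $2$ (the memory update along green transitions is counter-independent), so an automaton can exit to a good leaf when the initial counter is even and a bad leaf when it is odd; the values then oscillate forever. Relatedly, by Lemma~\ref{lem:compatibility} a continuation containing a red transition is compatible with exactly one counter value, so per-configuration values do not transfer between neighbouring counters at all. What pumping actually yields is persistence of a \emph{concrete witness play} along an arithmetic progression of indices (with period the number of greens on the pumped cycle), which is the content of Lemma~\ref{lem:iteration} and of the paper's case analysis in item 3; your conclusions there survive only because $\{n : M\not\preceq T_n\}$ is downward closed under the chain hypothesis, so infinitely many failures along a progression already give failure for all $n$. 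Your transient bound ``$n-g\leq |G||\mathcal{T}|$'' must be re-derived through this progression argument, not through stabilization. (A smaller inaccuracy: in item 1 the disagreement need not sit at the counter-access configuration itself, since memory states can diverge there and the disagreement occur later; this is harmless because the deleted green cycle precedes the divergence.)

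The second, more substantial gap is item 4, which you do not actually prove and yourself flag as the anticipated obstacle. The constant $N_{\mathcal{S}}=|G||\mathcal{S}|(2|\mathcal{T}|+1)$ encodes a specific argument that cannot be ``read off'' from the item-3 estimates: the paper takes the least $k$ with $S_k\not\preceq T_{M_k}$ (where $M_k$ is the item-3 bound for the Mealy automaton realizing $S_k$, whose size grows with $k$), compares $S_k$ against $T_{2M_k}$, and, if $k>N_{\mathcal{S}}$, finds a configuration of $\mathcal{S}\times G\times\mathcal{T}$ occurring \emph{three} times after $\mathcal{S}$-green transitions in the witness history. Of the two resulting $\mathcal{S}$-green loops, at least one carries at most $M_k$ many $\mathcal{T}$-green transitions (since the whole history carries at most $2M_k$), and deleting \emph{that} loop strictly lowers the $\mathcal{S}$-index while preserving compatibility with $T_{M_k}$ and the witness values, contradicting minimality of $k$ via Proposition~\ref{prop:bound}.\ref{prop:bound_comp1}; a separate loop deletion in $G\times\mathcal{S}$ handles green transitions in the continuation. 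The doubling to $2M_k$ and the three-occurrence trick are exactly where the factor $2|\mathcal{T}|+1$ comes from; without them, pumping down on the $\mathcal{S}$-side can destroy compatibility on the $\mathcal{T}$-side, and your sketch provides no mechanism to control this interaction.
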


The proof of Proposition~\ref{prop:bound}.\ref{prop:bound_chain} and ~\ref{prop:bound}.\ref{prop:bound_inc_chain}
is based on the following auxiliary Lemma, whose demonstration relies on the study of the loops
that appear in witnesses of non dominance.

\begin{lemma}\label{lem:iteration}
Let $\game$ be a generalised safety/reachability game, let $\mathcal{M}$ be a parametrized automaton over the game graph of $\game$,
and let $(T_n)_{n \in \mathbb{N}}$ be the sequence of finite-memory strategies realized by $\mathcal{M}$.
Then for every pair of integers $n_1,n_2 > |G||\mathcal{M}|$ satisfying $T_{n_1} \not \preceq T_{n_2}$,
there exists $0 < k \leq |G||\mathcal{M}|$ such that for every $i \in \mathbb{N}$, $T_{n_1 + (i-1)k} \not \preceq T_{n_2 + (i-1)k}$.
\end{lemma}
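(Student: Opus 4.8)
The plan is to combine the non-dominance characterisation of Lemma~\ref{lem:dominance_histories} with a pigeonhole argument on the \emph{green} transitions of $\mathcal{M} \times G$. First I would fix a non-dominance witness $h$ for $T_{n_1} \not\preceq T_{n_2}$: a history compatible with both $T_{n_1}$ and $T_{n_2}$ with $\last{h} \in V_i$, $T_{n_1}(h) \neq T_{n_2}(h)$ and $\cVal(h,T_{n_1}) > \aVal(h,T_{n_2})$. Since the two strategies are realised by the \emph{same} automaton with different counter initialisations, their runs on $h$ coincide as long as both counters remain positive, and can first differ only once the smaller counter reaches $0$ (at a counter-access state where one counter is $0$ and the other is positive). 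As $T_{n_1}(h)\neq T_{n_2}(h)$, such a divergence necessarily occurs, so the two runs share a common prefix $\tilde{p}$ in $\mathcal{M} \times G$ that ends at this first divergence state. A short case analysis shows $\tilde{p}$ carries exactly $\min(n_1,n_2)$ green transitions and no red transition, hence is a valid path.

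Next I would use that $\min(n_1,n_2) > |G||\mathcal{M}|$, which is the number of states of $\mathcal{M} \times G$. Among the source states of the more than $|G||\mathcal{M}|$ green transitions of $\tilde{p}$, the pigeonhole principle forces a repetition within the first $|G||\mathcal{M}|+1$ of them, yielding a loop inside $\tilde{p}$ carrying $k$ green transitions with $0 < k \leq |G||\mathcal{M}|$. I would then pump: inserting $i-1$ extra copies of this loop into $\tilde{p}$ produces a valid path with $\min(n_1,n_2)+(i-1)k$ green transitions that still ends at the very same divergence state. Let $h^{(i)}$ be the history consisting of this pumped prefix followed by the \emph{unchanged} tail of $h$ (the portion of $h$ after the divergence state).

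To conclude I would verify that $h^{(i)}$ is a non-dominance witness for $T_{n_1+(i-1)k} \not\preceq T_{n_2+(i-1)k}$. Compatibility with both shifted strategies is checked by tracing their runs: on the red-free pumped prefix this is Lemma~\ref{lem:compatibility} (its $\min(n_1,n_2)+(i-1)k$ greens do not exceed either shifted counter), and on the tail it follows from the observation below. At the divergence state the smaller shifted strategy now has counter $0$ and the larger has counter $\max(n_1,n_2)-\min(n_1,n_2)$, exactly as in the original run, and the memory state there is unchanged; hence from that point the shifted strategies reproduce the behaviour of $T_{n_1},T_{n_2}$ on the tail, so they still disagree at $\last{h^{(i)}}$. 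Since generalised safety/reachability games are prefix-independent and the configurations (memory, counter, current vertex) at $\last{h^{(i)}}$ coincide with those at $\last{h}$ for the respective strategies, we obtain $\cVal(h^{(i)},T_{n_1+(i-1)k}) = \cVal(h,T_{n_1})$ and $\aVal(h^{(i)},T_{n_2+(i-1)k}) = \aVal(h,T_{n_2})$. The strict inequality is therefore preserved, and Lemma~\ref{lem:dominance_histories} gives $T_{n_1+(i-1)k} \not\preceq T_{n_2+(i-1)k}$.

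The main obstacle is the bookkeeping around the divergence point: because the memory-update function may branch at counter-access states, after the smaller counter hits $0$ the runs of $T_{n_1}$ and $T_{n_2}$ may carry \emph{different} memory states, so $\last{h}$ need not coincide with the divergence state and $h^{(i)}$ does not correspond to a single path of $\mathcal{M} \times G$. The point that makes the argument robust is to pump only \emph{inside} the common prefix $\tilde{p}$, where the two runs are genuinely identical, while leaving the tail untouched. This guarantees that both shifted strategies re-enter the tail in the same configuration as the originals, and hence that the two relevant values at $\last{h^{(i)}}$ are literally the same numbers as at $\last{h}$; the bound $k \le |G||\mathcal{M}|$ then comes for free from the pigeonhole, and the hypotheses $n_1,n_2 > |G||\mathcal{M}|$ are exactly what ensures $\tilde p$ is long enough to contain such a loop.
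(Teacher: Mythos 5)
Your proposal is correct and follows essentially the same route as the paper's proof: extract a non-dominance witness via Lemma~\ref{lem:dominance_histories}, use $n_1,n_2 > |G||\mathcal{M}|$ to obtain a red-free common prefix of the two runs containing more than $|G||\mathcal{M}|$ green transitions, pigeonhole to find a loop with $0 < k \leq |G||\mathcal{M}|$ greens, pump it while leaving the tail of the witness untouched, and transfer compatibility and values via Lemma~\ref{lem:compatibility} and prefix-independence before reapplying Lemma~\ref{lem:dominance_histories}. Your explicit bookkeeping around the divergence point (the two runs may carry different memory states after the counter-access split, so the pumping must stay inside the genuinely common prefix) is a detail the paper's proof leaves implicit, but it is the same argument.
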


\begin{proof}
Let $n_1,n_2 > |G||\mathcal{M}|$, and let us suppose that $T_{n_1} \not \preceq T_{n_2}$.
By Lemma~\ref{lem:dominance_histories}, we know that there exists a non-dominance witness $h$ of $T_{n_1}$ by $T_{n_2}$,
i.e., a history $h$ compatible with $T_{n_1}$ and $T_{n_2}$ such that $T_{n_1}(h) \neq T_{n_2}(h)$ and $\cVal(h, T_{n_1}) > \aVal(h, T_{n_2})$.
We expose an integer $k \leq |G||\mathcal{M}|$ such that for every $i \in \mathbb{N}$, we are able to construct, based on $h$, a non-dominance
witness of $T_{n_1 + (i-1)k}$ by $T_{n_2 + (i-1)k}$.

Let $\tilde{h}_1$, respectively $\tilde{h}_2$, be the valid path of $G \times \mathcal{M}$ corresponding to the history $h$
and the initial counter values $n_1$, respectively $n_2$.
By supposition, both $n_1$ and $n_2$ are strictly greater than $|G||\mathcal{M}|$.
As a consequence, since $T_{n_1}$ and $T_{n_2}$ have an identical behaviour as long as both have not emptied their counter,
and they disagree on the history $h$, there exists a common prefix $\tilde{h}'$ of $\tilde{h}_1$ and $\tilde{h}_2$ such that $|\tilde{h}'|_{g} = |G||\mathcal{M}| + 1$
and  $|\tilde{h}'|_{r} = 0$.
Since $|\tilde{h}'|_{g} = |G||\mathcal{M}|$, there exists a state $(v,m) \in G \times \mathcal{M}$
that appears at least twice in $\tilde{h}'$ just after using a green transition.
This repetition yields a decomposition $\tilde{h}_{a}\tilde{h}_{b}\tilde{h}_{c}$ of $\tilde{h}'$
such that $\last{\tilde{h}_a} = \last{\tilde{h}_b} = (v,m)$, and $0 < |\tilde{h}_b|_{g} \leq |G||\mathcal{M}|$.
Let $k$ denote $|\tilde{h}_{b}|_{g}$.

For every $i \in \mathbb{N}$, let $h(i)$ denote the history $h_a(h_b)^{i}h_c$ and let $\tilde{h}(i)$ denote the valid path $\tilde{h}_a(\tilde{h}_b)^{i}\tilde{h}_c$.
For any valid continuation $\tilde{h}''$ of $\tilde{h}'$, the path $\tilde{h}(i)\tilde{h}''$ is valid
and $|\tilde{h}(i)\tilde{h}''|_{g} = |\tilde{h}'\tilde{h}''|_{g} + (i-1)k$.
As a consequence, by Lemma \ref{lem:compatibility}, for every history $h''$, for every $n \in \mathbb{N}$,
$h'h''$ is compatible with $T_{n}$ if and only if $h(i)h''$ is compatible with $T_{n + (i-1)k}$.
More generally, for any outcome $\rho$, the outcome $h'\rho$ is compatible with $T_{n}$
if and only if $h(i)\rho$ is compatible with $T_{n + (i-1)k}$, and since $\game$ is prefix-independent, the payoff of these outcomes are equal.
As a consequence, $\cVal(h(i)h'', T_{n + (i-1)k}) = \cVal(h, T_{n})$ and $\aVal(h(i)h'', T_{n + (i-1)k}) = \aVal(h, T_{n})$.
In particular, if we let $h''$ be the history such that $h_ah_bh_ch'' = h$, we obtain
\begin{itemize}
\item
$h(i)h''$ is compatible with $T_{n_1 + (i-1)k}$ and $T_{n_2 + (i-1)k}$;
\item
$T_{n_1 + (i-1)k}(h(i)h'') = T_{n_1}(h) \neq T_{n_2}(h) = T_{n_2 + (i-1)k}(h(i)h'')$;
\item
$\cVal(h(i)h'', T_{n_1 + (i-1)k}) = \cVal(h, T_{n_1}) > \aVal(h, T_{n_2}) = \aVal(h(i)h'', T_{n_2 + (i-1)k})$.
\end{itemize}

Therefore $h(i)h''$ is a witness of non-dominance of $T_{n_{1} + (i-1)k}$ by $T_{n_{2} + (i-1)k}$, hence $T_{n_1 + (i-1)k} \not \preceq T_{n_2 + (i-1)k}$.
\end{proof}

\begin{proof}[Proof of Proposition~\ref{prop:bound}.\ref{prop:bound_chain}
 and \ref{prop:bound}.\ref{prop:bound_inc_chain}]
Let $\game$ be a generalised safety/reachability game, and let $\mathcal{S}$ be a parametrized automaton over the game graph of $\game$.
We denote by $(S_n)_{n \in \mathbb{N}}$ the sequence of finite-memory strategies realized by $\mathcal{S}$.
Let $N_{\preceq} = |G||\mathcal{S}|$.
 and $N_{\prec} = |G||\mathcal{S}| + (|G||\mathcal{S}|)!$.

Let $U_{\mathcal{S}}$ denote the set composed of the integers $n$ satisfying $S_{n} \not \preceq S_{n+1}$.
It is clear that if $U_{\mathcal{S}}$ is not empty, then $(S_n)_{n \in \mathbb{N}}$ is not a chain.
Conversely, if $U_{\mathcal{S}}$ is empty, then $(S_n)_{n \in \mathbb{N}}$ is a chain, since for every pair of integers
$n_1 < n_2$, we have $S_{n_1} \preceq S_{n_1+1} \preceq \ldots \preceq S_{n_2}$.
Let us suppose, towards building a contradiction, that the minimal element $m$ of $U_{\mathcal{S}}$ is strictly greater than $N_{\preceq}$.
Then, we obtain from Lemma \ref{lem:iteration} that there exists an integer $k > 0$ such that $S_{m-k} \not \preceq S_{m-k+1}$ by setting $i = 0$.
This contradicts the minimality of $m$.
As a consequence, $m \leq N_{\preceq}$.
This proves that $(S_n)_{n \in \mathbb{N}}$ is a chain if and only if $S_{i} \preceq S_{i+1}$ for every $1 \leq i \leq N_{\preceq}$.

Let $V_{\mathcal{S}}$ be the set of integers $n$ satisfying $S_{n + 1} \preceq S_{n}$.
As before, we obtain that $(S_n)_{n\in\mathbb{N}}$ is an increasing chain if and only if $U_{\mathcal{S}} \cup V_{\mathcal{S}}$ is empty.
To conclude, let us suppose, towards building a contradiction, that the minimal element $m$ of $V_{\mathcal{S}}$ is strictly greater than $N_{\prec}$.
Let $m'$ denote $m-(|G||\mathcal{S}|)!$.
Then, since $|G||\mathcal{S}| < m' < m$, $S_{m' + 1} \not \preceq S_{m'}$ by minimality of $m$,
and we obtain from Lemma \ref{lem:iteration} that there exists an integer $k \leq |G||\mathcal{S}|$
such that for every $i \in \mathbb{N}$, $S_{m' + (i-1)k + 1} \not \preceq S_{m' + (i-1)k}$.
In particular, since $k$ divides $(|G||\mathcal{S}|)!$ and $m = m' + (|G||\mathcal{S}|)!$, we have $S_{m + 1} \not \preceq S_{m}$, which contradicts the fact that
$m$ is in $V_{\mathcal{S}}$.
As a consequence, $m \leq N_{\prec}$.
This proves that $(S_n)_{n \in \mathbb{N}}$ is an increasing chain if and only if $S_{i} \prec S_{i+1}$ for every $1 \leq i \leq N_{\prec}$.
\end{proof}

\begin{proof}[Proof of Proposition~\ref{prop:bound}.\ref{prop:bound_comp1}]

\begin{itemize}
\item[$\Rightarrow$] trivial.
\item[$\Leftarrow$] Assume that $M \not \preceq T_{N_{T}}$.
By Lemma~\ref{lem:dominance_histories}, we know that there exists a non-dominance witness $h_{N_{T}}$ for $M$
and $T_{N_{T}}$ such that $h_{N_{T}}$ is compatible with $M$ and $T_{N_{T}}$, $M(h_{N_{T}}) \neq T_{N_{T}}(h_{N_{T}})$ and $\cVal(h_{N_{T}}, M) > \aVal(h_{N_{T}}, T_{N_{T}})$.
We claim that from $h_{N_{T}}$ we are able to construct non-dominance witnesses for an infinite number of strategies in the chain $(T_n)_{n\in\mathbb{N}}$.

Let $\rho$ be an outcome of $\game$ such that $h_{N_{T}} \rho$ is compatible with $T_{N_{T}}$ and yields the antagonistic value $\aVal(h_{N_{T}},T_{N_{T}})$.
We consider the valid path $\tilde{h}_{N_{T}}$ in $G \times \mathcal{T}$ that follows the behaviour of $T_{N_{T}}$ over the history $h$,
the valid path $\hat{h}_{N_{T}}$ in $\mathcal{M} \times G \times \mathcal{T}$ that follows the behaviour of $M$ and $T_{N_{T}}$ over the history $h$,
and the valid path $\tilde{\rho}$ in $G \times \mathcal{T}$ that follows the behaviour of $T_{N_{T}}$ over the outcome $\rho$.
We have one of the following:
either $(a)$ $|\tilde h_{N_{T}}\tilde{\rho}|_g = 0$  or $(b)$ $|\tilde h_{N_{T}}\tilde{\rho}|_r \geq 1$.
We show now how to construct new non-dominance witnesses:
\begin{itemize}
\item[$(a)$] Assume that $|\tilde h_{N_{T}}\tilde{\rho}|_g < {N_{T}}$.
Then we know that $h_{N_{T}}$ is also a non-dominance witness for $M$ and $T_n$ for all $n \geq N'$ with $N'= | \tilde h_{N_{T}} \tilde \rho|_g$.
Indeed, by Lemma~\ref{lem:compatibility}, the outcome $h_{N_{T}}\rho$ is compatible with every strategy realized by $\mathcal{T}$
initialized with a counter value at least equal to the number of green transitions in $\tilde h_{N_{T}} \tilde\rho$, that is, $N'$.
Furthermore, for $ n \geq N'$, we have that $\aVal(h_{N_{T}}, T_n) \leq p_i(h_{N_{T}}\rho)$ by definition of the antagonistic value.
Thus, $\aVal(h_{N_{T}}, T_n) \leq \aVal(h_{N_{T}},T_{N_{T}})$, and as $\cVal(h_{N_{T}},M) > \aVal(h_{N_{T}},T_{N_{T}})$ we also have $\cVal(h_{N_{T}},M) > \aVal(h_{N_{T}},T_n)$.
Recall finally that since $T_{N_{T}}(h_{N_{T}}) = T_n(h_{N_{T}})$ and $M(h_{N_{T}}) \neq T_n(h_{N_{T}})$, we have $M(h_{N_{T}}) \neq  T_n(h_{N_{T}})$.
Hence, the history $h_{N_{T}}$ is also a non-dominance witness for $M$ and $T_n$.
There are infinitely many $n$ such that $n > N'$, thus we obtain that $M \not \sqsubseteq (T_n)_{n\in\mathbb{N}}$.

\item[$(b)$] Assume now that $|\tilde h_{N_{T}}\tilde{\rho}|_g \geq {N_{T}} = |G| |\mathcal{T}| (|\mathcal{M}|  + 1) + 1$.
Then at least one of the two following properties is verified: either $|\tilde h_{N_{T}}|_g > |\mathcal{M}| |G| |\mathcal{T}|$ and we expose a loop in $\hat h_{N_{T}}$,
or $|\tilde{\rho}|_g > |G| |\mathcal{T}|$ and we expose a loop in $\tilde{\rho}$.
In both cases we construct non-dominance witnesses by iterating the loop.
\begin{enumerate}
\item
Assume that $|\tilde h_{N_{T}}|_g > |\mathcal{M}| |G| |\mathcal{T}|$.
This means that there exists a state $(m^\mathcal{M}, v, m^\mathcal{T})$ that appears at least twice in $\hat h_{N_{T}}$ after following a green transition, hence before any red transition appears in $\hat h_{N_{T}}$.
Thus there exists a decomposition $h_{N_{T}} = h^1 h^{loop} h^2$ of $h_{N_{T}}$ such that $|\hat h^{loop}|_g \geq 1$ and $\last { \hat h^{loop}} = \last { \hat h^1} = (m^\mathcal{M}, v, m^\mathcal{T})$.
(note that possibly $h^2 = \varepsilon$).
Let $k \in \N$ and consider the history $h_k = h^1 (h^{loop})^k h^2$.
Since $h_{N_{T}}$ is compatible with $M$ and $\last { \hat h^{loop}} = \last { \hat h_1} \in \mathcal{M} \times G \times \mathcal{T}$, we know that $h_k$ is compatible with the strategy $M$ for any integer $k$.
Moreover, all continuation paths after $h_{N_{T}}$ for $M$ are also available after $h_{k}$, and conversely, hence $\cVal(h_{N_{T}},M) = \cVal(h_{k}, M)$.
Furthermore, by Lemma~\ref{lem:compatibility} we have that $h_k$ is compatible with $T_{n_k}$, where $n_k = {N_{T}} + k |\hat h^{loop}|_g$.
Thus, we have $T_{n_k}(h_k) = T_{N_{T}}(h_{N_{T}})$, and it follows that $T_{n_k}(h_k) \neq M(h_k)$.
Finally, since $h_k\rho$ is compatible with $T_{n_k}$, $\aVal(h_k, T_{n_k}) \leq p_i(h_k\rho) = p_i(h_{N_{T}}\rho) = \aVal(h_{N_{T}}, T_{N_{T}})$.
As $\cVal(h_k,M) = \cVal(h_{N_{T}},M) > \aVal(h_{N_{T}},T_{N_{T}})$, we also have $\cVal(h_k,M) > \aVal(h_k, T_{n_k})$.
That is, by Lemma~\ref{lem:dominance_histories}, $h_{k}$ is a non-dominance witness of $M$ by $T_{n_k}$.
There are infinitely many such $T_{n_k}$, thus we can conclude that $M \not \sqsubseteq (T_n)_{n\in \N}$.
\item
Finally, assume that $|\tilde{\rho}|_g > |G| |\mathcal{T}|$.
This means that there exists a state $(v, m^\mathcal{T})$ that appears at least twice in $\rho$ after following a green transition, hence before any red transition appears in $\rho$.
This yields a decomposition $\rho = \rho^1 \rho^{loop} \rho^2$ of $\rho$ such that $|\tilde \rho^{loop}|_g \geq 1$ and $\last { \tilde \rho^{loop}} = \last { \tilde \rho_1} = (v, m^\mathcal{T})$.
Let $k \in \N$ and consider the outcome $\rho_k = \rho^1 (\rho^{loop})^k \rho^2$.
by Lemma~\ref{lem:compatibility} we have that $h_{N_{T}} \rho_k$ is compatible with $T_{n_k}$, where $n_k = {N_{T}} + k |\tilde{h}^{loop}|_g$.
Thus, we have $T_{n_k}(h_{N_{T}}) = T_{N_{T}}(h_{N_{T}})$, and it follows that $T_{n_k}(h_{N_{T}}) \neq M(h_k)$.
Moreover, since $h_{N_{T}}\rho_k$ is compatible with $T_{n_k}$, $\aVal(h_{N_{T}}, T_{n_k}) \leq p_i(h_{N_{T}}\rho_k) = p_i(h_{N_{T}}\rho) = \aVal(h_{N_{T}}, T_{N_{T}})$.
As $\cVal(h_{N_{T}},M) > \aVal(h_{N_{T}},T_{N_{T}})$, we also have $\cVal(h_{N_{T}},M) > \aVal(h_{N_{T}}, T_{n_k})$.
That is, by Lemma~\ref{lem:dominance_histories}, $h_{{N_{T}}}$ is a non-dominance witness of $M$ by $T_{n_k}$.
Once again, there are infinitely many such $T_{n_k}$, thus we can conclude that $M \not \sqsubseteq (T_n)_{n\in \N}$.
\end{enumerate}
\end{itemize}
\end{itemize}
\end{proof}

\begin{proof}[Proof of Proposition~\ref{prop:bound}.\ref{prop:bound_comp2}]
Let us suppose that $(S_n)_{n \in \N} \not \sqsubseteq (T_n)_{n \in \N}$.
For every $n \in \N$, let $M_n := |G||\mathcal{T}|(|\mathcal{S}_n| + 1) + 1$,
where $\mathcal{S}_n$ denote the Mealy automaton derived from $\mathcal{S}$ that realizes the strategy $S_n$.
Then there exists $n \in \N$ satisfying $S_n \not \preceq (T_n)_{n \in \N}$,
hence, by Proposition~\ref{prop:bound}.\ref{prop:bound_comp1}, $S_n \not \preceq T_{M_n}$.
Let $k$ be the smallest integer satisfying $S_k \not \preceq T_{M_k}$.

If $k \leq N_{\mathcal{S}}$, we can conclude the proof immediatly.
Indeed, since $(S_n)_{n\in \N}$ is a chain by supposition, $S_{k} \preceq S_{N_{\mathcal{S}}}$,
and then $S_{k} \not \preceq (T_n)_{n \in \N}$ implies $S_{N_{\mathcal{S}}} \not \preceq (T_n)_{n \in \N}$.
Now, let us suppose, towards building a contradiction, that $k > N_{\mathcal{S}}$.
Note that, by 
Proposition~\ref{prop:bound}.\ref{prop:bound_comp1}, we know that $S_k \not \sqsubseteq (T_n)_{n \in \N}$.
Therefore, for all $n \in \N$, we have $S_k \not \preceq T_n$.
In particular, $S_k \not \preceq T_{2M_k}$.
By Lemma~\ref{lem:dominance_histories}, there exists a non-dominance witness $h$ of $S_k$ by $T_{2M_k}$
such that $h$ is compatible with $S_k$ and $T_{2M_k}$, $S_k(h) \neq T_{2M_k}(h)$ and $\cVal(h, S_k) > \aVal(h, T_{2M_k})$.
Let $c= \cVal(h,S_k)$ and $a = \aVal(h,T_{2M_k})$.
Consider now a continuation path $\rho$ such that the outcome $h\rho$ is compatible with $S_k$ and $p_i(h\rho)=c$,
and a continuation path $\rho'$ such that the outcome $h\rho'$ is compatible with $T_{2M_k}$ and $p_i(h\rho')=a$.

We consider the valid path $\tilde{h}$ in $G \times \mathcal{T}$ that follows the behaviour of $S_k$ over the history $h$,
the valid path $\hat{h}$ in $\mathcal{S} \times G \times \mathcal{T}$ that follows the behaviour of $S_k$ and $T_{2M_k}$ over the history $h$,
and the valid path $\tilde{\rho}$ in $G \times \mathcal{S}$ that follows the behaviour of $S_k$ over the outcome $\rho$.
We distinguish the following two cases:
either $(a)$ $|\tilde h\tilde{\rho}|_g < N_{\mathcal{S}}$  or $(b)$ $|\tilde h\tilde{\rho}|_g \geq N_{\mathcal{S}}$.
\begin{itemize}
\item[$(a)$]
Assume that $|\tilde h\tilde{\rho}|_g < N_{\mathcal{S}}$.
Let $k' := k - |\tilde{h}\tilde{\rho}|_g$.
By Lemma~\ref{lem:compatibility}, we know that the outcome $h\rho$ is compatible with $S_{k'}$.
Thus, $\cVal(h,S_{k'}) \geq c$, and since $\cVal(h,S_{k'}) > c$ we have $\cVal(h,S_{k'}) > \aVal(h,T_{2M_k})$.
Furthermore, as $h\rho$ is compatible with $S_k$ and $S_{k'}$, we know that $S_{k'}(h) = S_{k}(h) \neq T_{2M_k}(h)$.
Hence, by Lemma~\ref{lem:dominance_histories}, the history $h$ is a non-dominance witness for $S_{k'}$ and $T_{2M_k}$.
Since $k' < k$, we have that $M_{k'} < M_k < 2M_k$, thus $S_{k'} \not \preceq T_{M_{k'}}$,
which contradicts the fact that $k$ is the smallest index such that $S_k \not \preceq T_{M_k}$.
\item[$(b)$]
Assume now that $|\tilde{h}\tilde{\rho}|_g = k$.
Since $k > N_{\mathcal{S}} = |G| |\mathcal{S}|(2|\mathcal{T}| + 1)$, at least one of the two following properties is verified:
either $|\tilde{h}|_g > 2|\mathcal{S}| |G| |\mathcal{T}|$ or $|\tilde{\rho}|_g > |G| |\mathcal{S}|$.
\begin{enumerate}
\item
Suppose that $|\tilde{h}|_g > 2|\mathcal{S}| |G| |\mathcal{T}|$.
Then there exists a state $(m^\mathcal{S},v,m^\mathcal{T})$ that appears three times in $\hat{h}$ after the $\mathcal{S}$-component follows a green transition.
Thus we can decompose $h$ as follows:
$h = h^{1}h^{loop}_1h^{loop}_2h^{2}$, such that if we consider the corresponding decomposition
$\hat{h}^{1}\hat{h}_1^{loop}\hat{h}_2^{loop}\hat{h}^{2}$ of $\hat{h}$, we have
$\last{\hat{h}^{1}}=\last{\hat{h}_1^{loop}}=\last{\hat{h}_2^{loop}} = (m^\mathcal{T},v,m^\mathcal{S})$.
Furthermore, there exists at least one green edge in the $\mathcal{S}$-component of each path $\hat{h}_1^{loop}$ and $\hat{h}_2^{loop}$.
Let $k_1 = |\hat{h}_1^{loop}|^\mathcal{S}_g$ and $k_2 = |\hat{h}_2^{loop}|^\mathcal{S}_g$.

By Lemma~\ref{lem:compatibility}, since $\hat{h}$ is compatible with $T_{2M_k}$,
we have $|\hat{h}|^\mathcal{T}_g \leq 2M_k$, hence either $|\hat{h}_1^{loop}|^\mathcal{T}_g \leq M_k$ or $|\hat{h}_2^{loop}|^\mathcal{T}_g \leq M_k$.
Assume $|\hat{h}_1^{loop}|^\mathcal{T}_g \leq M_k$ (respectively $|\hat{h}_2^{loop}|^\mathcal{T}_g \leq M_k$).
Consider the path $h_1h_2^{loop}h_2$
(respectively $h_1h_1^{loop}h_2$).
It is a valid path since $\last{\hat{h}_1}= \last{\hat{h}_1^{loop}}=(m^\mathcal{S},v, m^\mathcal{T})$
(respectively $\last{\hat{h}_1^{loop}}=\last{\hat{h}_2^{loop}}= (m^\mathcal{S},v, m^\mathcal{T})$).
Since $|\hat{h}_1^{loop}|^\mathcal{T}_g \leq M_k$
(respectively $|\hat{h}_2^{loop}|^\mathcal{T}_g \leq M_k$),
we have that $| \hat{h}_1 \hat{h}_2^{loop}  \hat{h}_2|^\mathcal{T}_g \geq |\hat{h}|^\mathcal{T}_g - M_k$
(respectively $| \hat{h}_1 \hat{h}_1^{loop}  \hat{h}_2|^\mathcal{T}_g \geq |\hat{h}|^\mathcal{T}_g - M_k$).
Thus by Lemma~\ref{lem:compatibility}, $h_1h_2^{loop}h_2$ (respectively $h_1h_1^{loop}h_2$) is compatible with $T_{M_k}$.
Furthermore, the outcome $h_1h_2^{loop}h_2\rho'$ (respectively $h_1h_1^{loop}h_2\rho'$) is compatible with $T_{M_k}$ and yields payoff $a$.
Hence, we have that $\aVal(h_1h_2^{loop}h_2, T_{M_k}) \leq a$ (respectively $\aVal(h_1h_1^{loop}h_2 ,T_{M_k}) \leq a$).
Let $k' = k- k_1$ (respectively $k' = k-k_2$).
By Lemma~\ref{lem:compatibility}, we know that $h_1h_2^{loop}h_2$
(respectively $h_1h_1^{loop}h_2$) is compatible with $S_{k'}$.
Furthermore, the outcome $h_1h_2^{loop}h_2\rho$
(respectively $h_1h_1^{loop}h_2\rho$) is compatible with $S_{k'}$ and yields payoff $c$.
Hence, we have that $\cVal(h_1h_2^{loop}h_2, S_{k'}) \geq c$
(respectively $\cVal(h_1h_1^{loop}h_2 ,S_{k'}) \geq c$).
Finally, we know that $S_{k'}(h_1h_2^{loop}h_2) \neq T_{M_k}(h_1h_2^{loop}h_2)$
(respectively $S_{k'}(h_1h_1^{loop}h_2) \neq T_{M_k}(h_1h_1^{loop}h_2)$)
since $S_{k'}(h_1h_2^{loop}h_2)=S_{k'}(h)$
(respectively $S_{k'}(h_1h_1^{loop}h_2)=S_{k'}(h)$)
and $T_{M_k}(h_1h_2^{loop}h_2)=T_{2M_k}(h)$
(respectively $T_{M_k}(h_1h_1^{loop}h_2)=T_{2M_k}(h)$).
By Lemma~\ref{lem:dominance_histories}, the history $h_1h_2^{loop}h_2$
(respectively $h_1h_1^{loop}h_2$)
is a non-dominance witness for $S_{k'}$ and $T_{M_k}$.
Since $k' < k$, we have that $M_{k'} < M_k$, thus $S_{k'} \not \preceq T_{M_{k'}}$, which is a contradiction with the fact that $k$ is the smallest index such that $S_k \not \preceq T_{M_k}$.

\item
Suppose that $|\tilde{\rho}|_g > |G| |\mathcal{S}|$.
Then there exists a state $(v,m^\mathcal{S})$ that appears twice in $\tilde{\rho}$ after following a green transition.
Thus we can decompose $\rho$ as follows:
$\rho = \rho^{1}\rho^{loop}\rho^{2}$, such that if we consider the corresponding decomposition
$\tilde{\rho}^{1}\tilde{\rho}^{loop}\tilde{\rho}^{2}$ of $\tilde{\rho}$, we have
$\last{\tilde{\rho}^{1}}=\last{\tilde{\rho}^{loop}} = (v,m^\mathcal{S})$ and $|\tilde{\rho}^{loop}|_g > 0$.
Then $h \rho^{1} \rho^{2}$ is a valid outcome since $\last{\tilde{\rho}^{1}}=\last{\tilde{\rho}^{loop}}$.
Furthermore, $p_i(h \rho^{1} \rho^{2})=c$.
Let $k' = |  \tilde{h} \tilde{\rho^{1}} \tilde{\rho^{2}} |_g$.
As $|\tilde{\rho}^{loop}|_g > 0$, we have $k' < k$.
By Lemma~\ref{lem:compatibility}, we know that the outcome $h \rho^{1} \rho^{2}$ is compatible with $S_{k'}$.
Thus, $\cVal(h, S_{k'}) \geq c$.
Finally, since $h$ is a prefix of $h \rho^{1} \rho^{2}$, we know that $S_{k'}(h) = S_k(h)$ hence $S_{k'}(h) \neq T_{2M_k}(h)$.
Hence, by Lemma~\ref{lem:dominance_histories}, the history $h$ is a non-dominance witness for $S_{k'}$ and $T_{2M_k}$.
Since $k' < k$, we have that $M_{k'} < M_k < 2M_k$, thus $S_{k'} \not \preceq T_{M_{k'}}$, which is a contradiction with the fact that $k$ is the smallest index such that $S_k \not \preceq T_{M_k}$.
\end{enumerate}
\end{itemize}
\end{proof} 

Since the property $\textsf{P}_1$ can be decided in $\textsc{PTime}$ by applying Lemma~\ref{lem:comparison_two_strategies}
with adequately chosen Mealy automata as parameters, we obtain the following theorem.

\begin{theorem}\label{theo:assumption_1}
Given a generalised safety/reachability game and a parameterized automaton, we can decide in $\textsc{PTime}$ whether the automaton realizes a chain of strategies.
\end{theorem}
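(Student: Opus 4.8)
The plan is to reduce the question ``does $\mathcal{S}$ realize a chain?'', which a priori involves the infinite family of comparisons $S_n \preceq S_{n+1}$ for all $n \in \N$, to a polynomially bounded number of pairwise comparisons between finite-memory strategies, each of which is decidable in $\textsc{PTime}$ by Lemma~\ref{lem:comparison_two_strategies}.

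First I would invoke Proposition~\ref{prop:bound}.\ref{prop:bound_chain}: with $N_{\preceq} = |G||\mathcal{S}|$, the sequence $(S_n)_{n \in \N}$ realized by $\mathcal{S}$ is a chain if and only if property $(\textsf{P}_1)$ holds, i.e.~$S_i \preceq S_{i+1}$ for every $1 \leq i \leq N_{\preceq}$. This replaces the infinitely many checks by exactly $N_{\preceq}$ of them, a cutoff that is polynomial in the size of the input. All the genuinely combinatorial content---why this finite cutoff is sound, via a pumping argument on the loops appearing in non-dominance witnesses---is already discharged inside Proposition~\ref{prop:bound}.\ref{prop:bound_chain} through Lemma~\ref{lem:iteration}, so I may take it as given.

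Next I would argue that each individual comparison $S_i \preceq S_{i+1}$ falls within the scope of Lemma~\ref{lem:comparison_two_strategies}. For a fixed counter initialization $n$, the strategy $S_n$ is a genuine finite-memory strategy, realized by the finite Mealy automaton $\mathcal{S}_n$ obtained from $\mathcal{S}$ by hard-coding the counter value into the control states: since the counter starts at $n$ and is only ever decremented, the reachable counter values are $n, n-1, \ldots, 0$, so $\mathcal{S}_n$ has at most $(n+1)|\mathcal{S}|$ states. For the indices we care about, namely $i \leq N_{\preceq} = |G||\mathcal{S}|$, this gives $|\mathcal{S}_i| = O(|G||\mathcal{S}|^2)$, which is polynomial in the input. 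Hence each consecutive pair $(\mathcal{S}_i, \mathcal{S}_{i+1})$ consists of two Mealy automata of polynomial size, and Lemma~\ref{lem:comparison_two_strategies} decides $S_i \preceq S_{i+1}$ in $\textsc{PTime}$.

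Putting this together, the decision procedure builds the unrolled automata $\mathcal{S}_i$ for $1 \leq i \leq N_{\preceq}+1$, runs the $\textsc{PTime}$ test of Lemma~\ref{lem:comparison_two_strategies} on each of the $N_{\preceq}$ consecutive pairs, and answers ``yes'' exactly when all tests succeed; correctness is immediate from Proposition~\ref{prop:bound}.\ref{prop:bound_chain}. A polynomial number of $\textsc{PTime}$ subroutines run on polynomial-size inputs yields an overall $\textsc{PTime}$ algorithm. The only point requiring care is the bookkeeping of the polynomial size bound on the $\mathcal{S}_i$ together with the cutoff $N_{\preceq}$, ensuring that the product of ``number of comparisons'' and ``cost per comparison'' stays polynomial; this is the step I would write out most explicitly, since everything else is a direct appeal to the two cited results.
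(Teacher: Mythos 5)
Your proof is correct and follows essentially the same route as the paper: reduce to the finitely many checks $S_i \preceq S_{i+1}$ for $i \leq N_{\preceq}$ via Proposition~\ref{prop:bound}.\ref{prop:bound_chain}, then decide each with Lemma~\ref{lem:comparison_two_strategies} applied to the Mealy automata obtained by fixing the counter value. In fact you spell out the one detail the paper leaves implicit in the phrase ``adequately chosen Mealy automata,'' namely the polynomial size bound $O(|G||\mathcal{S}|^2)$ on the unrolled automata $\mathcal{S}_i$, which is a welcome addition.
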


Similarly, the property $\textsf{P}_3$ can be decided in $\textsc{PTime}$ by applying Lemma~\ref{lem:comparison_two_strategies}
with $\mathcal{M}$ and the Mealy automaton corresponding to the strategy $T_{N_{T}}$ as parameters.
Moreover, by Proposition~\ref{prop:bound}.\ref{prop:bound_comp1}, the problem of deciding property $\textsf{P}_4$
can be reduced in polynomial time to the problem of deciding property $\textsf{P}_3$.
Therefore Proposition~\ref{prop:bound}.\ref{prop:bound_comp2} implies our final decidability result.


\begin{theorem}\label{theo:uniform_dominance}
Given a generalised safety/reachability game and two parameterized automata realizing uniform chains of strategies,
we can decide in $\textsc{PTime}$ whether the chain realized by the first is dominated by the one from the second.
\end{theorem}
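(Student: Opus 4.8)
The plan is to assemble the three building blocks established above---Lemma~\ref{lem:comparison_two_strategies}, Proposition~\ref{prop:bound}.\ref{prop:bound_comp1} and Proposition~\ref{prop:bound}.\ref{prop:bound_comp2}---into a single chain of polynomial-time reductions, while tracking that every intermediate object stays polynomial in the size of the input. Write $\mathcal{S}$ and $\mathcal{T}$ for the two parameterized automata realizing the uniform chains $(S_n)_{n \in \N}$ and $(T_n)_{n \in \N}$. By hypothesis both automata realize chains (a fact that is itself checkable in $\textsc{PTime}$ via Theorem~\ref{theo:assumption_1}), so the side conditions of Proposition~\ref{prop:bound} are met, and the goal is to decide $(S_n)_{n \in \N} \sqsubseteq (T_n)_{n \in \N}$.

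First I would reduce the chain-versus-chain question to a strategy-versus-chain question. By Proposition~\ref{prop:bound}.\ref{prop:bound_comp2}, we have $(S_n)_{n \in \N} \not\sqsubseteq (T_n)_{n \in \N}$ if and only if property $(\textsf{P}_4)$ holds, namely $S_{N_{\mathcal{S}}} \not\preceq (T_n)_{n \in \N}$, where $N_{\mathcal{S}} = |G||\mathcal{S}|(2|\mathcal{T}|+1)$. Since $N_{\mathcal{S}}$ is polynomial in $|G|$, $|\mathcal{S}|$ and $|\mathcal{T}|$, the single strategy $S_{N_{\mathcal{S}}}$ is realized by a Mealy automaton $\mathcal{M}$ obtained from $\mathcal{S}$ by unrolling its counter from the value $N_{\mathcal{S}}$; this automaton has size at most $|\mathcal{S}|(N_{\mathcal{S}}+1)$, which is again polynomial. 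It therefore remains to decide whether $S_{N_{\mathcal{S}}} \sqsubseteq (T_n)_{n \in \N}$.

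Next I would discharge this strategy-versus-chain comparison through Proposition~\ref{prop:bound}.\ref{prop:bound_comp1}, applied with $M := S_{N_{\mathcal{S}}}$ and the Mealy automaton $\mathcal{M}$ above in the role of the presenting automaton: we have $S_{N_{\mathcal{S}}} \not\sqsubseteq (T_n)_{n \in \N}$ if and only if property $(\textsf{P}_3)$ holds, that is, $S_{N_{\mathcal{S}}} \not\preceq T_{N_{T}}$ with $N_{T} = |G||\mathcal{T}|(|\mathcal{M}|+1)+1$. As $|\mathcal{M}|$ is polynomial, so is $N_{T}$, and hence $T_{N_T}$ is realized by a Mealy automaton of polynomial size, obtained by unrolling the counter of $\mathcal{T}$ from $N_T$. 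Now $S_{N_{\mathcal{S}}}$ and $T_{N_T}$ are both finite-memory strategies presented by polynomially-sized Mealy automata, so Lemma~\ref{lem:comparison_two_strategies} decides $S_{N_{\mathcal{S}}} \preceq T_{N_T}$ in $\textsc{PTime}$; negating this outcome answers the original question.

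The only point requiring genuine care---and the step I expect to be the crux---is the bookkeeping of sizes, since the thresholds $N_{\mathcal{S}}$ and $N_T$ are defined in terms of automata that are themselves produced by counter-unrolling, so one must check that these dependencies do not compound into a super-polynomial blow-up. They do not: $N_{\mathcal{S}}$ depends only on $|G|,|\mathcal{S}|,|\mathcal{T}|$; the unrolled automaton $\mathcal{M}$ for $S_{N_{\mathcal{S}}}$ is then polynomial; $N_T$ depends polynomially on $|\mathcal{M}|$, hence polynomially on the original inputs; and the unrolled automaton for $T_{N_T}$ is in turn polynomial. Since each of the finitely many reduction steps is polynomial and Lemma~\ref{lem:comparison_two_strategies} runs in $\textsc{PTime}$, the entire procedure runs in $\textsc{PTime}$, which establishes the theorem.
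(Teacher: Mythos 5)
Your proposal is correct and takes essentially the same route as the paper: Proposition~\ref{prop:bound}.\ref{prop:bound_comp2} reduces the chain-versus-chain question to property $(\textsf{P}_4)$, Proposition~\ref{prop:bound}.\ref{prop:bound_comp1} reduces $(\textsf{P}_4)$ to $(\textsf{P}_3)$, and Lemma~\ref{lem:comparison_two_strategies} decides the resulting comparison of two Mealy-automaton strategies in $\textsc{PTime}$. Your explicit bookkeeping of the sizes of the counter-unrolled Mealy automata spells out a step the paper leaves implicit, but the argument is the same.
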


\section{Conclusions and outlook}
We have observed that admissibility is lacking as a rationality criterion for infinite sequential games with quantitative payoffs. Our primary counterexample suggests that chains of strategies could provide a suitable framework to circumvent this issue. Abstract order-theoretic considerations revealed that in the most general case, this does not work. However, if we restrict to countable collections of strategies, every chain is below a maximal chain. This restriction is very natural in a TCS setting. A more in-depth exploration of the game-theoretic merits of such a notion of rationality based on chains of strategies is left for the future.

We explored the abstract approach in the concrete setting of generalized safety/reachability games. Here, parameterized automata can give a very concrete meaning to chains of strategies. Several fundamental algorithmic questions are decidable in $\textsc{PTime}$. There are more algorithmic questions to investigate. Moreover, the generalization of our results from generalized safety/reachability games to games with $\omega$-regular objectives seems achievable - our proofs make only very limited use of the special features of the former. Both these endeavours could benefit from a better understanding of parameterized automata in general.

\bibliography{biblio}

\end{document}